\DeclareMathOperator{\diag}{diag}
\DeclareMathOperator{\Tr}{tr}
\DeclareMathOperator{\Vector}{vec}
\DeclareMathOperator{\pdf}{pdf}
\DeclareMathOperator{\Res}{Res}
\DeclareMathOperator{\Expectation}{\mathop{\mathbb{E}}}
\newtheorem{Proposition}{Proposition}
\newtheorem{Lemma}{Lemma}
\begin{document}
\title{Large Intelligent Surface Aided \\Physical Layer Security Transmission}
\author{\IEEEauthorblockN{Biqian Feng, Yongpeng Wu, Mengfan Zheng, Xiang-Gen Xia, \\Yongjian Wang, and Chengshan Xiao}
	\thanks{B. Feng, and Y. Wu are with the Department of Electronic Engineering, Shanghai Jiao Tong University, Minhang 200240, China (e-mail: fengbiqian@sjtu.edu.cn; yongpeng.wu@sjtu.edu.cn) (Corresponding author: Yongpeng Wu).}
	\thanks{M. Zheng is with the Department of Electrical and Electronic Engineering, Imperial College London, London SW7 2AZ, U.K. (email: m.zheng@imperial.ac.uk).
	}
	\thanks{X.-G. Xia is with the Department of Electrical and Computer Engineering, University of Delaware, Newark,
		DE 19716, USA. (e-mail: xxia@ee.udel.edu).}
	\thanks{Y. Wang is with The National Computer Network Emergency Response Technical Team. Coordination Center of China, Chaoyang 100029, China (e-mail: wyj@cert.org.cn) (Corresponding author: Yongjian Wang).}
	\thanks{C. Xiao is with the Department of Electrical and Computer Engineering, Lehigh University, Bethlehem, PA 18015 USA (e-mail: xiaoc@lehigh.edu).}
}
\maketitle
\begin{abstract}
	In this paper, we investigate a large intelligent surface-enhanced (LIS-enhanced) system, where a LIS is deployed to assist secure transmission. Our design aims to maximize the achievable secrecy rates in different channel models, i.e., Rician fading and (or) independent and identically distributed Gaussian fading for the legitimate and eavesdropper channels. In addition, we take into consideration an artificial noise-aided transmission structure for further improving system performance.
	The difficulties of tackling the aforementioned problems are the structure of the expected secrecy rate expressions and the non-convex phase shift constraint. To facilitate the design, we propose two frameworks, namely the sample average approximation based (SAA-based) algorithm and the hybrid stochastic projected gradient-convergent policy (hybrid SPG-CP) algorithm, to calculate the expectation terms in the secrecy rate expressions. Meanwhile, majorization minimization (MM) is adopted to address the non-convexity of the phase shift constraint.
	In addition, we give some analyses on two special scenarios by making full use of the expectation terms. Simulation results show that the proposed algorithms effectively optimize the secrecy communication rate for the considered setup, and the LIS-enhanced system greatly improves secrecy performance compared to conventional architectures without LIS.
\end{abstract}
\begin{IEEEkeywords}
     LIS-enhanced system, secure transmission, AN-aided, SAA-based algorithm, hybrid SPG-CP algorithm
\end{IEEEkeywords}
\section{Introduction}
With the popularizing of user devices, a variety of wireless technologies have been proposed to improve both spectrum efficiency (SE) and energy efficiency (EE) of wireless networks. Recently, a novel concept of large intelligent surface (LIS) has been introduced as a promising technique due to its capability of achieving high SE and EE. It is convenient to control the beamforming design at the access point (AP) and phase shifts at the LIS dynamically according to the changes in the environment. Reference \cite{Survey} summarizes four specific benefits of LIS-enhanced wireless communication systems as follows: (i) easy deployment and sustainable operations; (ii) flexible reconfiguration via passive beamforming; (iii) enhanced capacity and SE/EE performance; (iv) exploration of emerging wireless applications. Many works have emerged to generalize classical scenarios to LIS-enhanced systems, such as channel estimation \cite{Channel Estimation1}-\cite{Channel Estimation4} and unmanned aerial vehicles \cite{UAV1}-\cite{UAV2}, and to verify the effectiveness of LIS-enhanced systems compared with conventional ones \cite{Q.Wu1}-\cite{C. Yuen3}.

\textbf{Related Works:} Recently, LIS-enhanced systems have been introduced into the physical layer security design in wireless communications. Reference \cite{Secure_IRS1} first proposes the security issues in LIS-enhanced systems and solves the beamforming and phase shift problem efficiently with both block coordinate decent and majorization minimization (MM) techniques \cite{MM_Proposed} under the conditions of multi-input, single-output, single-eavesdropper (MISOSE) and perfect channel state information (CSI) of both legitimate and eavesdropper channels. Reference \cite{Secure_IRS3} develops suboptimal solutions for both beamforming and phase shifts with semidefinite relaxation (SDR) and Gaussian randomization methods with direct links existing in the environment. Reference \cite{Secure_IRS2} simplifies the phase shift scheme with MM in the network model of one legitimate receiver and one eavesdropper and applies it to the multiple-antenna eavesdropper case. The model of multiple legitimate receivers is investigated in \cite{Secure_IRS4}, where manifold optimization is applied to handle the phase shifts and successive convex approximation (SCA) method is used for beamforming and artificial noise (AN) injection. Meanwhile, the model of multiple eavesdroppers is investigated in \cite{Secure_IRS5}, where SDR method is applied for beamforming, AN and phase shifts. Reference \cite{Secure_IRS6} studies the maximization of the minimum secrecy rate among several legitimate users in the presence of multiple eavesdroppers, where the unit modulus constraints of the phase shifts are approximated by a set of convex constraints. References \cite{Secure_IRSs1}-\cite{Secure_IRSs2} investigate the system with multi-input, multi-output, multiple-antenna eavesdropper (MIMOME). In summary, references \cite{Secure_IRS1}-\cite{Secure_IRSs2} focus on a series of schemes under the assumption of perfect CSI of all channels. Reference \cite{Secure_IRS7} takes into account the model of multiple single-antenna legitimate receivers, which do not have line-of-sight (LoS) communication links, in the presence of multiple multi-antenna potential eavesdroppers whose CSI is not perfectly known. It is worth noting that reference \cite{Secure_IRS7} solves the phase shift constraint with norm-difference method instead of the aforementioned MM, SCA or SDR. Reference \cite{B.Feng} tries to take into consideration the model of rank-one AP-LIS channel and statistical LIS-Receiver/Eavesdropper channels and eliminates the effect of phase shifts for statistical CSI channel coefficients.

\textbf{Main Contributions:} In contrast to the previous literature, this paper studies a LIS-enhanced MISO system with Rician channel in the AP-LIS link and independent and identically distributed (i.i.d.) Gaussian fading channel in the LIS-eavesdropper link due to the challenge of acquiring perfect CSI of the eavesdropping channels at the AP. The main contributions of this paper can be summarized as follows.
\begin{itemize}
	\item To the best of our knowledge, this is the first work to explore the use of the LIS to enhance the physical layer secret communication rate under the condition of i.i.d. Gaussian fading channel in the LIS-eavesdropper link. We formulate four achievable secrecy rates under different assumptions and jointly optimize the AN-aided beamforming at the AP and phase shifts at the LIS.
	\item The problem is quite challenging due to the following reasons. First, the non-convex phase shift constraint makes the problem essentially an NP-hard problem. Second, computing the expected secrecy rate expressions is computationally expensive. In view of these problems, we adopt the MM algorithm to tackle the phase shift constraint. Meanwhile, we propose two algorithms, a sample average approximation based (SAA-based) algorithm and a hybrid stochastic projected gradient-convergent policy (hybrid SPG-CP) algorithm, to convert the expectation operations to the determined structure at each iteration. Besides, we generalize the SPG-based algorithm and prove that the expectation of the projected gradient approaches \begin{small}$0$\end{small} as the number of iterations approaches infinity.
	\item For the case of Rician fading channel at legitimate receiver and i.i.d. Gaussian fading channel at eavesdropper, we develop an alternating optimization method to solve the problem more efficiently based on the exact calculation of the expectation function. For the case of i.i.d. Gaussian fading channels at both legitimate receiver and single-antenna eavesdropper sides, we verify that it is unnecessary to take the AN-aided structure to suppress the single-antenna eavesdropper.
	\item Finally, simulation results validate the effectiveness of the proposed algorithms. It can be seen that the LIS dramatically improves the quality of the whole secrecy system. Furthermore, the larger number of LIS elements we use, the better performance the system achieves.
\end{itemize}

The rest of this paper is organized as follows. Section \uppercase\expandafter{\romannumeral2} introduces the secure transmission model, achievable secrecy rate and problem formulation. Section \uppercase\expandafter{\romannumeral3} develops two efficient algorithms for the stochastic optimization problem of secrecy rates. Section \uppercase\expandafter{\romannumeral4} provides analysis on some specific cases. Section \uppercase\expandafter{\romannumeral5} shows some simulation results to evaluate the performances of the proposed algorithms. Section \uppercase\expandafter{\romannumeral6} concludes the paper.

The notations used in this paper are as follows. Boldface lowercase and uppercase letters, such as \begin{small}$\mathbf{a}$\end{small} and \begin{small}$\mathbf{A}$\end{small}, are used to represent vectors and matrices, respectively. \begin{small}$\mathbf{I}_n$\end{small} denotes the \begin{small}$n$\end{small}-by-\begin{small}$n$\end{small} identity matrix. Superscripts \begin{small}$T$\end{small}, \begin{small}$*$\end{small}, and \begin{small}$H$\end{small} stand for the transpose, conjugate, and conjugate transpose, respectively. \begin{small}$\nabla_{\mathbf{X}} f$\end{small} denotes the gradient of \begin{small}$f$\end{small} with respect to \begin{small}$\mathbf{X}$\end{small} and \begin{small}$\nabla_{\mathbf{X}^*} f = \nabla_{\mathbf{X}^*} f^*=(\nabla_{\mathbf{X}} f)^*$\end{small} with a real-valued function \begin{small}$f$\end{small}. \begin{small}$\Vert \mathbf{a}\Vert_*$\end{small} denotes the \begin{small}$\ell_*$\end{small}-norm of the complex vector \begin{small}$\mathbf{a}$\end{small}. \begin{small}$\Vert \mathbf{a}\Vert$\end{small} and \begin{small}$\Vert \mathbf{A}\Vert$\end{small}, respectively, denote the \begin{small}$\ell_2$\end{small}-norm of the complex vector \begin{small}$\mathbf{a}$\end{small} and the Frobenius norm of the complex matrix \begin{small}$\mathbf{A}$\end{small}. \begin{small}$\lambda_{max}\left(\mathbf{A}\right)$\end{small} and \begin{small}$\gamma_{max}\left(\mathbf{A}\right)$\end{small}, respectively, denote the maximum eigenvalue of matrix \begin{small}$\mathbf{A}$\end{small} and its corresponding eigenvector. \begin{small}$\arg\left(\mathbf{v}\right)$\end{small} denotes the phases of complex elements in the vector \begin{small}$\mathbf{v}$\end{small}. \begin{small}$\mathcal{CN}(\mathbf{\mu},\mathbf{\Sigma})$\end{small} denotes a complex circular Gaussian distribution with mean \begin{small}$\mathbf{\mu}$\end{small} and covariance \begin{small}$\mathbf{\Sigma}$\end{small}. \begin{small}$\mathbb{R}(a)$\end{small} denotes the real part of a complex value \begin{small}$a$\end{small}. The inner product \begin{small}$\langle\bullet,\,\bullet\rangle : \mathbb{C}^{n\times n}\times\mathbb{C}^{n\times n}\rightarrow\mathbb{R}$\end{small} is defined as \begin{small}$\langle\mathbf{A},\,\mathbf{B}\rangle = \mathbb{R}\left\{\Tr\left(\mathbf{A}^H \mathbf{B}\right)\right\}$\end{small}. \begin{small}$\diag\left(\mathbf{A}\right)$\end{small} denotes a vector whose elements are extracted from the main diagonals of matrix \begin{small}$\mathbf{A}$\end{small}. \begin{small}$a\perp b$\end{small} denotes \begin{small}$a\cdot b=0$\end{small}. 
\begin{small}$\lceil a \rceil$\end{small} represents the smallest integer greater than or equal to \begin{small}$a$\end{small}.

\section{Secure Transmission Model and Problem Formulation}
\subsection{Secure Transmission Model}
As shown in Fig. \ref{Gaussian MISO wiretap channel with IRS}, we consider a MISO wiretap channel model with a LIS-enhanced link. In this system, there are an AP with \begin{small}$N_t$\end{small} antennas, a LIS with \begin{small}${N_I}$\end{small} passive and low-cost reflecting elements, a single-antenna legitimate receiver and an eavesdropper equipped with \begin{small}$N_e$\end{small} antennas. We assume that the direct signal paths between the AP and the legitimate receiver/eavesdropper are neglected due to unfavorable propagation conditions in our model.
\begin{figure}[!htb]
	\centering
	\includegraphics[scale=0.35]{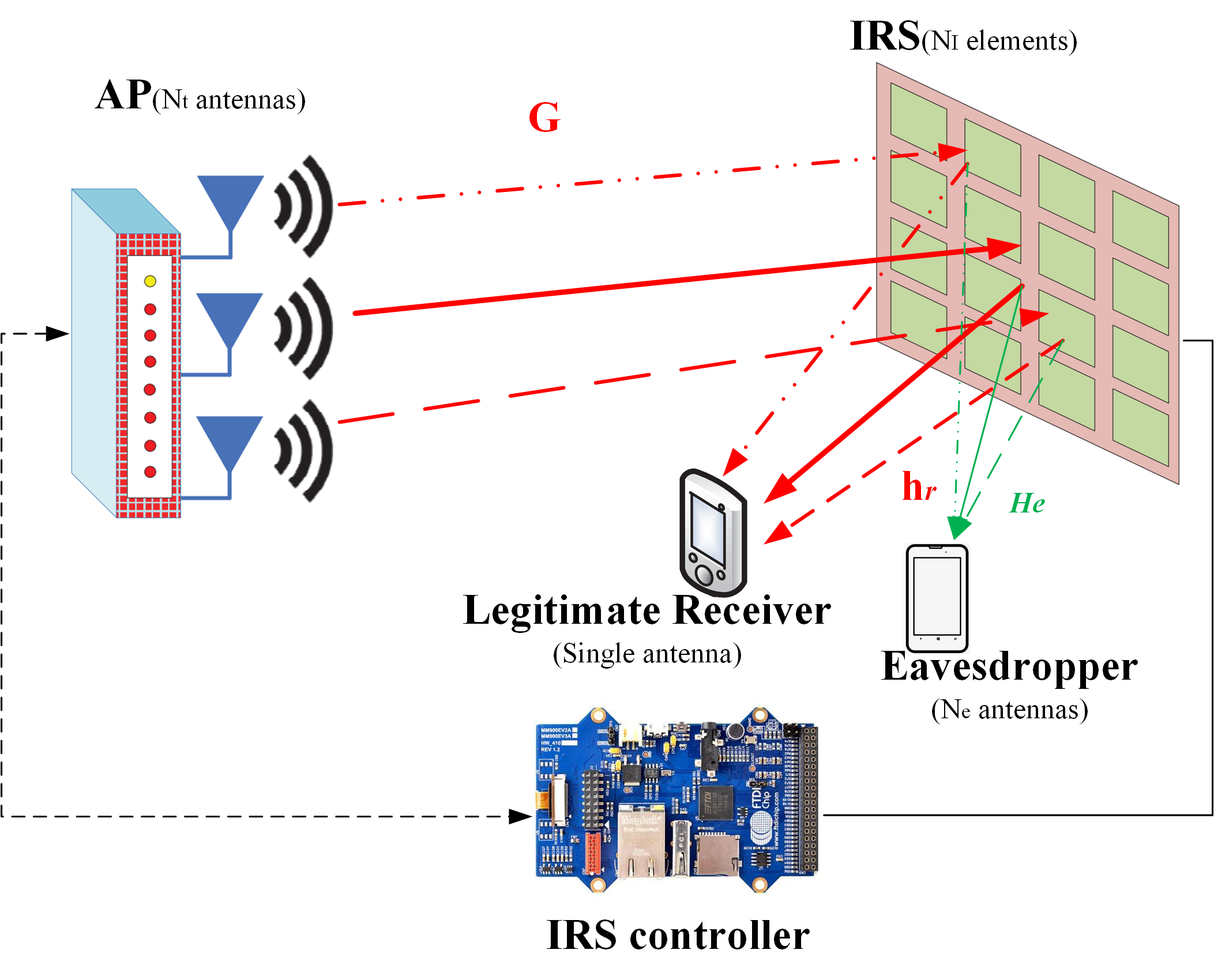}
	\caption{Gaussian MISO wiretap channel with LIS}
	\label{Gaussian MISO wiretap channel with IRS}
\end{figure}

Compared with the traditional wiretap channel model, the LIS-enhanced system introduces a LIS device, which is an intelligent control system that can dynamically adjust the phase through passive beamforming according to the changes in the environment to upgrade the communication quality. The baseband equivalent channels from AP to LIS, from LIS to the legitimate receiver and from LIS to the eavesdropper are, respectively, denoted by \begin{small}$\mathbf{G} \in \mathbb{C}^{N_t \times {N_I}}$\end{small}, \begin{small}$\mathbf{h}_r \in \mathbb{C}^{{N_I} \times 1}$\end{small} and \begin{small}$\mathbf{H}_e \in \mathbb{C}^{{N_I} \times N_e}$\end{small}.
Then, the received complex baseband signals at the legitimate receiver and the eavesdropper are, respectively, given by
\begin{equation}\small
\begin{aligned}
&y_r = \sqrt{\rho_r}\mathbf{h}_r^H \mathbf{\Theta} \mathbf{G}^H\mathbf{x} + n_r\\
&\mathbf{y}_e = \sqrt{\rho_e}\mathbf{H}_e^H \mathbf{\Theta} \mathbf{G}^H\mathbf{x} + \mathbf{n}_e
\end{aligned}
\end{equation}
where \begin{small}$\rho_r$\end{small} and \begin{small}$\rho_e$\end{small} denote the signal-noise-ratios (SNR) at the transmitter for the legitimate receiver and eavesdropper, respectively; \begin{small}$\mathbf{\Theta} \triangleq \diag\left\{e^{j\theta_1}, e^{j\theta_2},...,e^{j\theta_{N_I}}\right\}$\end{small} denotes the phase shift matrix with \begin{small}$\theta_n$\end{small} being the phase shift introduced by the \begin{small}$n$\end{small}th element of LIS, and \begin{small}$n_r$\end{small}, \begin{small}$\mathbf{n}_e$\end{small} are additive white Gaussian noise with variance one. We apply the linear channel prefixing and Gaussian signaling as
\begin{equation}\small
\begin{aligned}
\mathbf{x} = \mathbf{s} + \mathbf{z}
\end{aligned}
\end{equation}
where \begin{small}$\mathbf{s} \sim \mathcal{CN}(\mathbf{0}, \mathbf{\Sigma}_s)$\end{small} and \begin{small}$\mathbf{z} \sim \mathcal{CN}(\mathbf{0}, \mathbf{\Sigma}_z)$\end{small} are independent vectors to convey the message and AN, respectively. Hence, the case of \begin{small}$\mathbf{x} = \mathbf{s}$\end{small} and \begin{small}$\mathbf{z}=\mathbf{0}$\end{small} implies the non-AN aided system.

\subsection{Transmission Schemes and Secrecy Rates}
In this paper, we will discuss transmission schemes from the following three aspects:
\begin{itemize}
\item \textbf{Only i.i.d. Gaussian fading channel of the multiple-antenna eavesdropper.}
\item \textbf{Rician channel/i.i.d. Gaussian fading channel of the legitimate receiver.}
\item \textbf{AN\footnote{We consider the generalized AN scheme, in which one may inject AN to the direction of the message \cite{GAN}.}-aided/non-AN transmission structures.}
\end{itemize}

According to the above three assumptions, we derive the achievable secrecy rates in the following four cases \cite{statistical_channel}\cite{rate1}:
\begin{itemize}
	\item non-AN, the Rician fading channel at receiver and i.i.d. Gaussian fading channel at eavesdropper known to AP and LIS:
	\begin{equation}\small
	\label{rate1}
	\begin{aligned}
	&C_1(\mathbf{\Sigma}_s, \mathbf{\Theta})=\log(1+\rho_r\mathbf{h}_r^H \mathbf{\Theta} \mathbf{G}^H\mathbf{\Sigma}_s \mathbf{G} \mathbf{\Theta}^H \mathbf{h}_r)\\
	&\quad-\Expectation_{\mathbf{H}_e} \left \{\log[\det ( \mathbf{I}+\rho_e\mathbf{H}_e^H \mathbf{\Theta} \mathbf{G}^H\mathbf{\Sigma}_s \mathbf{G} \mathbf{\Theta}^H \mathbf{H}_e)]\right\}.
	\end{aligned}
	\end{equation}

	\item non-AN, both i.i.d. Gaussian fading channels at receiver and eavesdropper known to AP and LIS:
	\begin{equation}\small
	\label{rate2}
	\begin{aligned}
	&C_2(\mathbf{\Sigma}_s, \mathbf{\Theta}) = \Expectation_{\mathbf{h}_r} \left[\log\left(1+\rho_r\mathbf{h}_r^H \mathbf{\Theta} \mathbf{G}^H\mathbf{\Sigma}_s \mathbf{G} \mathbf{\Theta}^H \mathbf{h}_r\right)\right]\\
	&\quad-\Expectation_{\mathbf{H}_e} \left \{\log[\det ( \mathbf{I}+\rho_e\mathbf{H}_e^H \mathbf{\Theta} \mathbf{G}^H\mathbf{\Sigma}_s \mathbf{G} \mathbf{\Theta}^H \mathbf{H}_e)]\right\}.
	\end{aligned}
	\end{equation}
	
	\item AN-aided, the Rician fading channel at receiver and i.i.d. Gaussian fading channel at eavesdropper known to AP and LIS:
	\begin{equation}\small
	\label{rate3}
	\begin{aligned}
	&C_3(\mathbf{\Sigma}_s, \mathbf{\Sigma}_z, \mathbf{\Theta})=\\
	&\log\left(1+\frac{\rho_r\mathbf{h}_r^H \mathbf{\Theta} \mathbf{G}^H\mathbf{\Sigma}_s \mathbf{G} \mathbf{\Theta}^H \mathbf{h}_r}{1+\rho_r\mathbf{h}_r^H \mathbf{\Theta} \mathbf{G}^H\mathbf{\Sigma}_z \mathbf{G} \mathbf{\Theta}^H \mathbf{h}_r}\right)-\\
	&\Expectation_{\mathbf{H}_e} \left\{\log\left[\det \left( \mathbf{I}+\rho_e\mathbf{H}_e^H \mathbf{\Theta} \mathbf{G}^H\left(\mathbf{\Sigma}_s+\mathbf{\Sigma}_z\right) \mathbf{G} \mathbf{\Theta}^H \mathbf{H}_e\right)\right]\right\}\\
	&+\Expectation_{\mathbf{H}_e} \left\{\log\left[\det \left( \mathbf{I}+\rho_e\mathbf{H}_e^H \mathbf{\Theta} \mathbf{G}^H\mathbf{\Sigma}_z \mathbf{G} \mathbf{\Theta}^H \mathbf{H}_e\right)\right]\right\}.
	\end{aligned}
	\end{equation}
	
	\item AN-aided, both i.i.d. Gaussian fading channels at receiver and eavesdropper known to AP and LIS:
	\begin{equation}\small
	\label{rate4}
	\begin{aligned}
	&C_4(\mathbf{\Sigma}_s, \mathbf{\Sigma}_z, \mathbf{\Theta})=\\
	&\Expectation_{\mathbf{h}_r} \left[\log\left(1+\frac{\rho_r\mathbf{h}_r^H \mathbf{\Theta} \mathbf{G}^H\mathbf{\Sigma}_s \mathbf{G} \mathbf{\Theta}^H \mathbf{h}_r}{1+\rho_r\mathbf{h}_r^H \mathbf{\Theta} \mathbf{G}^H\mathbf{\Sigma}_z \mathbf{G} \mathbf{\Theta}^H \mathbf{h}_r}\right)\right]-\\
	&\Expectation_{\mathbf{H}_e} \left\{\log\left[\det \left( \mathbf{I}+\rho_e\mathbf{H}_e^H \mathbf{\Theta} \mathbf{G}^H\left(\mathbf{\Sigma}_s+\mathbf{\Sigma}_z\right) \mathbf{G} \mathbf{\Theta}^H \mathbf{H}_e\right)\right]\right\}\\ &+\Expectation_{\mathbf{H}_e} \left\{\log\left[\det \left( \mathbf{I}+\rho_e\mathbf{H}_e^H \mathbf{\Theta} \mathbf{G}^H\mathbf{\Sigma}_z \mathbf{G} \mathbf{\Theta}^H \mathbf{H}_e\right)\right]\right\}.
	\end{aligned}
	\end{equation}
\end{itemize}

\subsection{Problem Formulation}
In this paper, our goal is to design an effective scheme to maximize the achievable secrecy rates by adjusting the beamforming, AN spatial covariance and phase shifts. The corresponding optimization problems in non-AN and AN-cases are, respectively, formulated as:
	\begin{subequations}\small
		\begin{align}
		(P1)\,\,&\mathop{\max}_{\mathbf{\Sigma}_s,\mathbf{\Theta}}&& C_i(\mathbf{\Sigma}_s, \mathbf{\Theta}) (i=1,2)\\
		&\mathop{\text{s.t.}}&& \Tr\left(\mathbf{\mathbf{\Sigma}_s}\right)\leq 1\\
		&&&\mathbf{\Sigma}_s \succeq 0\\
		&&&\theta_n\in [-\pi, \pi),n=1,\ldots,{N_I}.
		\end{align}
	\end{subequations}
	\begin{subequations}\small
		\begin{align}
		(P2)\,\,&\mathop{\max}_{\mathbf{\Sigma}_{s},\mathbf{\Sigma}_{z},\mathbf{\Theta}}&& C_i(\mathbf{\Sigma}_s, \mathbf{\Sigma}_z, \mathbf{\Theta}) (i=3,4)\\
		&\mathop{\text{s.t.}}&& \Tr\left(\mathbf{\mathbf{\Sigma}_s + \mathbf{\Sigma}_z }\right)\leq 1\label{constraint1}\\
		&&&\mathbf{\Sigma}_s \succeq 0,\,\mathbf{\Sigma}_z \succeq 0\label{constraint2}\\
		&&&\theta_n\in [-\pi, \pi),n=1,\ldots,{N_I}.\label{constraint3}
		\end{align}
	\end{subequations}

The problems \begin{small}$(P1)$\end{small} and \begin{small}$(P2)$\end{small} are non-convex due to the non-concave objective functions with respect to \begin{small}$\boldsymbol\Sigma_s, \boldsymbol\Sigma_z$\end{small} and \begin{small}$\theta_n$\end{small} and the expectations in the objective functions. Unfortunately, there is no standard method for solving them. 

Prior to solving the problems \begin{small}$(P1)$\end{small} and \begin{small}$(P2)$\end{small}, we present the feasibility. For notational simplicity, we define three sets \begin{small}$\mathcal{X}_1\triangleq\left\{\mathbf{\Sigma}_s :\Tr\left(\mathbf{\mathbf{\Sigma}_s}\right)\leq 1, \mathbf{\Sigma}_s \succeq 0\right\}$\end{small}, \begin{small}$\mathcal{X}_2\triangleq\{(\mathbf{\Sigma}_s, \mathbf{\Sigma}_z) :\Tr\left(\mathbf{\mathbf{\Sigma}_s}+\mathbf{\mathbf{\Sigma}_z}\right)\leq 1, \mathbf{\Sigma}_s \succeq 0, \mathbf{\Sigma}_z \succeq 0\}$\end{small}, and \begin{small}$\mathcal{Y}\triangleq\left\{\mathbf{\Theta}:\theta_n\in [-\pi, \pi),n=1,\ldots,{N_I}\right\}$\end{small}. There exists a feasible point that \begin{small}$\mathbf\Sigma_{s}=\frac{1}{N_t}\mathbf I$\end{small} and \begin{small}$\theta_n=0,n=1,\ldots,{N_I}$\end{small} satisfies the constraint \begin{small}$\mathcal{X}_1\cap\mathcal{Y}$\end{small}, hence the problem \begin{small}$(P1)$\end{small} is feasible. Similarly, the point that \begin{small}$\mathbf\Sigma_{s}=\frac{1}{N_t}\mathbf I, \mathbf\Sigma_{z}=\mathbf 0$\end{small} and \begin{small}$\theta_n=0,n=1,\ldots,{N_I}$\end{small} satisfies the constraint \begin{small}$\mathcal{X}_2\cap\mathcal{Y}$\end{small}, which verifies the feasible of the problem \begin{small}$(P2)$\end{small} \cite[\S 4.1.1]{Convex Optimization}. In the sequel, we will develop some iterative algorithms to solve \begin{small}$(P1)$\end{small} and \begin{small}$(P2)$\end{small} efficiently.

\section{Two Frameworks for Secrecy Rates Maximization}
Overall, both of our proposed frameworks decompose the original problem into two kinds of problems, namely optimizing phase shifts with fixed AN-aided beamforming and optimizing AN-aided beamforming with fixed phase shifts. In this section, we first study the effect of phase shifts on legitimate receiver and eavesdropper with the statistics of  channel coefficients. Then, we give the iterative structure of our frameworks and study their performance. Note that we mainly concentrate on the realization of \begin{small}$C_3(\mathbf{\Sigma}_s, \mathbf{\Sigma}_z, \mathbf{\Theta})$\end{small} in the following algorithms in this section, since \begin{small}$C_3(\mathbf{\Sigma}_s, \mathbf{\Sigma}_z, \mathbf{\Theta})$\end{small} contains all variables and \begin{small}$C_1(\mathbf{\Sigma}_s, \mathbf{\Theta}), C_2(\mathbf{\Sigma}_s, \mathbf{\Theta}), C_4(\mathbf{\Sigma}_s, \mathbf{\Sigma}_z, \mathbf{\Theta})$\end{small} have similar optimization structures with \begin{small}$C_3(\mathbf{\Sigma}_s, \mathbf{\Sigma}_z, \mathbf{\Theta})$\end{small} in our frameworks.
\subsection{Phase Shifts Optimization}
\begin{Proposition}
\label{theta effect}
If the channels between the transmitter and legitimate and eavesdropper are i.i.d. Gaussian fading, phase shifts have no contribution to the expectation terms.
\end{Proposition}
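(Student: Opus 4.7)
The plan is to exploit the rotational invariance of the i.i.d.\ complex Gaussian distribution under multiplication by a unitary matrix. First I would note that, by construction, \begin{small}$\mathbf{\Theta}=\diag\{e^{j\theta_1},\ldots,e^{j\theta_{N_I}}\}$\end{small} is a diagonal unitary matrix, i.e.\ \begin{small}$\mathbf{\Theta}\mathbf{\Theta}^H=\mathbf{\Theta}^H\mathbf{\Theta}=\mathbf{I}$\end{small}. Hence for any deterministic \begin{small}$\mathbf{\Theta}\in\mathcal Y$\end{small}, the map \begin{small}$\mathbf{h}\mapsto\mathbf{\Theta}^H\mathbf{h}$\end{small} is a unitary transformation of \begin{small}$\mathbb{C}^{N_I}$\end{small}.

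Next I would invoke the standard fact that a zero-mean i.i.d.\ complex Gaussian vector has a circularly symmetric distribution \begin{small}$\mathcal{CN}(\mathbf{0},\sigma^2\mathbf{I})$\end{small} which is invariant under unitary multiplication: if \begin{small}$\mathbf{h}_r\sim\mathcal{CN}(\mathbf{0},\sigma^2\mathbf{I})$\end{small}, then \begin{small}$\tilde{\mathbf{h}}_r\triangleq\mathbf{\Theta}^H\mathbf{h}_r\sim\mathcal{CN}(\mathbf{0},\mathbf{\Theta}^H\sigma^2\mathbf{I}\,\mathbf{\Theta})=\mathcal{CN}(\mathbf{0},\sigma^2\mathbf{I})$\end{small}, so \begin{small}$\tilde{\mathbf{h}}_r$\end{small} and \begin{small}$\mathbf{h}_r$\end{small} are identically distributed. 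Substituting this equality in distribution into the expectation appearing in \eqref{rate2} (or \eqref{rate4}) gives
\begin{equation}\small
\Expectation_{\mathbf{h}_r}\!\left[\log\!\left(1+\rho_r\mathbf{h}_r^H\mathbf{\Theta}\mathbf{G}^H\mathbf{\Sigma}_s\mathbf{G}\mathbf{\Theta}^H\mathbf{h}_r\right)\right]=\Expectation_{\tilde{\mathbf{h}}_r}\!\left[\log\!\left(1+\rho_r\tilde{\mathbf{h}}_r^H\mathbf{G}^H\mathbf{\Sigma}_s\mathbf{G}\tilde{\mathbf{h}}_r\right)\right],
\end{equation}
and the right-hand side no longer contains \begin{small}$\mathbf{\Theta}$\end{small}.

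For the eavesdropper term I would apply the same argument column by column: since the columns of \begin{small}$\mathbf{H}_e$\end{small} are i.i.d.\ \begin{small}$\mathcal{CN}(\mathbf{0},\sigma_e^2\mathbf{I})$\end{small}, the matrix \begin{small}$\tilde{\mathbf{H}}_e\triangleq\mathbf{\Theta}^H\mathbf{H}_e$\end{small} has the same distribution as \begin{small}$\mathbf{H}_e$\end{small}, so
\begin{equation}\small
\Expectation_{\mathbf{H}_e}\!\left\{\log\det\!\left(\mathbf{I}+\rho_e\mathbf{H}_e^H\mathbf{\Theta}\mathbf{G}^H\mathbf{A}\mathbf{G}\mathbf{\Theta}^H\mathbf{H}_e\right)\right\}=\Expectation_{\tilde{\mathbf{H}}_e}\!\left\{\log\det\!\left(\mathbf{I}+\rho_e\tilde{\mathbf{H}}_e^H\mathbf{G}^H\mathbf{A}\mathbf{G}\tilde{\mathbf{H}}_e\right)\right\}
\end{equation}
for every Hermitian \begin{small}$\mathbf{A}\in\{\mathbf{\Sigma}_s,\mathbf{\Sigma}_z,\mathbf{\Sigma}_s+\mathbf{\Sigma}_z\}$\end{small}, and again \begin{small}$\mathbf{\Theta}$\end{small} disappears. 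This proves the claim for every expectation term appearing in \eqref{rate1}--\eqref{rate4} whose underlying channel is i.i.d.\ Gaussian.

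There is essentially no obstacle once the rotational invariance is recognized; the only thing to be careful about is that the argument uses two properties in tandem, namely (i) that \begin{small}$\mathbf{\Theta}$\end{small} is unitary (which is guaranteed by the unit-modulus structure of the phase shift matrix) and (ii) that the covariance of the random channel is a scalar multiple of the identity (which is the defining feature of the i.i.d.\ Gaussian assumption). The Rician term \begin{small}$\log(1+\rho_r\mathbf{h}_r^H\mathbf{\Theta}\mathbf{G}^H\mathbf{\Sigma}_s\mathbf{G}\mathbf{\Theta}^H\mathbf{h}_r)$\end{small} in \eqref{rate1} and \eqref{rate3} is outside the scope of the proposition because the deterministic LoS component breaks this rotational invariance.
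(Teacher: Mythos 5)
Your proof is correct and follows essentially the same route as the paper: both rely on the unitary invariance of the circularly symmetric i.i.d.\ Gaussian channel, so that \begin{small}$\mathbf{\Theta}^H\mathbf{h}_r$\end{small} (resp.\ \begin{small}$\mathbf{\Theta}^H\mathbf{H}_e$\end{small}) is equal in distribution to \begin{small}$\mathbf{h}_r$\end{small} (resp.\ \begin{small}$\mathbf{H}_e$\end{small}), which removes \begin{small}$\mathbf{\Theta}$\end{small} from every expectation term. The only cosmetic difference is that the paper handles the matrix channel via \begin{small}$\Vector(\mathbf{\Theta}\mathbf{H}_e)=(\mathbf{I}_{N_e}\otimes\mathbf{\Theta})\Vector(\mathbf{H}_e)$\end{small} while you argue column by column; these are equivalent.
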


\begin{proof}
	Since each element in \begin{small}$\mathbf{h}_r$\end{small} and \begin{small}$\mathbf{H}_e$\end{small} is distributed as \begin{small}$\mathcal{CN}(0,1)$\end{small}, the distributions of \begin{small}$\mathbf{h}_r$\end{small} and \begin{small}$\mathbf{H}_e$\end{small} could be expressed as \begin{small}$\mathbf{h}_r\sim \mathcal{CN}\left(\mathbf 0, \mathbf I_{N_I}\right)$\end{small} and \begin{small}$\Vector\left(\mathbf{H}_e\right)\sim \mathcal{CN}\left(\mathbf 0, \mathbf I_{N_I N_e}\right)$\end{small}, respectively. For any unitary matrices \begin{small}$\mathbf{\Theta}$\end{small} and \begin{small}$\mathbf I_{N_e}\otimes\mathbf{\Theta}$\end{small}, we have
	\begin{equation}\small
	\begin{aligned}
	&\mathbf{\Theta}\mathbf{h}_r\sim \mathcal{CN}\left(\mathbf 0, \mathbf I_{N_I}\right)\\
	&\Vector\left(\mathbf{\Theta}\mathbf{H}_e\right)=\left(\mathbf I_{N_e}\otimes\mathbf{\Theta}\right)\Vector\left(\mathbf{H}_e\right)\sim \mathcal{CN}\left(\mathbf 0, \mathbf I_{N_I N_e}\right).
	\end{aligned}
	\end{equation}
	Hence, \begin{small}$\mathbf{\Theta}\mathbf{h}_r$\end{small} and \begin{small}$\mathbf{\Theta}\mathbf{H}_e$\end{small} have the same distributions as \begin{small}$\mathbf{h}_r$\end{small} and \begin{small}$\mathbf{H}_e$\end{small}, respectively \cite[A.26]{FOWC}. The same distribution implies that phase shifts have no contribution to the expectation terms. Then we have
	\begin{equation}\small
	\begin{aligned}
	&\Expectation_{\mathbf{h}_r} \left[\log\left(1+\rho_r\mathbf{h}_r^H \mathbf{\Theta} \mathbf{G}^H\mathbf{\Sigma}_s \mathbf{G} \mathbf{\Theta}^H \mathbf{h}_r\right)\right]\\
	&=\Expectation_{\mathbf{h}_r} \left[\log\left(1+\rho_r\mathbf{h}_r^H \mathbf{G}^H\mathbf{\Sigma}_s \mathbf{G} \mathbf{h}_r\right)\right]\\
	&\Expectation_{\mathbf{H}_e} \left \{\log[\det ( \mathbf{I}+\rho_e\mathbf{H}_e^H \mathbf{\Theta} \mathbf{G}^H\mathbf{\Sigma}_s \mathbf{G} \mathbf{\Theta}^H \mathbf{H}_e)]\right\}\\
	& = \Expectation_{\mathbf{H}_e} \left \{\log[\det ( \mathbf{I}+\rho_e\mathbf{H}_e^H \mathbf{G}^H\mathbf{\Sigma}_s \mathbf{G} \mathbf{H}_e)]\right\}.
	\end{aligned}
	\end{equation}
\end{proof}

For given \begin{small}$\mathbf\Sigma_s$\end{small} and \begin{small}$\mathbf\Sigma_z$\end{small}, we denote \begin{small}$\mathbf{Y}_1 = \rho_r\diag(\mathbf{h}_r^H) \mathbf{G}^H\mathbf{\Sigma}_s \\\mathbf{G} \diag(\mathbf{h}_r)$\end{small}, \begin{small}$\mathbf{Y}_2 = \frac{1}{{N_I}}\mathbf{I}+\rho_r\diag\left(\mathbf{h}_r\right) \mathbf{G}^H\mathbf{\Sigma}_z \mathbf{G} \diag\left(\mathbf{h}_r\right)$\end{small} and \begin{small}$\mathbf{v} \triangleq \diag(\mathbf{\Theta}^H)$\end{small}. Then we have the following problem:
\begin{equation*}\small
\label{SolveTheta1}
\begin{aligned}
(P3)\,\, \underset{\mathbf{v}}{\max}\quad \frac{\mathbf{v}^H\mathbf{Y}_1\mathbf{v}}{\mathbf{v}^H\mathbf{Y}_2\mathbf{v}}\quad\quad\quad\quad\mathop{\text{s.t.}}\quad\text{(\ref{constraint3})}
\end{aligned}
\end{equation*}
This problem belongs to fractional programming, and we consider the corresponding parametric program: 
\begin{equation*}\small
\label{(P3.1)}
(P3.1)\,\,\underset{\mathbf{v}}{\min}\quad\mathbf{v}^H\left(\mathbf{Y}_2-\mu\mathbf{Y}_1\right)\mathbf{v}\quad\quad\quad\quad\mathop{\text{s.t.}}\quad\text{(\ref{constraint3})}
\end{equation*}
where \begin{small}$\mu>0$\end{small} is an introduced parameter. Many works have emerged to solve the optimization problem efficiently with unit modulus constraint such as MM \cite{Secure_IRS1}-\cite{Secure_IRS2}, manifold optimization \cite{Secure_IRS4}, SDR \cite{Secure_IRS3}\cite{Secure_IRS5} and DRL \cite{C. Yuen3}. Here, we adopt MM algorithm to solve \begin{small}$(P3.1)$\end{small} due to its closed-form solution at each iteration. At iteration \begin{small}$n$\end{small}, for any feasible point \begin{small}$\mathbf{v}^n$\end{small}, \begin{small}$\mathbf{v}^H\left(\mathbf{Y}_2-\mu\mathbf{Y}_1\right)\mathbf{v}$\end{small} can be upper bounded by \begin{small}$\lambda_{\text{max}}\left(\mathbf{\Phi}\right)\Vert \mathbf{v} \Vert^2-2 \mathbb{R}\left(\mathbf{v}^H\boldsymbol{\beta}\right)+c$\end{small}, where \begin{small}$\mathbf{\Phi}=\mathbf{Y}_2-\mu\mathbf{Y}_1, \boldsymbol{\beta} = \left(\mathbf{\Phi}-\lambda_{\text{max}}\left(\mathbf{\Phi}\right)\mathbf{I}\right)\mathbf{v}^{n}$\end{small}, and \begin{small}$c = (\mathbf{v}^{n})^H\left(\lambda_{\text{max}}\left(\mathbf{\Phi}\right)\mathbf{I}_{{N_I}}-\mathbf{\Phi}\right)\mathbf{v}^{n}$\end{small}. Then we have the following problem:
\begin{equation*}\small
\label{(P3.2)}
\begin{aligned}
(P3.2)\,\,\underset{\mathbf{v}}{\min}&\quad \lambda_{\text{max}}\left(\mathbf{\Phi}\right)\Vert \mathbf{v} \Vert^2-2 \mathbb{R}\left(\mathbf{v}^H\boldsymbol{\beta}\right)+c
\end{aligned}
\end{equation*}
Moreover, \begin{small}$\mathbb{R}\left(\mathbf{v}^H\boldsymbol{\beta}\right)$\end{small} is maximized when the phases of \begin{small}$v_i$\end{small} and \begin{small}$\beta_i$\end{small} are equal, where \begin{small}$v_i$\end{small} and \begin{small}$\beta_i$\end{small} are the \begin{small}$i$\end{small}th entry of \begin{small}$\mathbf{v}$\end{small} and \begin{small}$\boldsymbol{\beta}$\end{small}, respectively. Therefore, the optimal solution at iteration \begin{small}$n$\end{small} is
\begin{equation}\small
\label{update_theta1}
\mathbf{v}^{n+1}=\left[e^{j\arg(\beta_1)},\ldots,e^{j\arg(\beta_{{N_I}})}\right]^T .
\end{equation}
More details are provided in Algorithm 1.

\begin{algorithm}
	\caption{MM Algorithm for Phase Shifts Optimization}
	{\bf Input:} \begin{small}$\mathbf{\Sigma}_s$\end{small}, \begin{small}$\mathbf{\Sigma}_z$\end{small} and the initial point \begin{small}$\mathbf{v}^0$\end{small}.
	\begin{algorithmic}[1]
		\State Compute \begin{small}$\mu_{\text{max}}=\lambda_{\text{max}}(\mathbf{Y_1})/\lambda_{\text{min}}(\mathbf{Y_2})$\end{small} and \begin{small}$\mu_{\text{min}}=\lambda_{\text{min}}(\mathbf{Y_1})/\lambda_{\text{min}}(\mathbf{Y_2})$\end{small};
		\Repeat
			\State Set \begin{small}$\mu=(\mu_{\text{max}}+\mu_{\text{min}})/2$\end{small}, and \begin{small}$n=0$\end{small};
			\Repeat
			\State Update \begin{small}$\boldsymbol{\beta} = \left(\mathbf{\Phi}-\lambda_{\text{max}}\left(\mathbf{\Phi}\right)\mathbf{I}\right)\mathbf{v}^{n}$\end{small} where \begin{small}$\mathbf{\Phi}=\mathbf{Y}_2-\mu\mathbf{Y}_1$\end{small};
			\State Optimize \begin{small}$\mathbf{v}^{n+1}$\end{small} according to the closed-form (\ref{update_theta1});
			\State \begin{small}$n\leftarrow n+1$\end{small};
			\Until{Convergence}
			\If{\begin{small}$(\mathbf{v}^{n})^H(\mathbf{Y}_1-\mu\mathbf{Y}_2)\mathbf{v}^{n}> 0$\end{small}} { set \begin{small}$\mu_{\text{min}}=\mu$\end{small}}\ElsIf{\begin{small}$(\mathbf{v}^{n})^H(\mathbf{Y}_1-\mu\mathbf{Y}_2)\mathbf{v}^{n}< 0$\end{small}} { set \begin{small}$\mu_{\text{max}}=\mu$\end{small}}
			\EndIf
		\Until{\begin{small}$(\mathbf{v}^{n})^H(\mathbf{Y}_1-\mu\mathbf{Y}_2)\mathbf{v}^{n} = 0$\end{small}}
	\end{algorithmic}
	{\bf Output:} \begin{small}$\mathbf{v}^*=\mathbf{v}^n$\end{small}.
\end{algorithm}

\subsection{SAA-based Algorithm}
The basic idea of the SAA-based algorithm is to generate some independent random samples and approximate the expectation function by the corresponding sample average function. Then, the result of the sample average optimization problem is considered as an approximate solution to the original problem \cite{SAA}. We define \begin{small}$C_3(\mathbf{\Sigma}_s, \mathbf{\Sigma}_z, \mathbf{\Theta}, \mathbf{H}_e)\triangleq\log\left(1+\frac{\rho_r\mathbf{h}_r^H \mathbf{\Theta} \mathbf{G}^H\mathbf{\Sigma}_s \mathbf{G} \mathbf{\Theta}^H \mathbf{h}_r}{1+\rho_r\mathbf{h}_r^H \mathbf{\Theta} \mathbf{G}^H\mathbf{\Sigma}_z \mathbf{G} \mathbf{\Theta}^H \mathbf{h}_r}\right)-\log\left[\det \left( \mathbf{I}+\rho_e\mathbf{H}_e^H \mathbf{\Theta} \mathbf{G}^H\left(\mathbf{\Sigma}_s+\mathbf{\Sigma}_z\right) \mathbf{G} \mathbf{\Theta}^H \mathbf{H}_e\right)\right]+\log\left[\det \left( \mathbf{I}+\rho_e\mathbf{H}_e^H \mathbf{\Theta} \mathbf{G}^H\mathbf{\Sigma}_z \mathbf{G} \mathbf{\Theta}^H \mathbf{H}_e\right)\right]$
\end{small}, the expectation of which is \begin{small}$\Expectation_{\mathbf{H}_e} \left[C_3(\mathbf{\Sigma}_s, \mathbf{\Sigma}_z, \mathbf{\Theta}, \mathbf{H}_e)\right]=C_3(\mathbf{\Sigma}_s, \mathbf{\Sigma}_z, \mathbf{\Theta})$\end{small} in (\ref{rate3}). Lemma \ref{chebyshev} shows that the gap between the approximate function and the original function becomes smaller as the sample size becomes larger due to the assumption of the i.i.d. channel \begin{small}$\mathbf{H}_e$\end{small}.
	\begin{Lemma}
		\label{chebyshev}
		Suppose that the random matrices \begin{small}$\mathbf{H}_e^1, \mathbf{H}_e^2,...,\mathbf{H}_e^K$\end{small} are independent each other and have the same distribution as \begin{small}$\mathbf{H}_e$\end{small} in (\ref{rate3}), and each \begin{small}$C_3(\mathbf{\Sigma}_s, \mathbf{\Sigma}_z, \mathbf{\Theta}, \mathbf{H}_e^i)$\end{small} is a random variable with mean \begin{small}$s$\end{small} and variance \begin{small}$\delta^2$\end{small}. Let \begin{small}$\overline{C}_3\left(\mathbf{\Sigma}_s, \mathbf{\Sigma}_z, \mathbf{\Theta}, \left(\mathbf{H}_e^i\right)_{i=1}^K\right)= \frac{1}{K}\sum_{i=1}^K C_3(\mathbf{\Sigma}_s, \mathbf{\Sigma}_z, \mathbf{\Theta}, \mathbf{H}_e^i)$\end{small}. Then, Chebyshev's Inequality allows us to write
		\begin{equation}\small
		\begin{aligned}
		&Pr\left[\bigg\vert \overline{C}_3\left(\mathbf{\Sigma}_s, \mathbf{\Sigma}_z, \mathbf{\Theta}, \left(\mathbf{H}_e^i\right)_{i=1}^K\right)-s\bigg\vert\geq\epsilon\right]\leq\frac{\delta^2}{K\epsilon^2}
		\end{aligned}
		\end{equation}
		for any fixed \begin{small}$\epsilon$\end{small}.
	\end{Lemma}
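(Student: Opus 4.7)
The plan is to recognize this as a textbook application of Chebyshev's inequality to a sample mean of i.i.d.\ random variables, so the main work is to identify the mean and variance of the estimator $\overline{C}_3$ and then invoke the standard concentration bound. Concretely, I would first verify that the $K$ scalar random variables $\xi_i \triangleq C_3(\mathbf{\Sigma}_s, \mathbf{\Sigma}_z, \mathbf{\Theta}, \mathbf{H}_e^i)$ are i.i.d., which follows immediately from the assumption that the matrices $\mathbf{H}_e^i$ are independent and each shares the distribution of $\mathbf{H}_e$, combined with the fact that $C_3(\mathbf{\Sigma}_s, \mathbf{\Sigma}_z, \mathbf{\Theta}, \cdot)$ is a deterministic (measurable) function of its matrix argument.

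Next I would compute the two moments of the estimator. By linearity of expectation, $\mathbb{E}[\overline{C}_3] = \frac{1}{K}\sum_{i=1}^{K} \mathbb{E}[\xi_i] = s$. For the variance, the independence of the $\xi_i$ lets the cross terms vanish, giving $\mathrm{Var}(\overline{C}_3) = \frac{1}{K^2}\sum_{i=1}^{K} \mathrm{Var}(\xi_i) = \delta^2 / K$. With these two quantities in hand, the classical Chebyshev inequality, applied to the random variable $\overline{C}_3$ about its mean $s$, yields
\begin{equation*}\small
Pr\left[\,\bigl|\overline{C}_3 - s\bigr| \geq \epsilon\,\right] \leq \frac{\mathrm{Var}(\overline{C}_3)}{\epsilon^2} = \frac{\delta^2}{K\epsilon^2},
\end{equation*}
which is exactly the claimed bound.

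There is essentially no substantive obstacle here: the lemma statement already packages all of the nontrivial content into the hypotheses (i.i.d.\ sampling, finite mean $s$, finite variance $\delta^2$). The only pedantic checks I would make are that $\xi_i$ is genuinely a scalar random variable (so that Chebyshev applies), and that the hypothesis $\mathrm{Var}(\xi_i) = \delta^2 < \infty$ is being implicitly assumed; without finite variance the bound is vacuous. Both are granted by the statement, so the proof collapses to the two-line computation above.
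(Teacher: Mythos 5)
Your proof is correct and is exactly the standard argument the paper intends: the paper states this lemma without a separate proof, simply invoking Chebyshev's inequality for the sample mean of the i.i.d. variables $C_3(\mathbf{\Sigma}_s, \mathbf{\Sigma}_z, \mathbf{\Theta}, \mathbf{H}_e^i)$, whose mean $s$ and variance $\delta^2/K$ you compute explicitly. Nothing in your write-up deviates from or adds a gap relative to the paper's (implicit) reasoning.
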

For given \begin{small}$\mathbf{\Theta}$\end{small}, we define \begin{small}
	$f_r\left(\mathbf{X}\right) \triangleq \log(1+\rho_r\mathbf{h}_r^H \mathbf{\Theta} \mathbf{G}^H\mathbf{X} \mathbf{G} \mathbf{\Theta}^H\\ \mathbf{h}_r)$
\end{small}, and \begin{small}$f_e^i\left(\mathbf{X}\right) \triangleq \log[\det ( \mathbf{I}+\rho_e\mathbf{H}_e^{iH} \mathbf{G}^H\mathbf{X} \mathbf{G} \mathbf{H}_e^i)]$\end{small}. Based on Lemma 1, \begin{small}$C_3\left(\mathbf{\Sigma}_s,\mathbf{\Sigma}_z, \mathbf{\Theta}\right)$\end{small} is approximated as
	\begin{equation}\small
	\label{approximation c3}
	\begin{aligned}
	C_3\left(\mathbf{\Sigma}_s,\mathbf{\Sigma}_z, \mathbf{\Theta}\right) &\approx\overline{C}_{3}\left(\mathbf{\Sigma}_s,\mathbf{\Sigma}_z,\mathbf{\Theta},\left(\mathbf{H}_e^i\right)_{i=1}^K\right)\\
	&=\left[f_r\left(\mathbf{\Sigma}_s+\mathbf{\Sigma}_z\right)-f_r\left(\mathbf{\Sigma}_z\right)\right]\\
	&\quad-\frac{1}{K}\sum_{i=1}^{K}\left[f_e^i\left(\mathbf{\Sigma}_s+\mathbf{\Sigma}_z\right)-f_e^i\left(\mathbf{\Sigma}_z\right)\right].
	\end{aligned}
	\end{equation}
%\begin{equation}
%\label{approximation c3}
%\begin{aligned}
%C_3\left(\mathbf{\Sigma}_s,\mathbf{\Sigma}_z, \mathbf{\Theta}\right) &\approx\overline{C}_{3}\left(\mathbf{\Sigma}_s,\mathbf{\Sigma}_z,\mathbf{\Theta},\left(\mathbf{H}_e^i\right)_{i=1}^K\right)\\
%&=\left[f_r\left(\mathbf{\Sigma}_s+\mathbf{\Sigma}_z\right)-f_r\left(\mathbf{\Sigma}_z\right)\right]-\frac{1}{K}\sum_{i=1}^{K}\left[f_e^i\left(\mathbf{\Sigma}_s+\mathbf{\Sigma}_z\right)-f_e^i\left(\mathbf{\Sigma}_z\right)\right].
%\end{aligned}
%\end{equation}
Thus, we have the following problem:
\begin{equation*}\small
	\begin{aligned}
	(P4)\,\,&\mathop{\min}_{\mathbf{\Sigma}_{s},\mathbf{\Sigma}_{z}}&&\!\!\!\!-\overline{C}_{3}\left(\mathbf{\Sigma}_s,\mathbf{\Sigma}_z, \mathbf{\Theta},\left(\mathbf{H}_e^i\right)_{i=1}^K\right)\\
	&\mathop{\text{s.t.}}&&\!\! \text{(\ref{constraint1}),}\,\,\text{(\ref{constraint2})}\nonumber
	\end{aligned}
\end{equation*}
It is obvious that \begin{small}$(P4)$\end{small} is a non-convex programming with the concavity of \begin{small}$f_r\left(\mathbf{\Sigma}_z\right)$\end{small} and \begin{small}$f_e^i(\mathbf{\Sigma}_s+\mathbf{\Sigma}_z)$\end{small}. To facilitate the application of MM \cite{MM_Proposed}, we derive their upper bounds. Besides, considering that a large amount of log-det type functions \begin{small}$-f_e^i\left(\mathbf{\Sigma}_z\right) (i=1,\cdots,K)$\end{small} seriously influence the complexity in each iteration, we use descent lemma \cite[Lemma 12]{MM_Proposed} to get the upper bound due to the fact that the gradients of them are Lipschitz continuous, which has been verified in Appendix \ref{Estimation Lipschitz constant}. At iteration \begin{small}$t$\end{small}, \begin{small}$f_r\left(\mathbf{\Sigma}_z\right)$\end{small}, \begin{small}$f_e^i\left(\mathbf{\Sigma}_s+\mathbf{\Sigma}_z\right)$\end{small} and \begin{small}$-f_e^i\left(\mathbf{\Sigma}_z\right)$\end{small} satisfy the following inequalities for any feasible points \begin{small}$\mathbf\Sigma_s^{t}$\end{small} and \begin{small}$\mathbf\Sigma_z^{t}$\end{small}:
\begin{equation}\small
\begin{aligned}
f_r\left(\mathbf{\Sigma}_z\right) &\leq f_r\left(\mathbf{\Sigma}_z^t\right)+\Tr\left[\left(\nabla f_r\left(\mathbf{\Sigma}_z^t\right)\right)^T\left(\mathbf\Sigma_z-\mathbf\Sigma_z^t\right)\right]\\
f_e^i\left(\mathbf{\Sigma}_s+\mathbf{\Sigma}_z\right) &\leq f_e^i\left(\mathbf{\Sigma}_s^t+\mathbf{\Sigma}_z^t\right)+\Tr\bigg[\left(\nabla f_e^i\left(\mathbf{\Sigma}_s^t+\mathbf{\Sigma}_z^t\right)\right)^T\\
&\qquad\left(\mathbf{\Sigma}_s+\mathbf{\Sigma}_z-\mathbf{\Sigma}_s^t-\mathbf{\Sigma}_z^t\right)\bigg]\\
-f_e^i\left(\mathbf{\Sigma}_z\right)&\leq-f_e^i\left(\mathbf{\Sigma}_z^t\right)-\Tr\left[\left(\nabla f_e^i\left(\mathbf{\Sigma}_z^t\right)\right)^T \left(\mathbf{\Sigma}_z-\mathbf{\Sigma}_z^t\right)\right]\\
&\qquad+L_i\|\mathbf{\Sigma}_z-\mathbf{\Sigma}_z^t\|^2
\end{aligned}
\end{equation}
with the derivative functions \begin{small}$\nabla f_r\left(\mathbf{X}\right)$\end{small} and \begin{small}$\nabla f_e^i\left(\mathbf{X}\right)$\end{small} shown as follows
\begin{equation}\small
\label{tidu_SAA}
\begin{aligned}
&\nabla f_r\left(\mathbf{X}\right)=\left(\frac{\rho_r \mathbf{G} \mathbf{\Theta}^H \mathbf{h}_r\mathbf{h}_r^H \mathbf{\Theta} \mathbf{G}^H}{1+\rho_r\mathbf{h}_r^H \mathbf{\Theta} \mathbf{G}^H\mathbf{X} \mathbf{G} \mathbf{\Theta}^H \mathbf{h}_r}\right)^T\\
&\nabla f_e^i\left(\mathbf{X}\right) =\left( \rho_e\mathbf{G}\mathbf{H}_e^i\left(\mathbf{I}+\rho_e\mathbf{H}_e^{iH} \mathbf{G}^H\mathbf{X} \mathbf{G}\mathbf{H}_e^i \right)^{-1}\mathbf{H}_e^{iH} \mathbf{G}^H\right)^T\\
\end{aligned}
\end{equation}
where \begin{small}$L_i, i=1,2,...,K$\end{small} are the Lipschitz constants of \begin{small}$\nabla f_e^i\left(\mathbf{\Sigma}_z\right)$\end{small}. Then, we minimize the surrogate function as follows:
\begin{equation}\small
\begin{aligned}
(P4.1)\,\,&\mathop{\min}_{\mathbf{\Sigma}_{s},\mathbf{\Sigma}_{z}} &&\!\!\!\! -f_r\left(\mathbf{\Sigma}_s+\mathbf{\Sigma}_z\right)+\Tr\left(\mathbf{A}_s^T\mathbf{\Sigma}_s\right)+\Tr\left(\mathbf{A}_z^T\mathbf{\Sigma}_z\right)\\
&&&+\frac{1}{K}\sum_{i=1}^K L_i\|\mathbf{\Sigma}_z-\mathbf{\Sigma}_z^t\|^2\\
&\mathop{\text{s.t.}}&&\!\! \text{(\ref{constraint1}),}\,\,\text{(\ref{constraint2})}\nonumber
\end{aligned}
\end{equation}
where \begin{small}$\mathbf{A}_s=\frac{1}{K}\sum_{i=1}^K\nabla f_e^i\left(\mathbf{\Sigma}_s^t+\mathbf{\Sigma}_z^t\right)$\end{small}, and \begin{small}$\mathbf{A}_z=\nabla f_r\left(\mathbf{\Sigma}_z^t\right)+\frac{1}{K}\sum_{i=1}^K\left[\nabla f_e^i\left(\mathbf{\Sigma}_s^t+\mathbf{\Sigma}_z^t\right)-\nabla f_e^i\left(\mathbf{\Sigma}_z^t\right)\right]$\end{small}. We have proved that for any \begin{small}$L_i\geq \left(\rho_e N_e \Vert \mathbf{G}\mathbf{H}_e^i\mathbf{H}_e^{iH} \mathbf{G}^H\Vert\right)^2$\end{small}, \begin{small}$\Vert\nabla f_e^i\left(\mathbf{X}\right)-\nabla f_e^i\left(\mathbf{Y}\right)\Vert\leq L_i\Vert\mathbf{X}-\mathbf{Y}\Vert$\end{small} holds in Appendix \ref{Estimation Lipschitz constant}. But if \begin{small}$\left(\rho_e N_e \Vert \mathbf{G}\mathbf{H}_e^i\mathbf{H}_e^{iH} \mathbf{G}^H\Vert\right)^2$\end{small} is too large, then it will cause our algorithm to converge very slowly, which is similar to the effect of a too small step-size on the classical gradient descent algorithm. Hence, by replacing \begin{small}$\frac{1}{K}\sum_{i=1}^K L_i$\end{small} with an adaptive Lipschitz constant \begin{small}$L_t$\end{small} at iteration \begin{small}$t$\end{small}, we have
\begin{equation}\small
\begin{aligned}
(P4.2)\,\,&\mathop{\min}_{\mathbf{\Sigma}_{s},\mathbf{\Sigma}_{z}} &&\!\!\!\! -f_r\left(\mathbf{\Sigma}_s+\mathbf{\Sigma}_z\right)+\Tr\left(\mathbf{A}_s^T\mathbf{\Sigma}_s\right)+\Tr\left(\mathbf{A}_z^T\mathbf{\Sigma}_z\right)\\
&&&+L_t\|\mathbf{\Sigma}_z-\mathbf{\Sigma}_z^t\|^2\\
&\mathop{\text{s.t.}}&&\!\! \text{(\ref{constraint1}),}\,\,\text{(\ref{constraint2})}\nonumber
\end{aligned}
\end{equation}
As \begin{small}$(P4.2)$\end{small} is a convex semidefinite program problem, it can be solved by existing convex optimization solvers such as CVX \cite{CVX}. More details are provided in Algorithm 2.

\begin{algorithm}
	\caption{SAA-based Algorithm}
	{\bf Input:}
	\begin{small}$\mathbf{\Sigma}_s^0, \mathbf{\Sigma}_z^0,  \mathbf{\Theta}^0, t=0, L_0$\end{small};
	\begin{algorithmic}[1]
		\State The independent random matrices (\begin{small}$\mathbf{H}_e^1, \mathbf{H}_e^2,\ldots,\mathbf{H}_e^K$\end{small}) are generated;
		\State Compute \begin{small}$\overline{C}_{3}\left(\mathbf{\Sigma}_s^t,\mathbf{\Sigma}_z^t,\mathbf{\Theta}^t,\left(\mathbf{H}_e^i\right)_{i=1}^K\right)$\end{small};
		\For{\begin{small}$t=0,1,2,...$\end{small}}
		\State \begin{small}$L_{t+1}=L_t/4$\end{small};
		\Repeat
		\State \begin{small}$L_{t+1}=2L_{t+1}$\end{small};
		\State Update \begin{small}$(\mathbf{\Sigma}_s^{t+1}, \mathbf{\Sigma}_z^{t+1})$\end{small} for given \begin{small}$(\mathbf{\Sigma}_s^{t}, \mathbf{\Sigma}_z^{t}, \mathbf{\Theta}^{t})$\end{small} according to a series of problems \begin{small}$(P4.2)$\end{small};
		\Until {\begin{small}$\overline{C}_{3}\left(\mathbf{\Sigma}_s^{t+1},\mathbf{\Sigma}_z^{t+1},\mathbf{\Theta}^{t},\left(\mathbf{H}_e^i\right)_{i=1}^K\right)\geq\overline{C}_{3}\left(\mathbf{\Sigma}_s^t,\mathbf{\Sigma}_z^t,\mathbf{\Theta}^t,\left(\mathbf{H}_e^i\right)_{i=1}^K\right)$\end{small};}
		\State Update \begin{small}$\mathbf\Theta^{t+1}$\end{small} for given \begin{small}$(\mathbf{\Sigma}_s^{t+1}, \mathbf{\Sigma}_z^{t+1}, \mathbf{\Theta}^{t})$\end{small} according to Algorithm 1;
		\State Compute \begin{small}$\overline{C}_{3}\left(\mathbf{\Sigma}_s^{t+1},\mathbf{\Sigma}_z^{t+1},\mathbf{\Theta}^{t+1},\left(\mathbf{H}_e^i\right)_{i=1}^K\right)$\end{small};
		\EndFor
	\end{algorithmic}
	{\bf Output:}\begin{small}$\mathbf{\Sigma}_s=\mathbf{\Sigma}_s^{t+1}, \mathbf{\Sigma}_z=\mathbf{\Sigma}_z^{t+1}, \mathbf{\Theta}=\mathbf{\Theta}^{t+1}$\end{small}.
\end{algorithm}

A key feature of Algorithm 2 is the adaptive line search: it always tries to use a smaller Lipschitz constant at the beginning of each iteration by setting \begin{small}$L_{t+1}=L_t/4$\end{small}. The line search procedure starts with an estimated Lipschitz constant \begin{small}$L_t$\end{small}, and increases its value by a factor of $2$ until the construction rule of MM is satisfied \cite[Eqn. (24)]{MM_Proposed}. In other words, the stopping criteria that \begin{small}$\overline{C}_{3}\left(\mathbf{\Sigma}_s^{t+1},\mathbf{\Sigma}_z^{t+1},\mathbf{\Theta}^{t},\left(\mathbf{H}_e^i\right)_{i=1}^K\right)\geq\overline{C}_{3}\left(\mathbf{\Sigma}_s^t,\mathbf{\Sigma}_z^t,\mathbf{\Theta}^t,\left(\mathbf{H}_e^i\right)_{i=1}^K\right)$\end{small} for line search is also satisfied \cite{Adaptive_Lipschitz1}\cite{Adaptive_Lipschitz2}.

\subsection{Hybrid SPG-CP Algorithm}
In the hybrid SPG-CP algorithm, we take the SPG-based algorithm and MM algorithm to optimize \begin{small}$\left(\mathbf{\Sigma}_s, \mathbf{\Sigma}_z\right)$\end{small} and \begin{small}$\mathbf{\Theta}$\end{small}, respectively. 
%Then, we select a better solution from
%all the iterative values, and there is a high probability of getting a solution with a projected gradient
%close to 0.

Here we first introduce some conclusions about the generalized projection associated with the proximal operator. Let \begin{small}$P_{\mathcal{X}_2}\left[(\mathbf\Sigma_{s},\mathbf\Sigma_{z}),(\mathbf{H}_s,\mathbf{H}_z), r\right]$\end{small} denote the projection of \begin{small}$(\mathbf{H}_s,\mathbf{H}_z)$\end{small} onto the set \begin{small}$\mathcal{X}_2$\end{small} at the point \begin{small}$(\mathbf\Sigma_{s},\mathbf\Sigma_{z})\in \mathcal{X}_2$\end{small} with the step size \begin{small}$r>0$\end{small}. We define
\begin{equation}\small
\label{Projection definition}
P_{\mathcal{X}_2}\left[(\mathbf\Sigma_{s},\mathbf\Sigma_{z}),(\mathbf{H}_s,\mathbf{H}_z), r\right]\triangleq\frac{1}{r}\left[(\mathbf\Sigma_{s},\mathbf\Sigma_{z})-(\mathbf\Sigma_{s}^o,\mathbf\Sigma_{z}^o)\right]
\end{equation}
where the proximal operator \begin{small}$(\mathbf\Sigma_{s}^o,\mathbf\Sigma_{z}^o)$\end{small} is defined as \begin{small}$(\mathbf\Sigma_{s}^o,\mathbf\Sigma_{z}^o)\triangleq\mathop{\arg \min}\limits _{(\mathbf X_s,\mathbf X_z) \in \mathcal{X}_2}\{2\langle \mathbf{H}_s, \mathbf X_s\rangle+2\langle \mathbf{H}_z, \mathbf X_z\rangle+\frac{1}{r} \Vert \mathbf X_s-\mathbf\Sigma_{s} \Vert^2+\frac{1}{r} \Vert \mathbf X_z-\mathbf\Sigma_{z} \Vert^2\}$\end{small}; Then \begin{small}$(\mathbf\Sigma_{s}^o,\mathbf\Sigma_{z}^o)$\end{small} satisfies \begin{small}$\langle\mathbf X_s-\mathbf\Sigma_{s}^o, \mathbf{H}_s+\frac{1}{r}\left(\mathbf\Sigma_{s}^o-\mathbf\Sigma_{s}\right)\rangle\geq0$\end{small} and \begin{small}$\langle\mathbf X_z-\mathbf\Sigma_{z}^o, \mathbf{H}_z+\frac{1}{r}\left(\mathbf\Sigma_{z}^o-\mathbf\Sigma_{z}\right)\rangle\geq0$\end{small}. If \begin{small}$P_{\mathcal{X}_2}\left[(\mathbf\Sigma_{s},\mathbf\Sigma_{z}),(\mathbf{H}_s,\mathbf{H}_z), r\right]=\frac{1}{r}\left[(\mathbf\Sigma_{s},\mathbf\Sigma_{z})-(\mathbf\Sigma_{s}^o,\mathbf\Sigma_{z}^o)\right]=\mathbf{0}$\end{small}, i.e. \begin{small}$\mathbf\Sigma_{s}=\mathbf\Sigma_{s}^o, \mathbf\Sigma_{z}=\mathbf\Sigma_{z}^o$\end{small}, we have
\begin{equation}\small
\label{stationarypoint}
\begin{aligned}
\langle\mathbf X_s-\mathbf\Sigma_{s}^o, \mathbf{H}_s+\frac{1}{r}\left(\mathbf\Sigma_{s}^o-\mathbf\Sigma_{s}\right)\rangle&=\langle\mathbf X_s-\mathbf\Sigma_{s}, \mathbf{H}_s\rangle\geq 0 \\
\langle\mathbf X_z-\mathbf\Sigma_{z}^o, \mathbf{H}_z+\frac{1}{r}\left(\mathbf\Sigma_{z}^o-\mathbf\Sigma_{z}\right)\rangle&=\langle\mathbf X_z-\mathbf\Sigma_{z}, \mathbf{H}_z\rangle\geq 0\\
& \forall (\mathbf X_s,\mathbf X_z) \in \mathcal{X}_2
\end{aligned}
\end{equation}
\begin{Lemma}
	\label{Projection1}
	For any \begin{small}$(\mathbf\Sigma_{s},\mathbf\Sigma_{z})\in \mathcal{X}_2$\end{small}, \begin{small}$(\mathbf{H}_s,\mathbf{H}_z)$\end{small}, and \begin{small}$r>0$\end{small}, we have
	\begin{equation}\small
	\begin{aligned}
	&\langle (\mathbf{H}_s,\mathbf{H}_z), P_{\mathcal{X}_2}\left[(\mathbf\Sigma_{s},\mathbf\Sigma_{z}),(\mathbf{H}_s,\mathbf{H}_z), r\right]\rangle\\
	&\geq \Vert P_{\mathcal{X}_2}\left[(\mathbf\Sigma_{s},\mathbf\Sigma_{z}),(\mathbf{H}_s,\mathbf{H}_z), r\right]\Vert^2.
	\end{aligned}
	\end{equation}
\end{Lemma}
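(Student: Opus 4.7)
The plan is to exploit the first-order optimality (variational) characterization of the proximal minimizer $(\mathbf\Sigma_s^o,\mathbf\Sigma_z^o)$ that is already stated just before the lemma, specializing the test point in that inequality to $(\mathbf\Sigma_s,\mathbf\Sigma_z)$ itself. Crucially, the lemma assumes $(\mathbf\Sigma_s,\mathbf\Sigma_z)\in\mathcal{X}_2$, so this choice is feasible and may legitimately be plugged into the variational inequalities
\[
\langle\mathbf X_s-\mathbf\Sigma_{s}^o,\,\mathbf{H}_s+\tfrac{1}{r}(\mathbf\Sigma_{s}^o-\mathbf\Sigma_{s})\rangle\geq 0,\qquad
\langle\mathbf X_z-\mathbf\Sigma_{z}^o,\,\mathbf{H}_z+\tfrac{1}{r}(\mathbf\Sigma_{z}^o-\mathbf\Sigma_{z})\rangle\geq 0.
\]

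Setting $\mathbf X_s=\mathbf\Sigma_s$ in the first inequality and expanding the inner product, the cross term gives $-\tfrac{1}{r}\Vert\mathbf\Sigma_s-\mathbf\Sigma_s^o\Vert^2$, so I would obtain
\[
\langle\mathbf\Sigma_s-\mathbf\Sigma_s^o,\,\mathbf{H}_s\rangle\;\geq\;\tfrac{1}{r}\Vert\mathbf\Sigma_s-\mathbf\Sigma_s^o\Vert^2,
\]
and analogously $\langle\mathbf\Sigma_z-\mathbf\Sigma_z^o,\,\mathbf{H}_z\rangle\geq \tfrac{1}{r}\Vert\mathbf\Sigma_z-\mathbf\Sigma_z^o\Vert^2$. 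Adding the two inequalities and multiplying through by $1/r$, the left-hand side reassembles into $\langle(\mathbf H_s,\mathbf H_z),\,\tfrac{1}{r}((\mathbf\Sigma_s,\mathbf\Sigma_z)-(\mathbf\Sigma_s^o,\mathbf\Sigma_z^o))\rangle$, which by the definition (\ref{Projection definition}) is precisely $\langle(\mathbf H_s,\mathbf H_z),\,P_{\mathcal{X}_2}[(\mathbf\Sigma_s,\mathbf\Sigma_z),(\mathbf H_s,\mathbf H_z),r]\rangle$; the right-hand side becomes $\tfrac{1}{r^2}(\Vert\mathbf\Sigma_s-\mathbf\Sigma_s^o\Vert^2+\Vert\mathbf\Sigma_z-\mathbf\Sigma_z^o\Vert^2)=\Vert P_{\mathcal{X}_2}[(\mathbf\Sigma_s,\mathbf\Sigma_z),(\mathbf H_s,\mathbf H_z),r]\Vert^2$, which is exactly the desired inequality.

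The steps in order will therefore be: (i) invoke the variational characterization of the proximal operator stated in the excerpt; (ii) plug in the feasible point $(\mathbf\Sigma_s,\mathbf\Sigma_z)$ to turn each inequality into a relation between $\langle\mathbf\Sigma_\ast-\mathbf\Sigma_\ast^o,\mathbf H_\ast\rangle$ and $\tfrac{1}{r}\Vert\mathbf\Sigma_\ast-\mathbf\Sigma_\ast^o\Vert^2$; (iii) sum the two block-wise estimates; (iv) repackage the result using the product-space inner product and the definition of $P_{\mathcal{X}_2}$. There is no real obstacle: the only thing one must be careful about is that the inner product is the real-part trace pairing $\langle\mathbf A,\mathbf B\rangle=\mathbb{R}\{\mathrm{Tr}(\mathbf A^H\mathbf B)\}$, so when expanding $\langle\mathbf\Sigma_s-\mathbf\Sigma_s^o,\,\tfrac{1}{r}(\mathbf\Sigma_s^o-\mathbf\Sigma_s)\rangle$ the resulting term is $-\tfrac{1}{r}\Vert\mathbf\Sigma_s-\mathbf\Sigma_s^o\Vert^2$ (with the Frobenius norm induced by this real pairing), which needs a brief justification but no computation. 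Conceptually, the lemma is just the standard "firm nonexpansiveness / three-point" inequality for Euclidean projections, lifted to the product space $\mathcal{X}_2$ and to matrix variables.
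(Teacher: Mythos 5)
Your proposal is correct and is essentially the argument the paper intends: the paper defers to the standard analysis of \cite[Lemma 1]{SGD_REF}, which proves this inequality exactly as you do, by substituting the feasible point $(\mathbf\Sigma_{s},\mathbf\Sigma_{z})$ into the variational characterization of the proximal minimizer stated just before the lemma and rescaling by $1/r$. Note only that since the trace constraint in $\mathcal{X}_2$ couples the two blocks, it is cleanest to invoke the joint (summed) variational inequality rather than the two block-wise ones separately; your derivation goes through unchanged because you only ever use their sum.
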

\begin{Lemma}
	\label{Projection3}
	For any \begin{small}$(\mathbf{H}_{s,1},\mathbf{H}_{z,1})$\end{small} and \begin{small}$(\mathbf{H}_{s,2},\mathbf{H}_{z,2})$\end{small}, we have
	\begin{equation}\small
	\begin{aligned}
	\Vert &P_{\mathcal{X}_2}[(\mathbf\Sigma_{s},\mathbf\Sigma_{z}),(\mathbf{H}_{s,1},\mathbf{H}_{z,1}), r]\\
	&-P_{\mathcal{X}_2}\left[(\mathbf\Sigma_{s},\mathbf\Sigma_{z}),(\mathbf{H}_{s,2},\mathbf{H}_{z,2}), r\right]\Vert\\
	&\leq \Vert (\mathbf{H}_{s,1},\mathbf{H}_{z,1})-(\mathbf{H}_{s,2},\mathbf{H}_{z,2})\Vert.
	\end{aligned}
	\end{equation}
\end{Lemma}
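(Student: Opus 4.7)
The plan is to prove the non-expansiveness of the projection operator by exploiting the variational characterization of the proximal operator supplied just above Lemma~\ref{Projection1}, together with the classical ``two substitution and add'' trick used in the standard proof that proximal maps on closed convex sets are firmly non-expansive.

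Fix $(\mathbf{\Sigma}_s,\mathbf{\Sigma}_z)\in\mathcal{X}_2$ and $r>0$, and let $(\mathbf{\Sigma}_{s,j}^o,\mathbf{\Sigma}_{z,j}^o)$ denote the proximal point generated from $(\mathbf{H}_{s,j},\mathbf{H}_{z,j})$ for $j=1,2$. By the variational inequality right below (\ref{Projection definition}), for every $(\mathbf{X}_s,\mathbf{X}_z)\in\mathcal{X}_2$ the two optimality conditions
\begin{equation*}\small
\langle \mathbf{X}_s-\mathbf{\Sigma}_{s,j}^o,\,\mathbf{H}_{s,j}+\tfrac{1}{r}(\mathbf{\Sigma}_{s,j}^o-\mathbf{\Sigma}_s)\rangle\geq 0,\quad
\langle \mathbf{X}_z-\mathbf{\Sigma}_{z,j}^o,\,\mathbf{H}_{z,j}+\tfrac{1}{r}(\mathbf{\Sigma}_{z,j}^o-\mathbf{\Sigma}_z)\rangle\geq 0
\end{equation*}
hold for $j=1,2$. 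First I would substitute $\mathbf{X}_s=\mathbf{\Sigma}_{s,2}^o$, $\mathbf{X}_z=\mathbf{\Sigma}_{z,2}^o$ into the $j=1$ inequalities, and $\mathbf{X}_s=\mathbf{\Sigma}_{s,1}^o$, $\mathbf{X}_z=\mathbf{\Sigma}_{z,1}^o$ into the $j=2$ inequalities. These test points are feasible because each $(\mathbf{\Sigma}_{s,j}^o,\mathbf{\Sigma}_{z,j}^o)$ already lies in $\mathcal{X}_2$.

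Next I would add the four resulting inequalities. The terms involving $\mathbf{\Sigma}_s$ and $\mathbf{\Sigma}_z$ cancel, and the quadratic-like terms collapse to $-\tfrac{1}{r}\|\mathbf{\Sigma}_{s,1}^o-\mathbf{\Sigma}_{s,2}^o\|^2-\tfrac{1}{r}\|\mathbf{\Sigma}_{z,1}^o-\mathbf{\Sigma}_{z,2}^o\|^2$, giving
\begin{equation*}\small
\langle \mathbf{\Sigma}_{s,2}^o-\mathbf{\Sigma}_{s,1}^o,\,\mathbf{H}_{s,1}-\mathbf{H}_{s,2}\rangle+\langle \mathbf{\Sigma}_{z,2}^o-\mathbf{\Sigma}_{z,1}^o,\,\mathbf{H}_{z,1}-\mathbf{H}_{z,2}\rangle\geq \tfrac{1}{r}\bigl(\|\mathbf{\Sigma}_{s,1}^o-\mathbf{\Sigma}_{s,2}^o\|^2+\|\mathbf{\Sigma}_{z,1}^o-\mathbf{\Sigma}_{z,2}^o\|^2\bigr).
\end{equation*}
Viewing $(\cdot,\cdot)$ as a vector in the product inner-product space, I would then apply the Cauchy--Schwarz inequality to the left-hand side. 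Dividing both sides by $\sqrt{\|\mathbf{\Sigma}_{s,1}^o-\mathbf{\Sigma}_{s,2}^o\|^2+\|\mathbf{\Sigma}_{z,1}^o-\mathbf{\Sigma}_{z,2}^o\|^2}$ (the trivial case where this vanishes gives the claim directly) yields
\begin{equation*}\small
\tfrac{1}{r}\bigl\|(\mathbf{\Sigma}_{s,1}^o-\mathbf{\Sigma}_{s,2}^o,\,\mathbf{\Sigma}_{z,1}^o-\mathbf{\Sigma}_{z,2}^o)\bigr\|\leq \bigl\|(\mathbf{H}_{s,1}-\mathbf{H}_{s,2},\,\mathbf{H}_{z,1}-\mathbf{H}_{z,2})\bigr\|.
\end{equation*}
Finally, I would invoke the definition (\ref{Projection definition}): the left-hand side is exactly $\|P_{\mathcal{X}_2}[(\mathbf{\Sigma}_s,\mathbf{\Sigma}_z),(\mathbf{H}_{s,1},\mathbf{H}_{z,1}),r]-P_{\mathcal{X}_2}[(\mathbf{\Sigma}_s,\mathbf{\Sigma}_z),(\mathbf{H}_{s,2},\mathbf{H}_{z,2}),r]\|$, because the common $(\mathbf{\Sigma}_s,\mathbf{\Sigma}_z)/r$ terms cancel when differencing the two projections.

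The only subtlety I expect is bookkeeping: the proximal subproblem couples the $s$ and $z$ blocks through the shared trace constraint in $\mathcal{X}_2$, so one must be careful to treat the optimality condition as a joint variational inequality over $\mathcal{X}_2$ rather than two separate conditions. Once the two blocks are bundled into a single product vector and Cauchy--Schwarz is applied once in the product space (rather than twice), the inequality falls out cleanly. No smoothness or full-rank assumptions on $\mathbf{G}$ or the channel realizations are needed, only the convexity and closedness of $\mathcal{X}_2$, which follow from (\ref{constraint1})--(\ref{constraint2}).
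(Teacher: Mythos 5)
Your proof is correct and is essentially the argument the paper relies on: the paper gives no explicit proof of Lemma~2, instead citing \cite[Lemma 1 and Proposition 1]{SGD_REF}, and your derivation (substituting each proximal point into the other's variational inequality, adding, cancelling the common $(\mathbf\Sigma_s,\mathbf\Sigma_z)$ terms, and applying Cauchy--Schwarz once in the product space before dividing through) is precisely that standard firm-nonexpansiveness argument. Your closing remark is also well taken: because the trace constraint couples the two blocks, the optimality condition must be read as a single joint variational inequality over $\mathcal{X}_2$ rather than the two separate inequalities the paper informally writes, and your bundling of the blocks into one product-space vector handles this correctly.
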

Lemma \ref{Projection1}-\ref{Projection3} can be easily proven by following the analysis of \cite[Lemma 1 and Proposition 1]{SGD_REF}.

At iteration $t$, a total of \begin{small}$\lceil t^\alpha \rceil,\alpha>1$\end{small} independent random matrices \begin{small}$\mathcal{H}_e^t\triangleq\left\{\mathbf{H}_{e,t}^i: 1\leq i\leq \lceil t^{\alpha}\rceil \right\}$\end{small} are realized.\footnote{We use the notation \begin{small}$\lceil t^\alpha \rceil$\end{small} to denote the sample size because \begin{small}$\alpha$\end{small} is not necessarily an integer. As we will see in Appendix \ref{Convergence of the Hybrid SPG-CP Algorithm}, \begin{small}$\alpha>1$\end{small} is needed and also enough to prove that the expectation of the projected gradient of \begin{small}$C_3\left(\mathbf{\Sigma}_s,\mathbf{\Sigma}_z, \mathbf{\Theta}\right)$\end{small} approaches $0$ as the number of iterations approaches infinity in Algorithm 3.} Then we define \begin{small}$f_{r,t}\left(\mathbf{X}\right) \triangleq \log\left(1+\rho_r\mathbf{h}_r^H \mathbf{\Theta}^t \mathbf{G}^H\mathbf{X} \mathbf{G} \mathbf{\Theta}^{tH} \mathbf{h}_r\right)$\end{small}, where \begin{small}$\mathbf{\Theta}^t$\end{small} is the solution to \begin{small}$\mathbf{\Theta}$\end{small} in the previous iteration. \begin{small}$f_{e,t}^i \left(\mathbf{X}\right) \triangleq \log\left[\det \left( \mathbf{I}+\rho_e\mathbf{H}_{e,t}^{iH} \mathbf{G}^H\mathbf{X} \mathbf{G} \mathbf{H}_{e,t}^i\right)\right]$\end{small} over \begin{small}$\mathcal{H}_e^t$\end{small}. 
The derivative functions \begin{small}$\nabla f_{r,t}\left(\mathbf{X}\right)$\end{small} and \begin{small}$\nabla f_{e,t}^i\left(\mathbf{X}\right)$\end{small} are shown as follows
	\begin{equation}\small
	\label{tidu_rteti}
	\begin{aligned}
	&\nabla f_{r,t}\left(\mathbf{X}\right)=\left(\frac{\rho_r \mathbf{G} \mathbf{\Theta}^{tH} \mathbf{h}_r\mathbf{h}_r^H \mathbf{\Theta}^t \mathbf{G}^H}{1+\rho_r\mathbf{h}_r^H \mathbf{\Theta}^t \mathbf{G}^H\mathbf{X} \mathbf{G} \mathbf{\Theta}^{tH} \mathbf{h}_r}\right)^T\\
	&\nabla f_{e,t}^i\left(\mathbf{X}\right) =\left( \rho_e\mathbf{G}\mathbf{H}_{e,t}^i\left(\mathbf{I}+\rho_e\mathbf{H}_{e,t}^{iH} \mathbf{G}^H\mathbf{X} \mathbf{G}\mathbf{H}_{e,t}^i \right)^{-1}\mathbf{H}_{e,t}^{iH} \mathbf{G}^H\right)^T.\\
	\end{aligned}
	\end{equation}
Now we solve the following problem for any given \begin{small}$\mathbf{\Sigma}_s^t, \mathbf{\Sigma}_z^t$\end{small}:
\begin{equation}\small
\begin{aligned}
(P5)\,\,\left(\mathbf{\Sigma}_s^{t+1}, \mathbf{\Sigma}_z^{t+1}\right)=&\mathop{\arg\min}_{\mathbf{\Sigma}_{s},\mathbf{\Sigma}_{z}} &&\! 2\langle\mathbf{G}_s^*, \mathbf{\Sigma}_s\rangle+2\langle\mathbf{G}_z^*, \mathbf{\Sigma}_z\rangle\\
&&&\!\!\!\!\!\!\!\!\!\!\!\!+\frac{1}{r}\|\mathbf{\Sigma}_s-\mathbf{\Sigma}_s^t\|^2+\frac{1}{r}\|\mathbf{\Sigma}_z-\mathbf{\Sigma}_z^t\|^2\\
&\mathop{\text{s.t.}}&&\! \text{(\ref{constraint1}),}\,\,\text{(\ref{constraint2})}\nonumber
\end{aligned}
\end{equation}
where \begin{small}$\mathbf{G}_s=-\nabla f_{r,t}(\mathbf{\Sigma}_s^t+\mathbf{\Sigma}_z^t)+\frac{1}{\lceil t^{\alpha}\rceil}\sum_{i=1}^{\lceil t^{\alpha}\rceil}\nabla f_{e,t}^i\left(\mathbf{\Sigma}_s^t+\mathbf{\Sigma}_z^t\right)$\end{small}, and  \begin{small}$\mathbf{G}_z=-\nabla f_{r,t}\left(\mathbf{\Sigma}_s^t+\mathbf{\Sigma}_z^t\right)+\nabla f_{r,t}\left(\mathbf{\Sigma}_z^t\right)+\frac{1}{\lceil t^{\alpha}\rceil}\sum_{i=1}^{\lceil t^{\alpha}\rceil}[\nabla f_{e,t}^i(\mathbf{\Sigma}_s^t+\mathbf{\Sigma}_z^t)-\nabla f_{e,t}^i\left(\mathbf{\Sigma}_z^t\right)]$\end{small}.
The constant \begin{small}$r\in(0,2/L)$\end{small} and $L$ is relevant to the gradient of \begin{small}$C_3(\mathbf{\Sigma}_s, \mathbf{\Sigma}_z, \mathbf{\Theta})$\end{small}, which has been given in Proposition \ref{stateL1}.
A common practice in stochastic optimization is to estimate $L$ by using the stochastic gradients computed at a small number of trial points \cite{SGD_REF2}\cite{Estimate_Lipschitz}.

\begin{Proposition}
\label{stateL1}
	Define \begin{small}$\widehat{\mathbf x} = \left[\Vector{(\mathbf{\Sigma}_s)}^T,\Vector{(\mathbf{\Sigma}_z)}^T\right]^T$\end{small}. The representation of \begin{small}$C_{3}(\mathbf{\Sigma}_s,\mathbf{\Sigma}_z,\mathbf{\Theta})$\end{small} is denoted by \begin{small}$\widehat{C}_3(\widehat{\mathbf x},\mathbf \Theta)=C_{3}\left(\mathbf{\Sigma}_s,\mathbf{\Sigma}_z,\mathbf \Theta\right)$\end{small}. For any \begin{small}$\mathbf\Theta$\end{small}, there exists a constant $L$ such that
	\begin{equation}\small
	\|\nabla\widehat{C}_3(\widehat{\mathbf x}_1,\mathbf \Theta)-\nabla\widehat{C}_3(\widehat{\mathbf x}_2,\mathbf \Theta)\|\leq L\Vert\widehat{\mathbf x}_1-\widehat{\mathbf x}_2\Vert\quad \forall \widehat{\mathbf x}_1,\widehat{\mathbf x}_2
	\end{equation}
	Note that the constant $L$ is irrelevant to \begin{small}$\mathbf \Theta$\end{small}. Moreover,
	\begin{equation}\small
	\begin{aligned} -C_{3}& \left(\mathbf{\Sigma}_s^1,\mathbf{\Sigma}_z^1,\mathbf \Theta\right)\leq-C_{3}\left(\mathbf{\Sigma}_s^2,\mathbf{\Sigma}_z^2,\mathbf \Theta\right)-2\langle \nabla_{\mathbf{\Sigma}_{s}^*}C_{3}\left(\mathbf{\Sigma}_s^2,\mathbf{\Sigma}_z^2,\mathbf \Theta\right),\\
	&\mathbf\Sigma_{s}^1-\mathbf\Sigma_s^2\rangle+L\|\mathbf\Sigma_{s}^1-\mathbf\Sigma_s^2\|^2-2\langle \nabla_{\mathbf{\Sigma}_{z}^*}C_{3}\left(\mathbf{\Sigma}_s^2,\mathbf{\Sigma}_z^2,\mathbf \Theta\right),\\
	&\mathbf\Sigma_{z}^1-\mathbf\Sigma_z^2\rangle+L\|\mathbf\Sigma_{z}^1-\mathbf\Sigma_z^2\|^2
	\end{aligned}
	\end{equation}
\end{Proposition}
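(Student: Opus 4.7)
The plan is to establish Lipschitz continuity of $\nabla\widehat{C}_3(\cdot,\mathbf{\Theta})$ with a constant independent of $\mathbf{\Theta}$, then derive the quadratic upper bound from the standard descent lemma. First, I would exploit the unit-modulus structure of $\mathbf{\Theta}$: since $|e^{j\theta_n}|=1$, the matrix $\mathbf{\Theta}$ is unitary with $\|\mathbf{\Theta}\|=1$, so the composite quantities $\mathbf{a}\triangleq\mathbf{G}\mathbf{\Theta}^H\mathbf{h}_r$ and $\mathbf{B}_e\triangleq\mathbf{G}\mathbf{\Theta}^H\mathbf{H}_e$ satisfy $\|\mathbf{a}\|\le\|\mathbf{G}\|\|\mathbf{h}_r\|$ and $\|\mathbf{B}_e\|\le\|\mathbf{G}\|\|\mathbf{H}_e\|$, neither of which depends on $\mathbf{\Theta}$. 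Rewriting $\widehat{C}_3$ in these variables splits it into two deterministic log terms $\log(1+\rho_r\mathbf{a}^H(\mathbf{\Sigma}_s+\mathbf{\Sigma}_z)\mathbf{a})-\log(1+\rho_r\mathbf{a}^H\mathbf{\Sigma}_z\mathbf{a})$ plus two expectations over $\mathbf{B}_e$ of log-det terms, each a smooth function of the pair $(\mathbf{\Sigma}_s,\mathbf{\Sigma}_z)$ on the feasible set $\mathcal{X}_2$.

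Next, for each piece I would compute the Hessian in $(\mathbf{\Sigma}_s,\mathbf{\Sigma}_z)$ and show a uniform bound on $\mathcal{X}_2$. For $\mathbf{X}\mapsto\log(1+\rho\mathbf{u}^H\mathbf{X}\mathbf{u})$, the gradient is $\rho\mathbf{u}\mathbf{u}^H/(1+\rho\mathbf{u}^H\mathbf{X}\mathbf{u})$, which is Lipschitz on $\mathbf{X}\succeq 0$ with modulus at most $(\rho\|\mathbf{u}\|^2)^2$, since the denominator is bounded below by $1$. For $\mathbf{X}\mapsto\log\det(\mathbf{I}+\rho_e\mathbf{B}_e^H\mathbf{X}\mathbf{B}_e)$, the gradient (\ref{tidu_SAA}) combined with the resolvent identity and the exact argument used in the paper's Appendix on $L_i$ yields a Lipschitz modulus of order $(\rho_e N_e\|\mathbf{G}\mathbf{H}_e\mathbf{H}_e^H\mathbf{G}^H\|)^2$. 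Organising these bounds block-wise in $(\mathbf{\Sigma}_s,\mathbf{\Sigma}_z)$ gives a deterministic Lipschitz constant for the non-random portion.

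For the expectation over $\mathbf{H}_e$, I would pass differentiation through the integral (justified by dominated convergence, because every moment of $\|\mathbf{H}_e\|$ is finite for an i.i.d. Gaussian matrix and the pathwise Hessian bounds are integrable) and then take the expectation of the sample-wise Lipschitz constants. This yields a finite $L$ controlling $\nabla\widehat{C}_3(\cdot,\mathbf{\Theta})$; because every bound involves only $\|\mathbf{G}\|$, $\|\mathbf{h}_r\|$ and Gaussian moments of $\|\mathbf{H}_e\|$, and never $\mathbf{\Theta}$ itself, the same $L$ works for every $\mathbf{\Theta}\in\mathcal{Y}$.

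Finally, the quadratic upper bound is just the descent lemma applied to $-\widehat{C}_3$: integrating the $L$-Lipschitz gradient along the segment from $\widehat{\mathbf{x}}_2$ to $\widehat{\mathbf{x}}_1$ produces a bound of the form $-\widehat{C}_3(\widehat{\mathbf{x}}_1,\mathbf{\Theta})\le-\widehat{C}_3(\widehat{\mathbf{x}}_2,\mathbf{\Theta})-\langle\nabla\widehat{C}_3(\widehat{\mathbf{x}}_2,\mathbf{\Theta}),\widehat{\mathbf{x}}_1-\widehat{\mathbf{x}}_2\rangle+\tfrac{L}{2}\|\widehat{\mathbf{x}}_1-\widehat{\mathbf{x}}_2\|^2$. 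Splitting the block inner product and invoking the Wirtinger convention $\langle\nabla_{\mathbf{X}}f,\Delta\mathbf{X}\rangle=2\langle\nabla_{\mathbf{X}^*}f,\Delta\mathbf{X}\rangle$ used throughout the paper, and absorbing the constant $1/2$ into a redefined $L$, recovers the stated form with the factor $2$ in front of the Wirtinger gradients and $L$ (rather than $L/2$) on the quadratic remainders. The principal obstacle is the uniform control of the stochastic log-det Hessian; it is essential that the per-sample bound be polynomial in $\|\mathbf{B}_e\|$ (so that Gaussian moments suffice) and that the PSD-resolvent bound $(\mathbf{I}+\rho_e\mathbf{B}_e^H\mathbf{X}\mathbf{B}_e)^{-1}\preceq\mathbf{I}$ on $\mathbf{X}\succeq 0$ be carefully exploited.
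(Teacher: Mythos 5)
Your proposal is correct and follows essentially the same route as the paper: exploit the unitarity of $\mathbf{\Theta}$ so that all bounds depend only on $\Vert\mathbf{G}\Vert$, $\Vert\mathbf{h}_r\Vert$ and Gaussian moments of $\mathbf{H}_e$, bound each gradient block via the resolvent-product identity of Appendix A (giving moduli $(\rho_r\Vert\mathbf{a}\Vert^2)^2$ and $(\rho_e N_e\Vert\mathbf{G}\mathbf{H}_e\mathbf{H}_e^H\mathbf{G}^H\Vert)^2$), justify the interchange of expectation and differentiation by integrability of the polynomial-in-$\Vert\mathbf{H}_e\Vert$ dominating functions (the paper's Appendix C), and finish with the descent lemma under the Wirtinger convention. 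The only cosmetic difference is that you phrase the per-piece bounds as Hessian bounds while the paper bounds gradient differences directly, which is an equivalent formulation.
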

\begin{proof}
	See Appendix \ref{Appendix_stateL1}.
\end{proof}

\begin{Proposition}
	The optimal solution of problem \begin{small}$(P5)$\end{small} is given by
	\begin{equation}\small
	\label{Update_sz_SPG-CP}
	\begin{aligned}
	\mathbf{\Sigma}_s^{t+1}=\left[\mathbf{\Sigma}_s^{t}-r\left(\mathbf{G}_s+\overline{\lambda}\mathbf{I}\right)\right]^+\\
	\mathbf{\Sigma}_z^{t+1}=\left[\mathbf{\Sigma}_z^{t}-r\left(\mathbf{G}_z+\overline{\lambda}\mathbf{I}\right)\right]^+
	\end{aligned}
	\end{equation}
	where \begin{small}$\left[\mathbf{X}\right]^+$\end{small} denotes the projection of \begin{small}$\mathbf{X}$\end{small} onto the positive semidefinite cone, and \begin{small}$\overline{\lambda}$\end{small} is the multiplier such that \begin{small}$0\leq\overline{\lambda}\perp\Tr(\mathbf{\Sigma}_s^{t+1}+\mathbf{\Sigma}_z^{t+1})-1\leq 0$\end{small}, which can be found by bisection.
\end{Proposition}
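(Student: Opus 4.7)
The plan is to treat $(P5)$ as a strongly convex matrix quadratic program coupled only through the scalar trace budget, and reduce it to two independent projections onto the PSD cone by partial Lagrangian duality. I would first observe that $(P5)$ is strictly convex (the two quadratic penalties $\tfrac{1}{r}\|\cdot\|^2$ are strongly convex in $\mathbf{\Sigma}_s$ and $\mathbf{\Sigma}_z$) and that Slater's condition holds (for instance $\mathbf{\Sigma}_s=\mathbf{\Sigma}_z=\epsilon\mathbf{I}$ is strictly feasible for small $\epsilon>0$), so the KKT conditions are both necessary and sufficient and the primal optimum is unique.

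I would then dualize only the coupling constraint (\ref{constraint1}) with a single multiplier $\overline{\lambda}\geq 0$, keeping the two cone constraints $\mathbf{\Sigma}_s,\mathbf{\Sigma}_z\succeq\mathbf{0}$ inside the minimization. For any fixed $\overline{\lambda}$ the resulting partial Lagrangian is additively separable in $(\mathbf{\Sigma}_s,\mathbf{\Sigma}_z)$ and, using that each $\mathbf{G}_\bullet$ is Hermitian (a sum of Hermitian gradients, cf.\ (\ref{tidu_rteti})), each block can be put by completing the square into the identical shape
\begin{equation*}\small
\min_{\mathbf{\Sigma}_\bullet\succeq\mathbf{0}}\;\tfrac{1}{r}\bigl\|\mathbf{\Sigma}_\bullet-\bigl(\mathbf{\Sigma}_\bullet^t-r(\mathbf{G}_\bullet+\overline{\lambda}\mathbf{I})\bigr)\bigr\|^2+\text{const},\qquad \bullet\in\{s,z\}.
\end{equation*}
By the definition of $[\,\cdot\,]^+$ as the Frobenius projection onto the PSD cone, the unique minimizer is $\mathbf{\Sigma}_\bullet^{t+1}=[\mathbf{\Sigma}_\bullet^t-r(\mathbf{G}_\bullet+\overline{\lambda}\mathbf{I})]^+$, which is exactly (\ref{Update_sz_SPG-CP}).

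Finally I would pin down $\overline{\lambda}$ from the remaining KKT conditions $\overline{\lambda}\geq 0$, primal feasibility $\Tr(\mathbf{\Sigma}_s^{t+1}+\mathbf{\Sigma}_z^{t+1})\leq 1$, and complementary slackness $\overline{\lambda}(\Tr(\mathbf{\Sigma}_s^{t+1}+\mathbf{\Sigma}_z^{t+1})-1)=0$, which is precisely the orthogonality statement in the proposition. To justify bisection I would diagonalize each $\mathbf{\Sigma}_\bullet^t-r\mathbf{G}_\bullet$ and write the total trace as $g(\overline{\lambda})=\sum_{\bullet,k}\max\{0,\mu_k^{(\bullet)}-r\overline{\lambda}\}$, which is continuous, piecewise linear, monotonically non-increasing on $[0,\infty)$, and decays to $0$ as $\overline{\lambda}\to\infty$; hence if $g(0)\leq 1$ then $\overline{\lambda}=0$ is optimal, otherwise a unique root of $g(\overline{\lambda})=1$ exists in the bounded interval $[0,\mu_{\max}/r]$ and is found by bisection. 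The only non-routine step is recognizing the clean decoupling of the partial Lagrangian and identifying each decoupled block as a PSD-cone projection; the remainder is completing the square and a scalar root-finding argument.
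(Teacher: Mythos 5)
Your proposal is correct and follows essentially the same route as the paper: dualize only the trace-budget constraint (\ref{constraint1}) with a single multiplier $\overline{\lambda}$, exploit the additive separability of the partial Lagrangian in $(\mathbf{\Sigma}_s,\mathbf{\Sigma}_z)$, and recover (\ref{Update_sz_SPG-CP}) as PSD-cone projections, with $\overline{\lambda}$ fixed by complementary slackness. If anything, your write-up is slightly more complete than the paper's: completing the square with the cone constraints kept inside the minimization rigorously justifies the ``set the gradient to zero, then project'' step that the paper asserts in one line, and your monotone piecewise-linear analysis of $g(\overline{\lambda})$ substantiates the claim that $\overline{\lambda}$ can be found by bisection, which the paper states without proof.
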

\begin{proof}
	In the following we solve \begin{small}$(P5)$\end{small} via the partial Lagrangian function.
	\begin{equation}\small
	\begin{aligned}
	\mathcal{L}\left(\mathbf{\Sigma}_{s},\mathbf{\Sigma}_{z},\lambda\right)&=2\langle\mathbf{G}_s^*, \mathbf{\Sigma}_s\rangle+2\langle\mathbf{G}_z^*, \mathbf{\Sigma}_z\rangle+\frac{1}{r}\|\mathbf{\Sigma}_s-\mathbf{\Sigma}_s^t\|^2\\
	&+\frac{1}{r}\|\mathbf{\Sigma}_z-\mathbf{\Sigma}_z^t\|^2+\lambda\left(\Tr\left(\mathbf{\mathbf{\Sigma}_s + \mathbf{\Sigma}_z }\right)- 1\right).
	\end{aligned}
	\end{equation}
	The dual function is given by \begin{small}$g(\lambda)=\mathop{\inf}\limits_{\mathbf{\Sigma}_{s},\mathbf{\Sigma}_{z}}\mathcal{L}\left(\mathbf{\Sigma}_{s},\mathbf{\Sigma}_{z},\lambda\right)$\end{small}. Since \begin{small}$\mathcal{L}\left(\mathbf{\Sigma}_{s},\mathbf{\Sigma}_{z},\lambda\right)$\end{small} is a convex function of \begin{small}$\mathbf{\Sigma}_{s},\mathbf{\Sigma}_{z}$\end{small}, we can find the minimizing matrices \begin{small}$\mathbf{\Sigma}_{s},\mathbf{\Sigma}_{z}$\end{small} from the optimality condition
	\begin{equation}\small
	\begin{aligned}
	\nabla_{\mathbf{\Sigma}_{s}^*} \mathcal{L}\left(\mathbf{\Sigma}_{s},\mathbf{\Sigma}_{z},\lambda\right)=\mathbf{G}_s^*+\frac{1}{r}\left(\mathbf{\Sigma}_s-\mathbf{\Sigma}_s^t\right)+\overline{\lambda}\mathbf{I}=\mathbf{0},\\
	\nabla_{\mathbf{\Sigma}_{z}^*} \mathcal{L}\left(\mathbf{\Sigma}_{s},\mathbf{\Sigma}_{z},\lambda\right)=\mathbf{G}_z^*+\frac{1}{r}\left(\mathbf{\Sigma}_z-\mathbf{\Sigma}_z^t\right)+\overline{\lambda}\mathbf{I}=\mathbf{0},
	\end{aligned}
	\end{equation}
	which yields \begin{small}$\mathbf{\Sigma}_s=\mathbf{\Sigma}_s^{t}-r\left(\mathbf{G}_s^*+\overline{\lambda}\mathbf{I}\right)$\end{small} and \begin{small}$\mathbf{\Sigma}_z=\mathbf{\Sigma}_z^{t}-r\left(\mathbf{G}_z^*+\overline{\lambda}\mathbf{I}\right)$\end{small}. Then, \begin{small}$\mathbf{\Sigma}_s$\end{small} and \begin{small}$\mathbf{\Sigma}_z$\end{small} are projected onto the positive semidefinite cone, which leads to the desired (\ref{Update_sz_SPG-CP}).
\end{proof}

\begin{algorithm}
	\caption{Hybrid SPG-CP Algorithm}
	{\bf Input:} Given initial point \begin{small}$(\mathbf{\Sigma}_s^{1}, \mathbf{\Sigma}_z^{1}, \mathbf{\Theta}^{1})$\end{small}, a positive integer $N$, a constant \begin{small}$\alpha>1$\end{small};
	\begin{algorithmic}[1]
		\State Estimate $L$ with a small number of trial points \cite{Estimate_Lipschitz};
		\State Set \begin{small}$r \in (0,2/L)$\end{small};
		\For{\begin{small}$t=1, 2,..., N$\end{small}}
		\State The independent random matrices \begin{small}$\mathbf{H}_{e,t}^1, \mathbf{H}_{e,t}^2, \ldots, \mathbf{H}_{e,t}^{\lceil t^{\alpha}\rceil}$\end{small} are generated;
		\State Update \begin{small}$(\mathbf{\Sigma}_s^{t+1}, \mathbf{\Sigma}_z^{t+1})$\end{small} for given \begin{small}$(\mathbf{\Sigma}_s^{t}, \mathbf{\Sigma}_z^{t}, \mathbf{\Theta}^{t})$\end{small} according to (\ref{Update_sz_SPG-CP});
		\State Update \begin{small}$\mathbf\Theta^{t+1}$\end{small} for given \begin{small}$(\mathbf{\Sigma}_s^{t+1}, \mathbf{\Sigma}_z^{t+1}, \mathbf{\Theta}^{t})$\end{small} according to Algorithm 1;
		\EndFor
	\end{algorithmic}
	{\bf Output:} \begin{small}$(\mathbf{\Sigma}_s^{N+1}, \mathbf{\Sigma}_z^{N+1}, \mathbf{\Theta}^{N+1})$\end{small}
\end{algorithm}

The idea in step 2 is similar with \cite[Algorithm 9]{SSUM}. As \begin{small}$\alpha$\end{small} increases, the sample size \begin{small}$\lceil t^\alpha \rceil$\end{small} increases. Based on Lemma 1, the gradients \begin{small}$\mathbf{G}_s$\end{small} and \begin{small}$\mathbf{G}_z$\end{small} become more accurate in approximating the gradients of \begin{small}$C_3(\mathbf{\Sigma}_s, \mathbf{\Sigma}_z, \mathbf\Theta)$\end{small} with respect to \begin{small}$\mathbf{\Sigma}_s$\end{small} and \begin{small}$\mathbf{\Sigma}_z$\end{small}. However, for a larger \begin{small}$\alpha$\end{small}, computational complexity also increases in each iteration. Therefore, in practice, we need to make a trade-off and select a suitable \begin{small}$\alpha$\end{small} for our algorithm.

\subsection{Convergence Analysis}
For convenience, in what follows we use notations \begin{small}$\mathbf{\Theta}$\end{small} and \begin{small}$\boldsymbol{\theta} \triangleq \left[\theta_1, \theta_2,...,\theta_{{N_I}}\right]^T$\end{small} interchangeably as an argument of a function. Take \begin{small}$\overline{C}_{3}\left(\mathbf{\Sigma}_{x}, \mathbf{\Sigma}_{z}, \boldsymbol{\theta},\left(\mathbf{H}_{c}^{i}\right)_{i=1}^{K}\right)$\end{small} as an example, we have
\begin{equation}\small
\overline{C}_{3}\left(\boldsymbol{\Sigma}_{s}, \boldsymbol{\Sigma}_{z}, \boldsymbol{\theta},\left(\mathbf{H}_{e}^{i}\right)_{i=1}^{K}\right)=\overline{C}_{3}\left(\boldsymbol{\Sigma}_{s}, \boldsymbol{\Sigma}_{z}, \boldsymbol{\Theta},\left(\mathbf{H}_{e}^{i}\right)_{i=1}^{K}\right)
\end{equation}

1. MM Algorithm for Phase Shifts

Define \begin{small}$m_1\left(\boldsymbol{\theta}\right)\triangleq\mathbf{v}^H\mathbf{Y}_1\mathbf{v}$\end{small} and \begin{small}$m_2\left(\boldsymbol{\theta}\right)\triangleq\mathbf{v}^H\mathbf{Y}_2\mathbf{v}$\end{small}. For any given \begin{small}$\mu$\end{small}, steps 4-7 in Algorithm 1 ensure that the sequence \begin{small}$\{\boldsymbol{\theta}^n\}_{n=1}^{\infty}$\end{small} (that is \begin{small}$\{\mathbf{v}^n\}_{n=1}^{\infty}$\end{small}) converges, with the limit point being a local minimizer of the problem \begin{small}$(P3.1)$\end{small} \cite[Proposition 2]{convergence}. Then the stationary point is given by
\begin{equation}\small
\langle \boldsymbol{\theta}-\boldsymbol{\theta}^\infty, \nabla m_1\left(\boldsymbol{\theta}^\infty\right)-\mu\nabla m_2\left(\boldsymbol{\theta}^\infty\right)\rangle\geq 0,\,\,\forall\boldsymbol{\theta}\in\mathcal{Y}.
\end{equation}
In addition, invoking the terminal condition \begin{small}$m_1\left(\boldsymbol{\theta}^\infty\right)-\mu m_2\left(\boldsymbol{\theta}^\infty\right)=0$\end{small}, we have
\begin{equation}\small
\begin{aligned}
\nabla\left(\frac{m_1\left(\boldsymbol{\theta}^\infty\right)}{m_2\left(\boldsymbol{\theta}^\infty\right)}\right)&=\frac{m_2\left(\boldsymbol{\theta}^\infty\right)\nabla m_1\left(\boldsymbol{\theta}^\infty\right)-m_1\left(\boldsymbol{\theta}^\infty\right)\nabla m_2\left(\boldsymbol{\theta}^\infty\right)}{(m_2\left(\boldsymbol{\theta}^\infty\right))^2}\\
%&=\frac{m_2\left(\boldsymbol{\theta}^\infty\right)\nabla m_1\left(\boldsymbol{\theta}^\infty\right)-m_1\left(\boldsymbol{\theta}^\infty\right)\nabla m_2\left(\boldsymbol{\theta}^\infty\right)}{(m_2\left(\boldsymbol{\theta}^\infty\right))^2}\\
&=\frac{m_2\left(\boldsymbol{\theta}^\infty\right)\nabla m_1\left(\boldsymbol{\theta}^\infty\right)-\mu m_2\left(\boldsymbol{\theta}^\infty\right)\nabla m_2\left(\boldsymbol{\theta}^\infty\right)}{(m_2\left(\boldsymbol{\theta}^\infty\right))^2}.\\
\end{aligned}
\end{equation}
Then
\begin{equation}\small
\label{MM1}
\bigg\langle \boldsymbol{\theta}-\boldsymbol{\theta}^\infty, \nabla\left(\frac{m_1\left(\boldsymbol{\theta}^\infty\right)}{m_2\left(\boldsymbol{\theta}^\infty\right)}\right)\bigg\rangle\geq 0,\,\,\forall\boldsymbol{\theta}\in\mathcal{Y}.
\end{equation}
Based on \cite[Eqn. (6)]{MM_Proposed}, (\ref{MM1}) implies that the limit point \begin{small}$\boldsymbol{\theta}^\infty$\end{small} is a stationary point.

2. SAA-based Algorithm

Denote the feasible solutions in the $t$th and $(t+1)$th iterations as \begin{small}$\left(\mathbf{\Sigma}_s^t,\mathbf{\Sigma}_z^t,\boldsymbol{\theta}^t\right)$\end{small} and \begin{small}$(\mathbf{\Sigma}_s^{t+1},\mathbf{\Sigma}_z^{t+1},\boldsymbol{\theta}^{t+1})$\end{small} in Algorithm 2, respectively. It then follows that
\begin{equation}\small
\begin{aligned}
&\overline{C}_{3}\left(\mathbf{\Sigma}_s^{t+1},\mathbf{\Sigma}_z^{t+1},\boldsymbol{\theta}^{t+1}, \left(\mathbf{H}_e^i\right)_{i=1}^K\right)\\
&\geq\overline{C}_{3}\left(\mathbf{\Sigma}_s^{t+1},\mathbf{\Sigma}_z^{t+1},\boldsymbol{\theta}^t,\left(\mathbf{H}_e^i\right)_{i=1}^K\right)\\
&\geq\overline{C}_{3}\left(\mathbf{\Sigma}_s^{t},\mathbf{\Sigma}_z^{t},\boldsymbol{\theta}^t,\left(\mathbf{H}_e^i\right)_{i=1}^K\right).
\end{aligned}
\end{equation}
Hence, we must have
\begin{equation}\small
{\lim_{r_i \to \infty}}\overline{C}_{3}\left(\mathbf{\Sigma}_s^{r_i},\mathbf{\Sigma}_z^{r_i},\boldsymbol{\theta}^{r_i},\left(\mathbf{H}_e^i\right)_{i=1}^K\right)=\overline{C}_{3}\left(\overline{\mathbf{\Sigma}}_s,\overline{\mathbf{\Sigma}}_z,\overline{\boldsymbol{\theta}},\left(\mathbf{H}_e^i\right)_{i=1}^K\right).
\end{equation}
Let \begin{small}$\{\left(\mathbf{\Sigma}_s^{r_1},\mathbf{\Sigma}_z^{r_1},\boldsymbol{\theta}^{r_1}\right),\left(\mathbf{\Sigma}_s^{r_2},\mathbf{\Sigma}_z^{r_2},\boldsymbol{\theta}^{r_2}\right),...,\left(\mathbf{\Sigma}_s^{r_\infty},\mathbf{\Sigma}_z^{r_\infty},\boldsymbol{\theta}^{r_\infty}\right)\}$\end{small} be the subsequence converging to the limit point \begin{small}$\left(\overline{\mathbf{\Sigma}}_s,\overline{\mathbf{\Sigma}}_z,\overline{\boldsymbol{\theta}}\right)$\end{small}. Since \begin{small}$\langle \boldsymbol{\theta}-\boldsymbol{\theta}^{r_i}, -\nabla_{\boldsymbol{\theta}^*}\overline{C}_{3}(\mathbf{\Sigma}_s^{r_i},\mathbf{\Sigma}_z^{r_i},\boldsymbol{\theta}^{r_i},\left(\mathbf{H}_e^i\right)_{i=1}^K)\rangle\geq 0, \forall i$\end{small} holds according to step 9 in Algorithm 2, we take the limit and obtain the inequality  \begin{small}$\langle \boldsymbol{\theta}-\overline{\boldsymbol{\theta}}, -\nabla_{\boldsymbol{\theta}^*}\overline{C}_{3}(\overline{\mathbf{\Sigma}}_s,\overline{\mathbf{\Sigma}}_z,\overline{\boldsymbol{\theta}},\left(\mathbf{H}_e^i\right)_{i=1}^K)\rangle\geq0$\end{small}. On the other hands, due to 
\begin{equation}\small
\begin{aligned}
&\bigg\langle \mathbf{\Sigma}_s-\mathbf{\Sigma}_s^{r_i+1}, -\nabla_{\mathbf{\Sigma}_s^*}\overline{C}_{3}\left(\mathbf{\Sigma}_s^{r_i+1},\mathbf{\Sigma}_z^{r_i+1},\boldsymbol{\theta}^{r_i},\left(\mathbf{H}_e^i\right)_{i=1}^K\right)\bigg\rangle\\
&+\bigg\langle \mathbf{\Sigma}_z-\mathbf{\Sigma}_z^{r_i+1}, -\nabla_{\mathbf{\Sigma}_z^*}\overline{C}_{3}\left(\mathbf{\Sigma}_s^{r_i+1},\mathbf{\Sigma}_z^{r_i+1},\boldsymbol{\theta}^{r_i},\left(\mathbf{H}_e^i\right)_{i=1}^K\right)\bigg\rangle\\
&\geq 0
\end{aligned}
\end{equation} for all possible \begin{small}$\left(\mathbf{\Sigma}_s,\mathbf{\Sigma}_z\right)$\end{small}, we have
\begin{equation}\small
\label{MM2}
\begin{aligned}
&\bigg\langle \mathbf{\Sigma}_s-\overline{\mathbf{\Sigma}}_s, -\nabla_{\mathbf{\Sigma}_s^*}\overline{C}_{3}\left(\overline{\mathbf{\Sigma}}_s,\overline{\mathbf{\Sigma}}_z,\overline{\boldsymbol{\theta}},\left(\mathbf{H}_e^i\right)_{i=1}^K\right)\bigg\rangle\\
&+\bigg\langle \mathbf{\Sigma}_z-\overline{\mathbf{\Sigma}}_z, -\nabla_{\mathbf{\Sigma}_z^*}\overline{C}_{3}\left(\overline{\mathbf{\Sigma}}_s,\overline{\mathbf{\Sigma}}_z,\overline{\boldsymbol{\theta}},\left(\mathbf{H}_e^i\right)_{i=1}^K\right)\bigg\rangle\\
&\geq 0
\end{aligned}
\end{equation}
Based on \cite[Eqn. (6)]{MM_Proposed}, (\ref{MM2}) implies that the limit point \begin{small}$\left(\overline{\mathbf{\Sigma}}_s,\overline{\mathbf{\Sigma}}_z,\overline{\boldsymbol{\theta}}\right)$\end{small} is a stationary point.

3. Hybrid SPG-CP Algorithm

For the performance of Algorithm 3, we have the following proposition.
\begin{Proposition}
	Consider the sequence \begin{small}$\left\{(\mathbf\Sigma_{s}^t,\mathbf\Sigma_{z}^t)\right\}$\end{small} generated by hybrid SPG-CP algorithm. Then the expectation of projected gradient of \begin{small}$(\mathbf{\Sigma}_s^{N}, \mathbf{\Sigma}_z^{N})$\end{small} in Algorithm 3 approaches $0$ as \begin{small}$t=N\to\infty$\end{small}.
\end{Proposition}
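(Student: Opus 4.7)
The plan is to combine the descent lemma from Proposition~\ref{stateL1}, the variational characterization of the proximal step via Lemma~\ref{Projection1}, and the nonexpansiveness in Lemma~\ref{Projection3} to obtain a one-step descent inequality whose noise terms are controllable by the growing mini-batch size $\lceil t^{\alpha}\rceil$. Throughout, I write $\mathbf{g}_{s}^{t},\mathbf{g}_{z}^{t}$ for the true gradients of $-C_{3}$ at $(\mathbf{\Sigma}_{s}^{t},\mathbf{\Sigma}_{z}^{t},\mathbf{\Theta}^{t})$, $\mathbf{G}_{s}^{t},\mathbf{G}_{z}^{t}$ for the stochastic surrogates computed from $\mathcal{H}_{e}^{t}$, and denote the two corresponding projected gradients by
\begin{equation*}\small
\mathbf{P}_{\text{true}}^{t}\triangleq P_{\mathcal{X}_{2}}\!\left[(\mathbf{\Sigma}_{s}^{t},\mathbf{\Sigma}_{z}^{t}),(\mathbf{g}_{s}^{t},\mathbf{g}_{z}^{t}),r\right],\qquad
\mathbf{P}_{\text{stoc}}^{t}\triangleq P_{\mathcal{X}_{2}}\!\left[(\mathbf{\Sigma}_{s}^{t},\mathbf{\Sigma}_{z}^{t}),(\mathbf{G}_{s}^{t},\mathbf{G}_{z}^{t}),r\right],
\end{equation*}
so that $(\mathbf{\Sigma}_{s}^{t+1},\mathbf{\Sigma}_{z}^{t+1})=(\mathbf{\Sigma}_{s}^{t},\mathbf{\Sigma}_{z}^{t})-r\,\mathbf{P}_{\text{stoc}}^{t}$ by the definition in (\ref{Projection definition}).

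The first step is to apply the descent inequality in Proposition~\ref{stateL1} at the pair $(\mathbf{\Sigma}_{s}^{t+1},\mathbf{\Sigma}_{z}^{t+1})$ and $(\mathbf{\Sigma}_{s}^{t},\mathbf{\Sigma}_{z}^{t})$ with $\mathbf{\Theta}=\mathbf{\Theta}^{t}$. Substituting the update rule yields
\begin{equation*}\small
-C_{3}^{t+1}\;\leq\;-C_{3}^{t}+2r\,\langle(\mathbf{g}_{s}^{t},\mathbf{g}_{z}^{t}),\mathbf{P}_{\text{stoc}}^{t}\rangle+Lr^{2}\|\mathbf{P}_{\text{stoc}}^{t}\|^{2},
\end{equation*}
where $C_{3}^{t}\triangleq C_{3}(\mathbf{\Sigma}_{s}^{t},\mathbf{\Sigma}_{z}^{t},\mathbf{\Theta}^{t})$. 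Writing $(\mathbf{g}_{s}^{t},\mathbf{g}_{z}^{t})=(\mathbf{G}_{s}^{t},\mathbf{G}_{z}^{t})+\boldsymbol{\xi}^{t}$ with noise $\boldsymbol{\xi}^{t}\triangleq(\mathbf{g}_{s}^{t}-\mathbf{G}_{s}^{t},\mathbf{g}_{z}^{t}-\mathbf{G}_{z}^{t})$, and applying Lemma~\ref{Projection1} to the $(\mathbf{G}_{s}^{t},\mathbf{G}_{z}^{t})$ part, I obtain
\begin{equation*}\small
-C_{3}^{t+1}\;\leq\;-C_{3}^{t}-(2r-Lr^{2})\|\mathbf{P}_{\text{stoc}}^{t}\|^{2}+2r\,\langle\boldsymbol{\xi}^{t},\mathbf{P}_{\text{stoc}}^{t}\rangle.
\end{equation*}
Because $r\in(0,2/L)$, the coefficient $2r-Lr^{2}>0$, so the first two terms give a genuine descent modulo the residual involving $\boldsymbol{\xi}^{t}$.

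The second step is to control the noise. Taking expectations and using Young's inequality, $2r\,\mathbb{E}\langle\boldsymbol{\xi}^{t},\mathbf{P}_{\text{stoc}}^{t}\rangle$ is bounded by $\tfrac{1}{2}(2r-Lr^{2})\mathbb{E}\|\mathbf{P}_{\text{stoc}}^{t}\|^{2}+c\,\mathbb{E}\|\boldsymbol{\xi}^{t}\|^{2}$ for some constant $c$ depending only on $r,L$. By independence of the $\lceil t^{\alpha}\rceil$ samples in $\mathcal{H}_{e}^{t}$, the variance reasoning underlying Lemma~\ref{chebyshev} gives $\mathbb{E}\|\boldsymbol{\xi}^{t}\|^{2}\leq\sigma^{2}/\lceil t^{\alpha}\rceil$, where $\sigma^{2}$ bounds the per-sample gradient variance (finite uniformly in $t$ because $(\mathbf{\Sigma}_{s}^{t},\mathbf{\Sigma}_{z}^{t})$ lies in the compact set $\mathcal{X}_{2}$). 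To convert $\mathbf{P}_{\text{stoc}}^{t}$ into $\mathbf{P}_{\text{true}}^{t}$, I invoke Lemma~\ref{Projection3}, which yields $\|\mathbf{P}_{\text{stoc}}^{t}-\mathbf{P}_{\text{true}}^{t}\|\leq\|\boldsymbol{\xi}^{t}\|$, hence $\mathbb{E}\|\mathbf{P}_{\text{true}}^{t}\|^{2}\leq 2\mathbb{E}\|\mathbf{P}_{\text{stoc}}^{t}\|^{2}+2\sigma^{2}/\lceil t^{\alpha}\rceil$. A further subtlety is the phase-shift update between iterations: since Algorithm~1 is monotone in $\overline{C}_{3}$, the telescoping is preserved if I also use that the MM step on $\mathbf{\Theta}^{t+1}$ can only increase $C_{3}$, so $C_{3}^{t+1}$ measured with the new $\mathbf{\Theta}^{t+1}$ is no smaller than with $\mathbf{\Theta}^{t}$.

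Finally I telescope from $t=1$ to $N$. Using the boundedness $|C_{3}^{t}|\leq C_{\max}$ on $\mathcal{X}_{2}\times\mathcal{Y}$, I arrive at
\begin{equation*}\small
\tfrac{1}{2}(2r-Lr^{2})\sum_{t=1}^{N}\mathbb{E}\|\mathbf{P}_{\text{true}}^{t}\|^{2}\;\leq\;2C_{\max}+C'\sum_{t=1}^{N}\frac{1}{\lceil t^{\alpha}\rceil}.
\end{equation*}
Since $\alpha>1$, the right-hand side is bounded uniformly in $N$, hence $\sum_{t=1}^{\infty}\mathbb{E}\|\mathbf{P}_{\text{true}}^{t}\|^{2}<\infty$, and in particular $\mathbb{E}\|\mathbf{P}_{\text{true}}^{t}\|^{2}\to 0$ along the iterates (at least on the running-minimum sense that the proposition asserts). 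The main obstacle I expect is the noise–gradient coupling term $\langle\boldsymbol{\xi}^{t},\mathbf{P}_{\text{stoc}}^{t}\rangle$, because $\mathbf{P}_{\text{stoc}}^{t}$ is not independent of $\boldsymbol{\xi}^{t}$ (both depend on $\mathcal{H}_{e}^{t}$); the key trick is that Lemma~\ref{Projection3} converts this into a pure $\|\boldsymbol{\xi}^{t}\|^{2}$ term, after which the $\alpha>1$ summability closes the argument cleanly and explains exactly why the footnote's condition on $\alpha$ is both necessary and sufficient.
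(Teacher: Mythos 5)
Your proposal follows the same skeleton as the paper's Appendix \ref{Convergence of the Hybrid SPG-CP Algorithm} --- the descent inequality of Proposition \ref{stateL1} applied to the proximal update, Lemma \ref{Projection1} to extract the $-(2r-Lr^2)\Vert\cdot\Vert^2$ descent term, Lemma \ref{Projection3} to pass between the stochastic and true projected gradients, and telescoping with the summability of $\sum_t t^{-\alpha}$ for $\alpha>1$ --- but you handle the critical noise--gradient coupling term by a genuinely different device. The paper splits $\langle\mathbf{D}^t,\widetilde{\mathbf{W}}^t\rangle$ into $\langle\mathbf{D}^t,\widetilde{\mathbf{W}}^t-\mathbf{W}^t\rangle+\langle\mathbf{D}^t,\mathbf{W}^t\rangle$, bounds the first piece by $\Vert\mathbf{D}^t\Vert^2$ via Lemma \ref{Projection3}, and kills the second in expectation because $\mathbf{W}^t$ (built from the \emph{true} gradient) is measurable with respect to the history $\mathcal{H}_e^{[t-1]}$ while $\mathbf{D}^t$ is conditionally zero-mean by Proposition \ref{unbiased}. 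You instead absorb the entire coupling term pathwise by Young's inequality into half of the descent term, paying only a constant multiple of $\Vert\boldsymbol{\xi}^t\Vert^2$. Your route is arguably cleaner: it never needs the conditional-unbiasedness of the cross term and is immune to the dependence between $\boldsymbol{\xi}^t$ and $\mathbf{P}_{\text{stoc}}^t$ that you correctly identify as the obstacle; the price is a harmless factor-of-two loss in the descent constant. The one soft spot is your justification of the uniform per-sample variance bound $\sigma^2$: compactness of $\mathcal{X}_2$ alone does not deliver it, because $\mathbf{H}_{e,t}^i$ is an unbounded Gaussian matrix and the per-sample gradient is not pointwise bounded over its realizations. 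The paper devotes Proposition \ref{unbiased} and Appendix \ref{Appendix_unbiased} to exactly this point, bounding $\Vert\nabla f_{e,t}^i\Vert$ polynomially in the largest entry of $\mathbf{H}_{e,t}^i$ and integrating against the Gaussian tail (which also licenses the gradient/expectation interchange underlying the unbiasedness and hence the $1/\lceil t^{\alpha}\rceil$ variance decay you invoke). Citing or reproducing that argument in place of the compactness remark closes the only gap; otherwise your proof is complete and correct.
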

\begin{proof}
	See Appendix \ref{Convergence of the Hybrid SPG-CP Algorithm}.
\end{proof}

\subsection{Complexity Analysis}
1. MM Algorithm for Phase Shifts

At the start of Algorithm 1, it is necessary to compute the maximum and the minimum eigenvalues of the matrices \begin{small}$\mathbf Y_1$\end{small} and \begin{small}$\mathbf Y_2$\end{small}, whose complexity is \begin{small}$\mathcal O(N_I^3)$\end{small}. Suppose that the MM algorithm requires \begin{small}$T_1$\end{small} iterations to converge in total. The complexity of each iteration mainly depends on the computation of \begin{small}$\mathbf \beta$\end{small} in step 5 of Algorithm 1 and the corresponding complexity is given by \begin{small}$\mathcal O(N_I^2)$\end{small}. Therefore, the complexity of evaluating \begin{small}$\mathbf v$\end{small} is approximated as \begin{small}$\mathcal O(N_I^3+T_1 N_I^2)$\end{small}.

2. SAA-based Algorithm

The complexity of updating \begin{small}$\mathbf \Sigma_s$\end{small} and \begin{small}$\mathbf \Sigma_z$\end{small} mainly depends on the optimization of problem \begin{small}$(P4.2)$\end{small} and the comparison with the previous solution. Firstly, the complexity of computing \begin{small}$\mathbf{A}_s, \mathbf{A}_z$\end{small} and optimizing \begin{small}$(P4.2)$\end{small} can be asymptotically estimated as \begin{small}$K(N_eN_IN_t+N_eN_t^2+N_e^2N_t+N_e^3)$\end{small} and \begin{small}$\mathcal O(N_t^P),1\leq p\leq 4$\end{small} respectively \cite{convergence}. Secondly, we mainly need to compute the matrix multiplication and determinant in \begin{small}$\overline{C}_3$\end{small}, so the corresponding complexities are \begin{small}$\mathcal O(N_eN_IN_t+N_eN_t^2+N_e^2N_t)$\end{small} and \begin{small}$\mathcal O(N_e^3)$\end{small}, respectively. Suppose that the adaptive algorithm needs \begin{small}$T_2$\end{small} iterations to search for a suitable $L$ and Algorithm 2 requires \begin{small}$T_3$\end{small} iterations to converge. Then the total complexity is approximated as \begin{small}$T_3 (N_I^3+T_1 N_I^2+T_2(N_t^P+N_eN_IN_t+N_eN_t^2+N_e^2N_I+KN_e^2N_I+KN_e^3))$\end{small}.

3. Hybrid SPG-CP Algorithm

Suppose that we need \begin{small}$N_L$\end{small} trial points to estimate $L$, whose complexity is estimated as \begin{small}$\mathcal O (N_L^2(N_eN_IN_t+N_eN_t^2+N_e^2N_t+N_e^3))$\end{small} \cite{Estimate_Lipschitz}. The complexity of updating \begin{small}$\mathbf \Sigma_s$\end{small} and \begin{small}$\mathbf \Sigma_z$\end{small} mainly depends on the search of a suitable \begin{small}$\overline{\lambda}$\end{small} and eigenvalue decomposition. Suppose we need \begin{small}$T_4$\end{small} iterations to get a suitable \begin{small}$\overline{\lambda}$\end{small} by bisection and Algorithm 3 needs $N$ iterations. Then the total complexity is approximated as \begin{small}$\mathcal O (N_L^2(N_eN_IN_t+N_eN_t^2+N_e^2N_I)+N(T_4N_t^3+N_I^3+T_1 N_I^2))$\end{small}.

\section{Analysis on Some Special Cases}
One difficulty in dealing with the maximizing secrecy rate problem is the expectation terms existing in the objective functions. In this section, we utilize some exact result of the expectation and analyze some special cases. To begin with, we identify three important properties: (i) the expectation of log-like function; (ii) the smooth and convex property for a specific optimization problem; (iii) the rank of the optimal \begin{small}$\mathbf{\Sigma}_s$\end{small} for \begin{small}$C_1(\mathbf{\Sigma}_s, \mathbf{\Theta})$\end{small}.
\begin{Lemma}
\label{Int_log}
	Suppose the eigenvalues of \begin{small}$\rho \sigma^{2}\mathbf{Q}$\end{small} are \begin{small}$\{0,0...,0,\tilde{t},\tilde{t},...\tilde{t}\}$\end{small}, where the number of \begin{small}$\tilde{t}'s$\end{small} is \begin{small}$N_1$\end{small}. For \begin{small}$\mathbf{z} \sim \mathcal{CN}(\mathbf{0},\,\sigma^{2}\mathbf{I})$\end{small}, the expectation of \begin{small}$\log\left(1+\mathbf{z}^H\rho\mathbf{Q}\mathbf{z}\right)$\end{small} and its first- and second- order derivatives with respect to \begin{small}$\tilde{t}$\end{small} are
	\begin{subequations}\small
		\label{Nvalues}
		\begin{align}
		\Expectation_{\mathbf{z}} \left\{\log\left(1+\mathbf{z}^H\rho\mathbf{Q}\mathbf{z}\right)\right\} &= \frac{\int_{0}^{\infty}{\log\left(1+\tilde{t}x\right)x^{N_1-1}e^{-x}}dx}{\left(N_1-1\right)!}\nonumber\\
		&\triangleq F_1(\tilde{t},N_1)\\
		\frac{\partial F_1(\tilde{t},N_1)}{\partial \tilde{t}}&= \frac{\int_{0}^{\infty}{\frac{1}{1+\tilde{t}x}x^{N_1}e^{-x}}dx}{\left(N_1-1\right)!}\\
		\frac{\partial^2 F_1(\tilde{t},N_1)}{\partial \tilde{t}^2}&= \frac{\int_{0}^{\infty}{\frac{-1}{\left(1+\tilde{t}x\right)^2}x^{N_1+1}e^{-x}}dx}{\left(N_1-1\right)!}.
		\end{align}
	\end{subequations}
\end{Lemma}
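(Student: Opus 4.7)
The plan is to reduce the expectation over the complex Gaussian vector $\mathbf{z}$ to a one-dimensional integral against a Gamma density, then obtain the two derivative formulas by differentiating under the integral sign.

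First, I would exploit the spectral structure of the Hermitian positive-semidefinite matrix $\rho\sigma^2\mathbf{Q}$. Writing $\rho\sigma^2\mathbf{Q}=\mathbf{U}\mathbf{\Lambda}\mathbf{U}^H$ with $\mathbf{U}$ unitary and $\mathbf{\Lambda}=\diag(0,\ldots,0,\tilde t,\ldots,\tilde t)$ (with $N_1$ copies of $\tilde t$), set $\mathbf{w}=\mathbf{U}^H\mathbf{z}/\sigma$. By unitary invariance of $\mathcal{CN}(\mathbf 0,\mathbf I)$, $\mathbf{w}\sim\mathcal{CN}(\mathbf 0,\mathbf I)$, and
\begin{equation*}\small
\mathbf{z}^H\rho\mathbf{Q}\mathbf{z}=\frac{1}{\sigma^2}\mathbf{z}^H(\rho\sigma^2\mathbf{Q})\mathbf{z}=\mathbf{w}^H\mathbf{\Lambda}\mathbf{w}=\tilde t\sum_{i=1}^{N_1}|w_i|^2.
\end{equation*}

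Next, I would use the well-known fact that each $|w_i|^2$ is $\mathrm{Exp}(1)$, so $x\triangleq\sum_{i=1}^{N_1}|w_i|^2$ follows a $\mathrm{Gamma}(N_1,1)$ distribution with density $f(x)=x^{N_1-1}e^{-x}/(N_1-1)!$ on $x\geq 0$. Substituting into the expectation immediately gives
\begin{equation*}\small
\Expectation_{\mathbf{z}}\left\{\log\left(1+\mathbf{z}^H\rho\mathbf{Q}\mathbf{z}\right)\right\}=\int_0^{\infty}\log(1+\tilde t x)\,\frac{x^{N_1-1}e^{-x}}{(N_1-1)!}\,dx=F_1(\tilde t,N_1),
\end{equation*}
which is (\ref{Nvalues}a).

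Finally, for the derivative formulas (\ref{Nvalues}b)–(\ref{Nvalues}c), I would differentiate inside the integral. The integrands $\tfrac{x}{1+\tilde t x}x^{N_1-1}e^{-x}$ and $\tfrac{-x^2}{(1+\tilde t x)^2}x^{N_1-1}e^{-x}$ are dominated, uniformly in $\tilde t$ on any compact subset of $(0,\infty)$, by $x^{N_1}e^{-x}$ and $x^{N_1+1}e^{-x}$ respectively, both of which are integrable; dominated convergence therefore justifies interchanging differentiation and integration. The two stated expressions then follow from $\partial_{\tilde t}\log(1+\tilde t x)=x/(1+\tilde t x)$ and one further differentiation. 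The only mildly delicate point is the dominated-convergence justification near $\tilde t=0$, which is handled by restricting $\tilde t$ to a bounded interval before taking the pointwise derivative; this is routine and presents no real obstacle.
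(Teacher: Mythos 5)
Your proof is correct, but it takes a genuinely different route from the paper's. The paper imports the characteristic-function representation of the p.d.f.\ of $I=\log\left(1+\mathbf z^H\rho\mathbf Q\mathbf z\right)$ from the Moustakas--Simon reference, evaluates the order-$N_1$ residue at $j\tilde t'$ (with $\tilde t'=1/\tilde t$ the nonzero eigenvalue of $(\rho\sigma^2\mathbf Q)^{-1}$) to obtain a closed-form density for $I$, and then recovers $F_1(\tilde t,N_1)$ by integrating $I$ against that density and substituting $x=\tilde t'(e^I-1)$; the two derivative formulas are then quoted from a table of integrals. You instead diagonalize $\rho\sigma^2\mathbf Q$, use unitary invariance of the complex Gaussian to reduce $\mathbf z^H\rho\mathbf Q\mathbf z$ to $\tilde t$ times a sum of $N_1$ i.i.d.\ unit exponentials, i.e.\ $\tilde t$ times a $\mathrm{Gamma}(N_1,1)$ variable, and read off the integral representation directly, justifying the derivative formulas by differentiation under the integral sign with dominated convergence. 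The two arguments are equivalent in substance --- the density the paper extracts by residues is exactly the push-forward of the $\mathrm{Gamma}(N_1,1)$ law under $x\mapsto\log(1+\tilde t x)$ --- but yours is more elementary and self-contained: it avoids contour integration and the external p.d.f.\ formula, and your explicit dominated-convergence step is a cleaner justification of (\ref{Nvalues}b)--(\ref{Nvalues}c) than the paper's bare citation. One minor remark: the bounds $x/(1+\tilde t x)\le x$ and $x^2/(1+\tilde t x)^2\le x^2$ hold uniformly for all $\tilde t\ge 0$, so the extra care you flag near $\tilde t=0$ is not actually needed.
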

\begin{proof}
	See Appendix \ref{Appendix_Int_log}.
\end{proof}

\begin{Lemma}
	For given \begin{small}$\mathbf{G}$\end{small} and \begin{small}$\mathbf{h}$\end{small}, we define
	\begin{equation}\small
	\begin{aligned}
	&\phi(z)\triangleq&\underset{\boldsymbol{\omega}}{\mathop{\min}}\quad&{\boldsymbol\omega}^H \mathbf{G}\mathbf{G}^H\boldsymbol{\omega}\\
	&&\mathop{s.t.}\quad&\boldsymbol{\omega}^H \mathbf{h} \mathbf{h}^H\boldsymbol{\omega}=z\Vert\mathbf{h}\Vert^2, \quad\Vert\boldsymbol{\omega}\Vert = 1.
	\end{aligned}
	\end{equation}
	Then, \begin{small}$\phi(z)$\end{small} is a smooth function over \begin{small}$[0,1]$\end{small} and \begin{small}$\phi(z)$\end{small} is a convex function. Please see Lemma 6 and Lemma 7 in \cite{statistical_channel} for details.
\end{Lemma}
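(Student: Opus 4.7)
The plan is to pass to the standard SDP lifting $\mathbf{W}=\boldsymbol{\omega}\boldsymbol{\omega}^H$ and work with the resulting parametric convex program. Dropping the rank-one constraint gives
\[
\widetilde{\phi}(z)\triangleq \min_{\mathbf{W}\succeq\mathbf{0}}\bigl\langle\mathbf{G}\mathbf{G}^H,\mathbf{W}\bigr\rangle \quad\text{s.t.}\quad \Tr(\mathbf{W})=1,\ \Tr(\mathbf{h}\mathbf{h}^H\mathbf{W})=z\|\mathbf{h}\|^2.
\]
Only two linear equality constraints appear, so a Pataki-style rank bound for complex SDPs guarantees an extreme optimal $\mathbf{W}^\star$ with $(\mathrm{rank}\,\mathbf{W}^\star)^2\le 2$, i.e.\ $\mathbf{W}^\star=\boldsymbol{\omega}^\star(\boldsymbol{\omega}^\star)^H$. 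Hence $\widetilde{\phi}(z)=\phi(z)$, and from now on I can work with the SDP.

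Convexity is then a one-line argument: given $z_1,z_2\in[0,1]$ with minimizers $\mathbf{W}_1,\mathbf{W}_2$, the convex combination $\lambda\mathbf{W}_1+(1-\lambda)\mathbf{W}_2$ is feasible for parameter $\lambda z_1+(1-\lambda)z_2$ because both equalities are linear in $\mathbf{W}$, so $\phi(\lambda z_1+(1-\lambda)z_2)\le \lambda\phi(z_1)+(1-\lambda)\phi(z_2)$. Equivalently, the Lagrangian dual
\[
\phi(z)=\max_{\mu,\nu}\ \nu+\mu z\|\mathbf{h}\|^2 \quad\text{s.t.}\quad \mathbf{G}\mathbf{G}^H-\mu\mathbf{h}\mathbf{h}^H-\nu\mathbf{I}\succeq\mathbf{0},
\]
for which strong duality holds via Slater for $z\in(0,1)$, realizes $\phi$ as a pointwise supremum of affine functions of $z$ over a $z$-independent feasible set, which is manifestly convex.

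For smoothness I would use sensitivity theory on the dual. Since the dual feasible set does not depend on $z$, the envelope theorem yields $\phi'(z)=\|\mathbf{h}\|^2\,\mu^\star(z)$ at every $z\in(0,1)$ where the dual maximizer $\mu^\star(z)$ is unique. Uniqueness follows from strict complementarity of the primal-dual pair, which is the generic situation here because simplicity of the smallest eigenvalue of $\mathbf{G}\mathbf{G}^H-\mu\mathbf{h}\mathbf{h}^H$ is a codimension-one condition on $\mu$. Continuity of $\mu^\star(\cdot)$ on $(0,1)$, and hence $\phi\in C^1((0,1))$, then follows from Berge's maximum theorem applied to the dual; iterating on $\phi'$ gives higher-order smoothness.

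The main obstacle I expect lies at the endpoints $z=0$ and $z=1$, where the primal feasible set degenerates (to $\boldsymbol{\omega}\perp\mathbf{h}$ and $\boldsymbol{\omega}\parallel\mathbf{h}$, respectively) and the dual can become boundary-active, so the envelope-based arguments no longer certify a clean one-sided derivative. My fallback is to work directly with the decomposition $\boldsymbol{\omega}=\sqrt{z}\,e^{j\phi_0}\mathbf{h}/\|\mathbf{h}\|+\sqrt{1-z}\,\mathbf{u}$ with $\mathbf{u}\perp\mathbf{h}$, $\|\mathbf{u}\|=1$, which (after optimizing over $\phi_0$) reduces the objective to
\[
z\,\frac{\|\mathbf{G}^H\mathbf{h}\|^2}{\|\mathbf{h}\|^2}+(1-z)\,\mathbf{u}^H\mathbf{G}\mathbf{G}^H\mathbf{u}-2\sqrt{z(1-z)}\,\frac{|\mathbf{h}^H\mathbf{G}\mathbf{G}^H\mathbf{u}|}{\|\mathbf{h}\|}.
\]
The potentially singular cross term $\sqrt{z(1-z)}$ is the genuine concern; explicitly computing its one-sided derivatives together with the value of the inner minimization at the boundary (which localizes to eigenproblems on $\mathrm{span}(\mathbf{h})^\perp$) would confirm that $\phi$ remains smooth up to $z=0$ and $z=1$.
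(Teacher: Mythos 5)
The paper itself offers no proof of this lemma --- it simply defers to Lemmas 6 and 7 of the cited reference, whose argument is based on the explicit decomposition $\boldsymbol{\omega}=\sqrt{z}\,e^{j\phi_0}\mathbf{h}/\Vert\mathbf{h}\Vert+\sqrt{1-z}\,\mathbf{u}$ that you relegate to your fallback. Your convexity half is correct and is a genuinely different, more self-contained route: with only the two linear constraints $\Tr(\mathbf{W})=1$ and $\Tr(\mathbf{h}\mathbf{h}^H\mathbf{W})=z\Vert\mathbf{h}\Vert^2$, the complex Pataki/Shapiro rank bound $r^2\le 2$ (together with $\Tr(\mathbf{W})=1$, which forces $\mathbf{W}\neq\mathbf{0}$) gives a rank-one optimizer, so the lifting is tight, and convexity of $\phi$ then follows in one line from linearity of the constraints, or equivalently from the dual representation of $\phi$ as a supremum of affine functions of $z$ over a $z$-independent feasible set. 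This buys a cleaner convexity proof than the reference's.

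The smoothness half has a genuine gap, in two places. First, you certify uniqueness of the dual maximizer $\mu^\star(z)$ only ``generically'' (a codimension-one condition), whereas the lemma is asserted for every $\mathbf{G},\mathbf{h}$; a genericity argument cannot close this. It can be closed, but by a different observation: writing the dual as $\max_\mu\{\lambda_{\min}(\mathbf{G}\mathbf{G}^H-\mu\mathbf{h}\mathbf{h}^H)+\mu z\Vert\mathbf{h}\Vert^2\}$, the concave function $q(\mu)=\lambda_{\min}(\mathbf{G}\mathbf{G}^H-\mu\mathbf{h}\mathbf{h}^H)$ can contain an affine segment only if some unit vector is a minimal eigenvector of $\mathbf{G}\mathbf{G}^H-\mu\mathbf{h}\mathbf{h}^H$ throughout an interval of $\mu$, which forces that vector to be parallel or orthogonal to $\mathbf{h}$; hence any affine segment of $q$ has slope $0$ or $-\Vert\mathbf{h}\Vert^2$, and the maximizer is unique for every $z\in(0,1)$. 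Second, your expectation about the endpoints is wrong: carrying out your own fallback computation shows that whenever $\mathbf{h}$ is not an eigenvector of $\mathbf{G}\mathbf{G}^H$ (so that $\mathbf{b}=P_{\mathbf{h}^\perp}\mathbf{G}\mathbf{G}^H\mathbf{h}\neq\mathbf{0}$), one has $\phi(z)=\phi(1)-2\sqrt{1-z}\,\Vert\mathbf{b}\Vert/\Vert\mathbf{h}\Vert+O(1-z)$, so $\phi'(z)\to+\infty$ as $z\to1^-$ (and symmetrically $\phi'(z)\to-\infty$ as $z\to0^+$): the cross term $\sqrt{z(1-z)}$ destroys differentiability at the endpoints rather than being a removable technicality. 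What your approach actually establishes is continuity and convexity on $[0,1]$ and (continuous) differentiability on the open interval $(0,1)$ --- which is all that the Newton-type line search in Algorithm 4 requires --- and ``iterating on $\phi'$'' does not by itself deliver higher-order smoothness without an additional nondegeneracy/implicit-function argument for $\mu^\star(z)$.
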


\begin{Proposition}
	\label{rankone}
	The optimal covariance matrix of \begin{small}$\boldsymbol{\Sigma}_s$\end{small} for \begin{small}$C_1(\mathbf{\Sigma}_s, \mathbf{\Theta})$\end{small} is of rank one.
\end{Proposition}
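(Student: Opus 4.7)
The plan is to characterize every KKT/stationary point of the maximization $\max C_1(\boldsymbol{\Sigma}_s,\boldsymbol{\Theta})$ over $\Tr(\boldsymbol{\Sigma}_s)\le 1,\ \boldsymbol{\Sigma}_s\succeq\mathbf{0}$ for a fixed $\boldsymbol{\Theta}$, and show that any such point has rank at most one. Since the feasible set is convex and satisfies Slater's condition (the point $\boldsymbol{\Sigma}_s=\mathbf I/N_t$ lies in its relative interior), the KKT conditions are necessary for any local optimum even though $C_1$ is not concave.

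First, I would invoke Proposition~\ref{theta effect} to rewrite the eavesdropper term as $b(\boldsymbol{\Sigma}_s)\triangleq\Expectation_{\mathbf{H}_e}\{\log\det(\mathbf{I}+\rho_e\mathbf{H}_e^H\mathbf{G}^H\boldsymbol{\Sigma}_s\mathbf{G}\mathbf{H}_e)\}$, and set $\mathbf{c}\triangleq\mathbf{G}\boldsymbol{\Theta}^H\mathbf{h}_r$, $\alpha\triangleq\mathbf{c}^H\boldsymbol{\Sigma}_s\mathbf{c}$. Introducing the multiplier $\nu\ge 0$ for the trace constraint and $\mathbf{Y}\succeq\mathbf{0}$ for the PSD constraint, the stationarity condition reads
\begin{equation*}
\mathbf{Y}\;=\;\nu\mathbf{I}+\nabla b(\boldsymbol{\Sigma}_s)-\frac{\rho_r}{1+\rho_r\alpha}\,\mathbf{c}\mathbf{c}^H,
\end{equation*}
with $\nabla b(\boldsymbol{\Sigma}_s)=\Expectation_{\mathbf{H}_e}[\rho_e\mathbf{G}\mathbf{H}_e(\mathbf{I}+\rho_e\mathbf{H}_e^H\mathbf{G}^H\boldsymbol{\Sigma}_s\mathbf{G}\mathbf{H}_e)^{-1}\mathbf{H}_e^H\mathbf{G}^H]\succeq\mathbf{0}$ (as in (\ref{tidu_SAA})), together with the complementary slackness $\mathbf{Y}\boldsymbol{\Sigma}_s=\mathbf{0}$.

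Second, I would argue that $\mathbf{M}\triangleq\nu\mathbf{I}+\nabla b(\boldsymbol{\Sigma}_s)\succ\mathbf{0}$ at any meaningful optimum. This holds whenever the power budget is binding (then $\nu>0$ and hence $\mathbf{M}\succeq\nu\mathbf{I}\succ\mathbf{0}$), and, more generally, whenever $\mathbf{G}$ has full column rank $N_t$, since in that case $\mathbf{v}^H\nabla b(\boldsymbol{\Sigma}_s)\mathbf{v}=\rho_e\,\Expectation[\|(\mathbf{I}+\rho_e\mathbf{H}_e^H\mathbf{G}^H\boldsymbol{\Sigma}_s\mathbf{G}\mathbf{H}_e)^{-1/2}\mathbf{H}_e^H\mathbf{G}^H\mathbf{v}\|^2]>0$ for every $\mathbf{v}\neq\mathbf{0}$.

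Third, writing $\mathbf{Y}=\mathbf{M}-\tau\,\mathbf{c}\mathbf{c}^H$ with $\tau\triangleq\rho_r/(1+\rho_r\alpha)>0$, any $\mathbf{v}\in\ker(\mathbf{Y})$ must satisfy $\mathbf{M}\mathbf{v}=\tau(\mathbf{c}^H\mathbf{v})\mathbf{c}$, and since $\mathbf{M}\succ\mathbf{0}$ this forces $\mathbf{v}$ to be a scalar multiple of $\mathbf{M}^{-1}\mathbf{c}$; hence $\dim\ker(\mathbf{Y})\le 1$, i.e.\ $\mathrm{rank}(\mathbf{Y})\ge N_t-1$. Combining this with $\mathbf{Y}\boldsymbol{\Sigma}_s=\mathbf{0}$ for two PSD matrices (which gives $\mathrm{range}(\boldsymbol{\Sigma}_s)\subseteq\ker(\mathbf{Y})$, and thus $\mathrm{rank}(\mathbf{Y})+\mathrm{rank}(\boldsymbol{\Sigma}_s)\le N_t$) yields $\mathrm{rank}(\boldsymbol{\Sigma}_s^\star)\le 1$; the zero matrix is ruled out whenever a positive secrecy rate is attainable, so the optimal covariance is exactly rank one.

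The hard part will be the corner case in which both $\nu=0$ and $\mathbf{G}$ is column-rank-deficient simultaneously, where $\mathbf{M}$ loses strict positivity. I would address it by observing that any component of $\boldsymbol{\Sigma}_s$ supported on $\ker(\mathbf{G}^H)$ is invisible to both the legitimate and eavesdropper terms, so the optimization can be restricted without loss of generality to the range of $\mathbf{G}$; on that subspace $\nabla b(\boldsymbol{\Sigma}_s)$ is strictly positive definite and the rank-one argument above applies verbatim.
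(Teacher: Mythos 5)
Your proposal is correct and follows essentially the same route as the paper's Appendix F: write the KKT conditions for fixed $\mathbf{\Theta}$, observe that the gradient of $C_1$ is a rank-one positive semidefinite term minus a positive semidefinite term, and combine this with complementary slackness of the PSD multiplier to force $\mathrm{rank}(\boldsymbol{\Sigma}_s)\le 1$. The only difference is cosmetic — the paper concludes via ``$\mathbf{A}$ has at most one positive eigenvalue and $\mathbf{A}\boldsymbol{\Sigma}_s=\lambda\boldsymbol{\Sigma}_s$,'' whereas you bound $\dim\ker(\mathbf{Y})$ directly — and your explicit treatment of the degenerate case $\nu=0$ with rank-deficient $\mathbf{G}$ is a small rigor improvement over the paper, which implicitly assumes the multiplier is positive.
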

\begin{proof}
	See Appendix \ref{Appendix_rankone}.
\end{proof}

\subsection{Alternating Optimization for $C_1(\mathbf{\Sigma}_s, \mathbf{\Theta})$}
Since \begin{small}$\mathbf{\Sigma}_s$\end{small} has rank one, let us write \begin{small}$\mathbf{\Sigma}_s = \boldsymbol{\omega}\boldsymbol{\omega}^H$\end{small} with \begin{small}$\Vert\boldsymbol{\omega}\Vert^2=1$\end{small}. Then, the ergodic secrecy rate is reduced to
\begin{equation}\small
\label{AO_proof}
\begin{aligned}
C_1(\mathbf{\Sigma}_s, \mathbf{\Theta}) &= \log\left(1+\rho_r\mathbf{h}_r^H \mathbf{\Theta} \mathbf{G}^H\boldsymbol{\omega}\boldsymbol{\omega}^H \mathbf{G} \mathbf{\Theta}^H \mathbf{h}_r\right) \\
&\quad-
\Expectation_{\mathbf{H}_e} \left \{\log\left[\det \left( \mathbf{I}+\rho_e\mathbf{H}_e^H \mathbf{G}^H\boldsymbol{\omega}\boldsymbol{\omega}^H \mathbf{G} \mathbf{H}_e\right)\right]\right\}\\
&= \log\left(1+\rho_r\mathbf{h}_r^H \mathbf{\Theta} \mathbf{G}^H\boldsymbol{\omega}\boldsymbol{\omega}^H \mathbf{G} \mathbf{\Theta}^H \mathbf{h}_r\right) \\
&\quad-F_1\left(\rho_e\boldsymbol{\omega}^H \mathbf{G}\mathbf{G}^H\boldsymbol{\omega}, N_e\right).
\end{aligned}
\end{equation}
\begin{Proposition}
	For any given beamforming \begin{small}$\boldsymbol{\omega}$\end{small}, the optimal phase shifts are given by
	\begin{equation}\small
	\label{w2v}
	\mathbf{v}=\exp\left(j\arg\left(\diag\left(\mathbf{h}_r^H\right) \mathbf{G}^H\boldsymbol{\omega} \right)+ \mathbf{a}\right)
	\end{equation} where \begin{small}$\mathbf{a}$\end{small} is a vector with all same element.
\end{Proposition}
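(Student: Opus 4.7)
The plan is to exploit the rank-one structure $\mathbf{\Sigma}_s = \boldsymbol{\omega}\boldsymbol{\omega}^H$ already obtained and to observe that in the reduced expression (\ref{AO_proof}) for $C_1$, the term $F_1(\rho_e\boldsymbol{\omega}^H \mathbf{G}\mathbf{G}^H\boldsymbol{\omega}, N_e)$ is completely independent of $\mathbf{\Theta}$. Consequently, maximizing $C_1$ over $\mathbf{\Theta}$ is equivalent to maximizing only the first term $\log(1+\rho_r\mathbf{h}_r^H\mathbf{\Theta}\mathbf{G}^H\boldsymbol{\omega}\boldsymbol{\omega}^H\mathbf{G}\mathbf{\Theta}^H\mathbf{h}_r)$. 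Since $\log(1+\cdot)$ is monotonically increasing and $\rho_r>0$, the problem collapses to
\begin{equation*}\small
\max_{\mathbf{\Theta}\in\mathcal{Y}}\;\bigl|\mathbf{h}_r^H\mathbf{\Theta}\mathbf{G}^H\boldsymbol{\omega}\bigr|^2.
\end{equation*}

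Next, I would rewrite the scalar $\mathbf{h}_r^H\mathbf{\Theta}\mathbf{G}^H\boldsymbol{\omega}$ in terms of the phase vector $\mathbf{v}\triangleq\diag(\mathbf{\Theta}^H)$ used throughout the paper. Using the elementary identity $\mathbf{h}_r^H\mathbf{\Theta}\,\mathbf{u} = \mathbf{v}^H\,\diag(\mathbf{h}_r^H)\,\mathbf{u}$ for any vector $\mathbf{u}$ (a direct entrywise check, noting that $\mathbf{\Theta}$ is diagonal with entries $v_i^*$), and setting $\mathbf{u}=\mathbf{G}^H\boldsymbol{\omega}$ and $\mathbf{b}\triangleq\diag(\mathbf{h}_r^H)\mathbf{G}^H\boldsymbol{\omega}$, the problem becomes
\begin{equation*}\small
\max_{\mathbf{v}}\;|\mathbf{v}^H\mathbf{b}|^2\quad\text{s.t.}\quad|v_i|=1,\;i=1,\ldots,N_I.
\end{equation*}

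The key inequality is the triangle inequality $|\mathbf{v}^H\mathbf{b}|\leq\sum_i |v_i^*b_i|=\sum_i|b_i|$, with equality if and only if all the complex numbers $\{v_i^*b_i\}_{i=1}^{N_I}$ share a common phase. Writing this common phase as $-a$, the equality condition translates into $\arg(v_i)=\arg(b_i)+a$ for every $i$, which is exactly the claimed closed form $\mathbf{v}=\exp\bigl(j\arg(\diag(\mathbf{h}_r^H)\mathbf{G}^H\boldsymbol{\omega})+\mathbf{a}\bigr)$ with $\mathbf{a}$ having all entries equal. The unit-modulus constraint is trivially satisfied by this choice, so it is feasible and attains the upper bound, hence optimal.

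There is essentially no genuine obstacle here; the only subtle point worth writing carefully is the change of variables from $\mathbf{\Theta}$ to $\mathbf{v}$, because one must be consistent about whether $\mathbf{v}=\diag(\mathbf{\Theta}^H)$ or $\diag(\mathbf{\Theta})$ (the paper fixes the former). Beyond that, the degree of freedom represented by $\mathbf{a}$ should be pointed out explicitly: the objective $|\mathbf{v}^H\mathbf{b}|^2$ is invariant under a global phase rotation $\mathbf{v}\mapsto e^{ja_0}\mathbf{v}$, which is precisely why the optimal $\mathbf{v}$ is unique only up to the additive constant vector $\mathbf{a}$.
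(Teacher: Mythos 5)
Your proposal is correct and follows essentially the same route as the paper: drop the $\mathbf{\Theta}$-independent $F_1$ term, rewrite $\mathbf{h}_r^H\mathbf{\Theta}\mathbf{G}^H\boldsymbol{\omega}$ as $\mathbf{v}^H\diag(\mathbf{h}_r^H)\mathbf{G}^H\boldsymbol{\omega}$, bound $|\mathbf{v}^H\mathbf{b}|$ by $\Vert\mathbf{b}\Vert_1$ under the unit-modulus constraint, and read off the phase-alignment condition for equality. The only cosmetic difference is that you correctly identify the key bound as the triangle inequality (with the global-phase ambiguity made explicit), whereas the paper labels it Cauchy--Schwarz.
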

\begin{proof}
	Invoking (\ref{AO_proof}) and Cauchy-Schwarz inequality, we have
\begin{equation}\small
\begin{aligned}
&\log\left(1+\rho_r\mathbf{h}_r^H \mathbf{\Theta} \mathbf{G}^H\boldsymbol{\omega}\boldsymbol{\omega}^H \mathbf{G} \mathbf{\Theta}^H \mathbf{h}_r\right) -F_1\left(\rho_e\boldsymbol{\omega}^H \mathbf{G}\mathbf{G}^H\boldsymbol{\omega}, N_e\right)\\
&=\log\left(1+\rho_r\mathbf{v}^H \diag\left(\mathbf{h}_r^H\right) \mathbf{G}^H\boldsymbol{\omega}\boldsymbol{\omega}^H \mathbf{G} \diag\left(\mathbf{h}_r\right)\mathbf{v}\right) \\
&\quad - F_1\left(\rho_e\boldsymbol{\omega}^H \mathbf{G}\mathbf{G}^H\boldsymbol{\omega}, N_e\right)\\
&\leq\log\left(1+\rho_r\Vert \diag\left(\mathbf{h}_r^H\right) \mathbf{G}^H\boldsymbol{\omega}\Vert_1^2\right) - F_1\left(\rho_e\boldsymbol{\omega}^H \mathbf{G}\mathbf{G}^H\boldsymbol{\omega}, N_e\right).
\end{aligned}
\end{equation}
	In the last inequality, the equality is achieved at
	\begin{small}$\mathbf{v} = \exp\left(j\arg\left(\diag\left(\mathbf{h}_r^H\right) \mathbf{G}^H\boldsymbol{\omega} \right)+ \mathbf{a}\right)$\end{small}.
\end{proof}
For the propose of accelerating the optimization of \begin{small}$\boldsymbol{\omega}$\end{small}, an
auxiliary variable $z$ is introduced. Let \begin{small}$\mathbf{h}(\mathbf{v}) = \mathbf{G} \diag\left(\mathbf{h}_r\right)\mathbf{v}$\end{small} and \begin{small}$\boldsymbol{\omega}^H \mathbf{h}(\mathbf{v}) \mathbf{h}^H(\mathbf{v}) \boldsymbol{\omega} =z\Vert\mathbf{h}(\mathbf{v})\Vert^2$\end{small}, where \begin{small}$z\in[0,1]$\end{small}. For any given \begin{small}$\mathbf{v}$\end{small} and $z$, the aim of maximizing \begin{small}$C_1(\mathbf{\Sigma}_s, \mathbf{\Theta})$\end{small} motivates us to write
\begin{equation}\small
\label{vz2w}
\begin{aligned}
(P6)\,\,\phi(\mathbf{v}, z)\triangleq&\mathop{\min}_{\boldsymbol{\omega}}&& \boldsymbol{\omega}^H \mathbf{G}\mathbf{G}^H\boldsymbol{\omega}\\
&\mathop{\text{s.t.}}&&\boldsymbol{\omega}^H \mathbf{h}(\mathbf{v}) \mathbf{h}^H(\mathbf{v}) \boldsymbol{\omega} =z\Vert\mathbf{h}(\mathbf{v})\Vert^2\\
&&&\Vert\boldsymbol{\omega}\Vert^2 = 1.
\end{aligned}
\end{equation}
The above optimization implies that we can reduce the eavesdropper's rate while guaranteeing the legitimate receiver's performance, thereby improving secrecy rate. Then, the original beamforming and phase shift problem is reduced to the optimization with respect to \begin{small}$\mathbf{v}$\end{small} and $z$:
\begin{equation}\small
\label{updatez}
\begin{aligned}
(P7)\,\,\widetilde{C}_1(\mathbf{v}, z) \triangleq \mathop{\max}_{\mathbf{v}, z}\,\,&\log\left(1+\rho_r z\Vert\mathbf{h}(\mathbf{v})\Vert^2\right) \\
&\quad- F_1\left(\rho_e\phi(\mathbf{v}, z),N_e\right).
\end{aligned}
\end{equation}
Since \begin{small}$\phi(\mathbf{v}, z)$\end{small} is a smooth function, any converge method (e.g. Newton-type method) can be used to solve it. Note the first- and second- order derivatives can be determined numerically \cite[\S 25.3]{Handbook first second order}. For any given \begin{small}$\mathbf{v}$\end{small}, a Newton-type method is taken to update $z$, thereby optimizing \begin{small}$\boldsymbol{\omega}$\end{small} according to (\ref{vz2w}). Based on (\ref{w2v}), (\ref{vz2w}) and (\ref{updatez}), the overall algorithm proposed in this section is summarized in Algorithm 4.
\begin{algorithm}
	\caption{Alternating Optimization for $C_1(\mathbf{\Sigma}_s, \mathbf{\Theta})$}
	{\bf Input: } \begin{small}$\mathbf{v}^0, t=1$\end{small}
	\begin{algorithmic}[1]
		\Repeat
		\State Update \begin{small}$z^t$\end{small} and \begin{small}$\boldsymbol{\omega}^{t}$\end{small} for given \begin{small}$\mathbf{v}^{t-1}$\end{small} according to (\ref{vz2w}) and (\ref{updatez}) with Newton-type method;
		\State Update \begin{small}$\mathbf{v}^t$\end{small} for given \begin{small}$\boldsymbol{\omega}^t$\end{small} according to (\ref{w2v});
		\State \begin{small}$t \leftarrow t+1$\end{small};
		\Until{Convergence}
	\end{algorithmic}
	{\bf Output: } \begin{small}$\boldsymbol{\omega}=\boldsymbol{\omega}^t, \mathbf{v}=\mathbf{v}^t$\end{small}
\end{algorithm}

\subsection{Analysis on $C_4(\mathbf{\Sigma}_s, \mathbf{\Sigma}_z, \mathbf{\Theta})$}
In this subsection, we study the system with both i.i.d. Gaussian fading channels at receiver and eavesdropper and AN-aided transmission signal. Now, we consider a special scenario where the eavesdropper is equipped with a single antenna, i.e., \begin{small}$N_e = 1$\end{small}.
%Before proceeding to investigating the transmission policy, we compare two scenarios in $C_4$. The only difference is that the eavesdropper is equipped with a single antenna $N_e = 1$ in one scenario while with multiple antennas $N_e > 1$ in the other one. From the eavesdropper's perspective, single-antenna eavesdropper gets the minimum benefit regardless of any transmission strategy. Next, we will analyze the single-antenna eavesdropper. 
Therefore, the secrecy rate of  \begin{small}$C_4(\mathbf{\Sigma}_s, \mathbf{\Sigma}_z, \mathbf{\Theta})$\end{small} reduces to:
\begin{equation}\small
\label{C4Special1}
\begin{aligned}
C_4(\mathbf{\Sigma}_s, \mathbf{\Sigma}_z, \mathbf{\Theta})&= \Expectation_{\mathbf{h}_r} \left[\log\left(1+\frac{\rho_r\mathbf{h}_r^H \mathbf{G}^H\mathbf{\Sigma}_s \mathbf{G} \mathbf{h}_r}{1+\rho_r\mathbf{h}_r^H \mathbf{G}^H\mathbf{\Sigma}_z \mathbf{G} \mathbf{h}_r}\right)\right] \\
&\quad- \Expectation_{\mathbf{h}_e} \left[\log\left(1+\frac{\rho_e\mathbf{h}_e^H \mathbf{G}^H\mathbf{\Sigma}_s \mathbf{G} \mathbf{h}_e}{1+\rho_e\mathbf{h}_e^H \mathbf{G}^H\mathbf{\Sigma}_z \mathbf{G} \mathbf{h}_e}\right)\right]\\
&= \Expectation_{\mathbf{h}_e} \left[\log\left(1+\frac{\frac{\rho_r}{\rho_e}\rho_e\mathbf{h}_e^H \mathbf{G}^H\mathbf{\Sigma}_s \mathbf{G} \mathbf{h}_e}{1+\frac{\rho_r}{\rho_e}\rho_e\mathbf{h}_e^H \mathbf{G}^H\mathbf{\Sigma}_z \mathbf{G} \mathbf{h}_e}\right)\right] \\
&\quad- \Expectation_{\mathbf{h}_e} \left[\log\left(1+\frac{\rho_e\mathbf{h}_e^H \mathbf{G}^H\mathbf{\Sigma}_s \mathbf{G} \mathbf{h}_e}{1+\rho_e\mathbf{h}_e^H \mathbf{G}^H\mathbf{\Sigma}_z \mathbf{G} \mathbf{h}_e}\right)\right]
\end{aligned}
\end{equation}
where \begin{small}$\mathbf{h}_e$\end{small} denotes \begin{small}$\mathbf{H}_e$\end{small} with \begin{small}$N_e = 1$\end{small}. The last equality follows from the fact that altering the representation of a random variable does not affect the expectation.

For any given \begin{small}$\mathbf{\Sigma}_s$\end{small}, we define a constant \begin{small}$a\triangleq\rho_e\mathbf{h}_e^H \mathbf{G}^H\mathbf{\Sigma}_s \mathbf{G} \mathbf{h}_e$\end{small} and \begin{small}$\hat{t}\left(\mathbf{\Sigma}_z\right)\triangleq\rho_e\mathbf{h}_e^H \mathbf{G}^H\mathbf{\Sigma}_z \mathbf{G} \mathbf{h}_e$\end{small}, and \begin{small}$b=\rho_e/\rho_r\leq 1$\end{small}. The problem of maximizing \begin{small}$C_4(\mathbf{\Sigma}_s, \mathbf{\Sigma}_z, \mathbf{\Theta})$\end{small} is equivalent to:
\begin{equation}\small
\label{C4Special2}
\begin{aligned}
\underset{\mathbf{\Sigma}_z}{\max} & &\Expectation_{\mathbf{h}_e} \bigg[\log\left(1+\frac{\mathbf{h}_e^H \mathbf{G}^H\mathbf{\Sigma}_s \mathbf{G} \mathbf{h}_e}{b+\mathbf{h}_e^H \mathbf{G}^H\mathbf{\Sigma}_z \mathbf{G} \mathbf{h}_e}\right)\\
&&-\log\left(1+\frac{\mathbf{h}_e^H \mathbf{G}^H\mathbf{\Sigma}_s \mathbf{G} \mathbf{h}_e}{1+\mathbf{h}_e^H \mathbf{G}^H\mathbf{\Sigma}_z \mathbf{G} \mathbf{h}_e}\right)\bigg].\\
\end{aligned}
\end{equation}
By randomly choosing a possible value of \begin{small}$\mathbf h_e$\end{small}, we have
\begin{equation}\small
\begin{aligned}
\text{g}\left(\hat{t}\left(\mathbf{\Sigma}_z\right)\right)&\triangleq\log\left(1+\frac{a}{b+\hat{t}\left(\mathbf{\Sigma}_z\right)}\right) - \log\left(1+\frac{a}{1+\hat{t}\left(\mathbf{\Sigma}_z\right)}\right)\\ \frac{\mathrm{d}\text{g}\left(\hat{t}\left(\mathbf{\Sigma}_z\right)\right)}{\mathrm{d}\hat{t}\left(\mathbf{\Sigma}_z\right)}&\leq 0.
\end{aligned}
\end{equation}
The monotonically decreasing function with respect to \begin{small}$t\left(\mathbf{\Sigma}_z\right)$\end{small} implies that the optimal \begin{small}$\mathbf{\Sigma}_z$\end{small} should be \begin{small}$\mathbf{0}$\end{small}. Intuitively, neither legitimate receiver nor eavesdropper can directly distinguish noise from the overall received signal so that AN is not helpful to the secrecy rate. Hence, the best policy is that when the legitimate channel is better than the eavesdropper channel, all power should be allocated to the information-carrying signal, and if the channel condition is reversed, communication should break up.

\section{Results and Discussions}
\subsection{Simulation Setup}
In this section, we numerically evaluate the performance of the proposed algorithms. In Fig.\ref{Simulation_setup}, we take the Cartesian coordinate system to describe positions of all components in the LIS-enhanced system model. In practical systems, one of the motivations for deploying LIS in secure wireless systems is to enhance favorable communication links by suppressing the undesired eavesdropper. AP and LIS are usually deployed in advance and we assume AP and LIS are located at ($0, 0, 15$)m and ($0, 50, 15$)m, respectively. The channel matrix \begin{small}$\mathbf{G}$\end{small} between AP and LIS channel is modeled as follows
\begin{equation}\small
\mathbf{G}=\sqrt{\frac{K}{1+K}}\mathbf G^{\text{LOS}}+\sqrt{\frac{1}{1+K}}\mathbf G^{\text{NLOS}}
\end{equation}
where the small-scale fading is assumed to be Rician fading. \begin{small}$\mathbf G^{\text{LOS}}$\end{small} and \begin{small}$\mathbf G^{\text{NLOS}}$\end{small} represent the LoS and NLoS components, respectively. The distance-dependent path loss model is given by
\begin{equation}\small
L(d)=C_0\left(\frac{d}{D_0}\right)^{-\zeta}
\end{equation}
where \begin{small}$C_0=-30$\end{small}dB is the path loss at the reference distance \begin{small}$D_0=1$\end{small} meter, $d$ denotes the individual link distance, and \begin{small}$\zeta$\end{small} denotes the path loss exponent. Some parameters are given in Table \ref{Simulation Parameters}.
%The remaining parameters are given in Table \ref{Simulation Parameters}.
%\begin{figure}
%\makeatletter\def\@captype{figure}\makeatother
%\begin{minipage}{0.25\textwidth}
%	\centering
%	\includegraphics[scale=0.3]{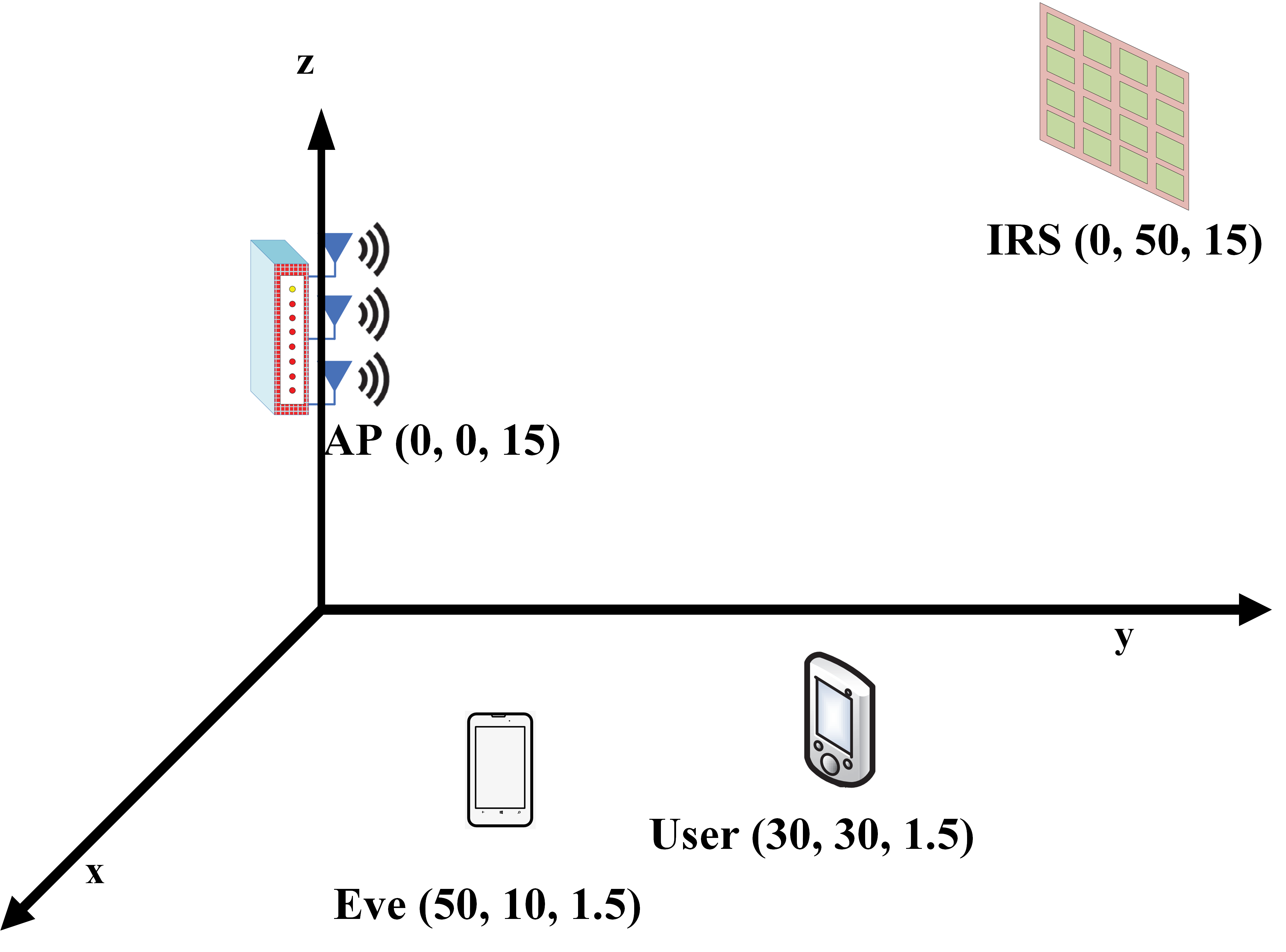}
%	\caption{Simulation Setup}%图片的名称
%	\label{Simulation_setup}
%\end{minipage}
%\makeatletter\def\@captype{table}\makeatother
%{\bl{
%\begin{minipage}{1\textwidth}
%	\renewcommand\arraystretch{1}
%	\scriptsize
%	\centering
%	\caption{ Simulation Parameters}
%	\begin{tabular}{ |c|c| }
%	\hline  
%	Parameters & Values \\
%	\cline{1-2}
%	Noise variance, $\sigma^2$ & $-80$dBm \\
%	\cline{1-2}
%	The number of antennas at AP, $N_t$ & $16$\\
%	\cline{1-2}
%	The number of reflecting elements at {\bl{LIS}}, ${{N_I}}$ & $32$\\
%	\cline{1-2}
%	The number of antennas at Eavesdropper, ${{N_e}}$ & $10$\\  
%	\cline{1-2}
%	Pass loss in {\bl{AP-LIS link}}, $\zeta_{AI}$ & $2$ \\
%	\cline{1-2}
%	Pass loss in {\bl{LIS-Receiver link}}, $\zeta_{IR}$ & $2.8$ \\
%	\cline{1-2}
%	Pass loss in {\bl{LIS-Eavesdropper link}}, $\zeta_{IE}$ & $3$ \\
%	\cline{1-2}
%	Rician factor, $K$ & $10$ \\
%	\cline{1-2}
%	\hline
%	\end{tabular}
%\label{Simulation Parameters}
%\end{minipage}
%}}
%\end{figure}

\begin{figure}
	\centering
	\includegraphics[scale=0.3]{}
	\caption{Simulation Setup}%图片的名称
	\label{Simulation_setup}
\end{figure}
\begin{table}
	\scriptsize
	\centering
	\caption{ Simulation Parameters}
	\begin{tabular}{ |c|c| }
		\hline  
		Parameters & Values \\
		\cline{1-2}
		Noise variance, \begin{small}$\sigma^2$\end{small} & $-80$dBm \\
		\cline{1-2}
		The number of antennas at AP, \begin{small}$N_t$\end{small} & $16$\\
		\cline{1-2}
		The number of reflecting elements at LIS, \begin{small}${{N_I}}$\end{small} & $32$\\
		\cline{1-2}
		The number of antennas at Eavesdropper, \begin{small}${{N_e}}$\end{small} & $10$\\  
		\cline{1-2}
		Pass loss in AP-LIS link, \begin{small}$\zeta_{AI}$\end{small} & $2$ \\
		\cline{1-2}
		Pass loss in LIS-Receiver link, \begin{small}$\zeta_{IR}$\end{small} & $2.8$ \\
		\cline{1-2}
		Pass loss in LIS-Eavesdropper link, \begin{small}$\zeta_{IE}$\end{small} & $3$ \\
		\cline{1-2}
		Rician factor, $K$ & $10$\\
		\cline{1-2}
		\hline
	\end{tabular}
	\label{Simulation Parameters}
\end{table}

\subsection{SAA-based vs. Hybrid SPG-CP vs. Alternating Optimization for $C_1(\mathbf{\Sigma}_s, \mathbf{\Theta})$}
In this subsection, we compare the performance and the convergence rates of the SAA-based algorithm, the hybrid SPG-CP algorithm and the proposed alternating optimization algorithm. Fig. \ref{B}(a) depicts that the achievable secrecy rate versus the transmit power which ranges from 10dBm to 30dBm. The system rates increase monotonically with more transmit power budget. It results from the improved SNR of the whole system by providing additional transmit power. Also, we observe that the average secrecy rate achieved by three algorithms is the same. In Fig. \ref{B}(b), we compare the convergence of three algorithms when the total transmit power is 10dBm.  Fig. \ref{B}(b) implies that alternating optimization algorithm makes full use of expectation information and achieves a faster convergence rate than the other two algorithms.
\begin{figure}[!t]
	\centering
	\includegraphics[width=0.48\linewidth]{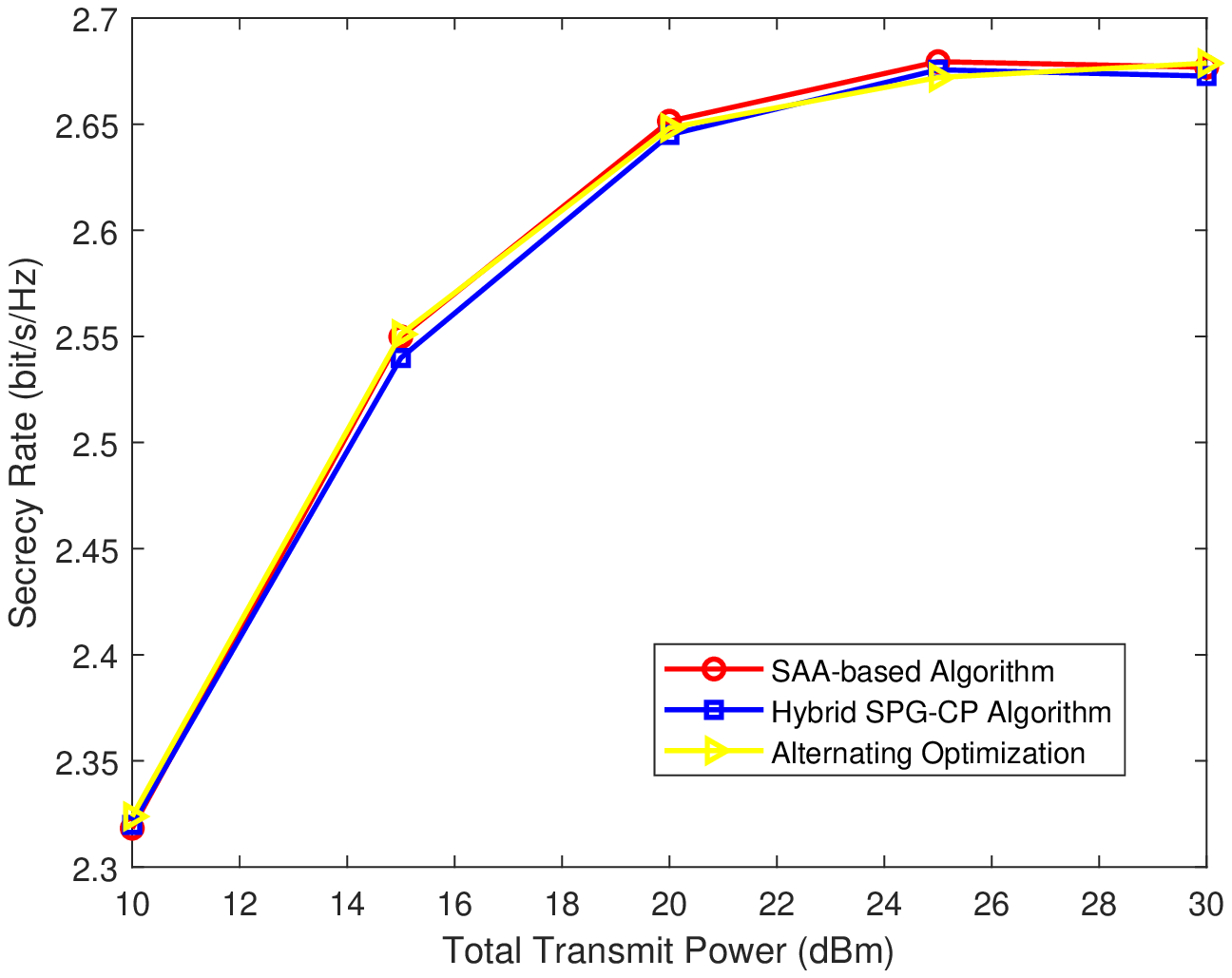} \label{Ba}
	\hfil
	\includegraphics[width=0.48\linewidth]{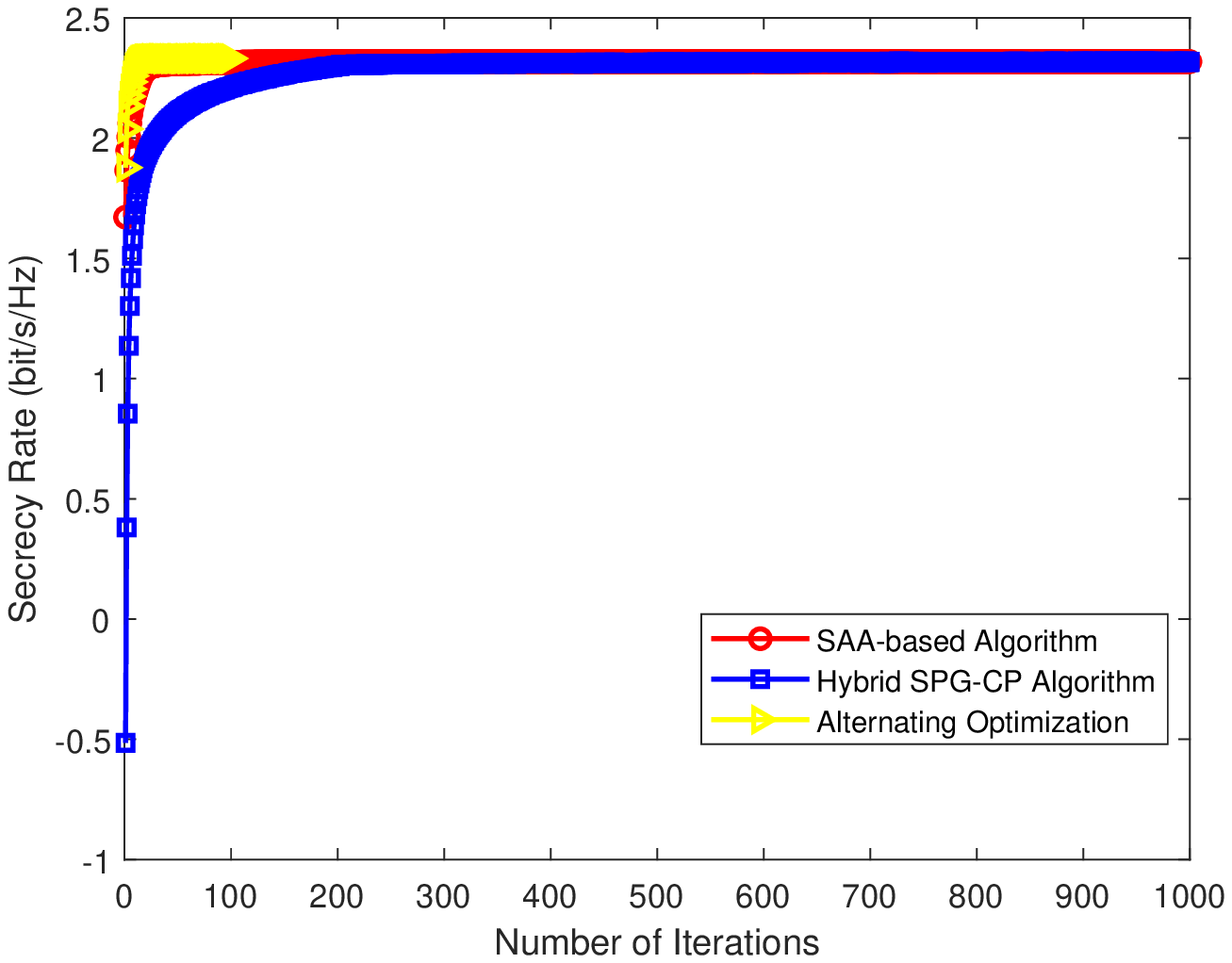} \label{Bb}
	\caption{(a) Secrecy rate (bits/s/Hz) versus the total transmit power of the AP (dBm). (b) Convergence of three algorithms.}
	\label{B}
\end{figure}
\subsection{SAA-based vs. Hybrid SPG-CP for $C_1(\mathbf{\Sigma}_s, \mathbf{\Theta})$ and $C_3(\mathbf{\Sigma}_s, \mathbf{\Sigma}_z, \mathbf{\Theta})$}
For fairness, Fig. \ref{C} compares the non-AN system \begin{small}$C_1(\mathbf{\Sigma}_s, \mathbf{\Theta})$\end{small} and AN-aided system \begin{small}$C_3(\mathbf{\Sigma}_s, \mathbf{\Sigma}_z, \mathbf{\Theta})$\end{small} under the same condition. Fig. \ref{C}(a) shows that the achievable secrecy rate \begin{small}$C_1(\mathbf{\Sigma}_s, \mathbf{\Theta})$\end{small} exceeds the $x$-axis baseline, indicating that a performance gain can be achieved by introducing LIS into the scenario where the direct link is severely blocked. Meanwhile, the achievable secrecy rate \begin{small}$C_3(\mathbf{\Sigma}_s, \mathbf{\Sigma}_z, \mathbf{\Theta})$\end{small} exceeds the \begin{small}$C_1(\mathbf{\Sigma}_s, \mathbf{\Theta})$\end{small} baseline, indicating that the introduced AN-aided structure achieves even greater performance gain due to AN suppressing the channel gain of the eavesdropper. Therefore, the AN-aided structure significantly outperforms the non-AN structure. Fig. \ref{C}(b) sketches the number of iterations versus the secrecy rate by considering three cases with configurations given by: case (1) \begin{small}$\mathbf\Sigma_s=0.5\mathbf I_{N_t}, \mathbf\Sigma_z=0.5\mathbf I_{N_t}, \mathbf \theta_i= 0,\forall i$\end{small}; case (2) \begin{small}$\mathbf\Sigma_s=\mathbf I_{N_t}, \mathbf\Sigma_z=\mathbf 0, \mathbf \theta_i= \pi,\forall i$\end{small}; case (3) \begin{small}$\mathbf\Sigma_s, \mathbf\Sigma_z, \boldsymbol \theta$\end{small} are random matrix. To verify the sensitivity of algorithms with respect to the selection of the initial point, we set the total transmit power \begin{small}$P=15$\end{small}dBm for SAA-based algorithm in the three cases when \begin{small}$P=25$\end{small}dBm for the hybrid SPG-CP algorithm. It can be seen that they always converge to the similar secrecy rates, the differences between them are almost negligible in all the considered cases. The simulation results illustrate that the proposed two algorithms are accurate to optimize the objective problems.
\begin{figure}[!t]
	\centering
	\subfloat[]{\includegraphics[width=0.48\linewidth]{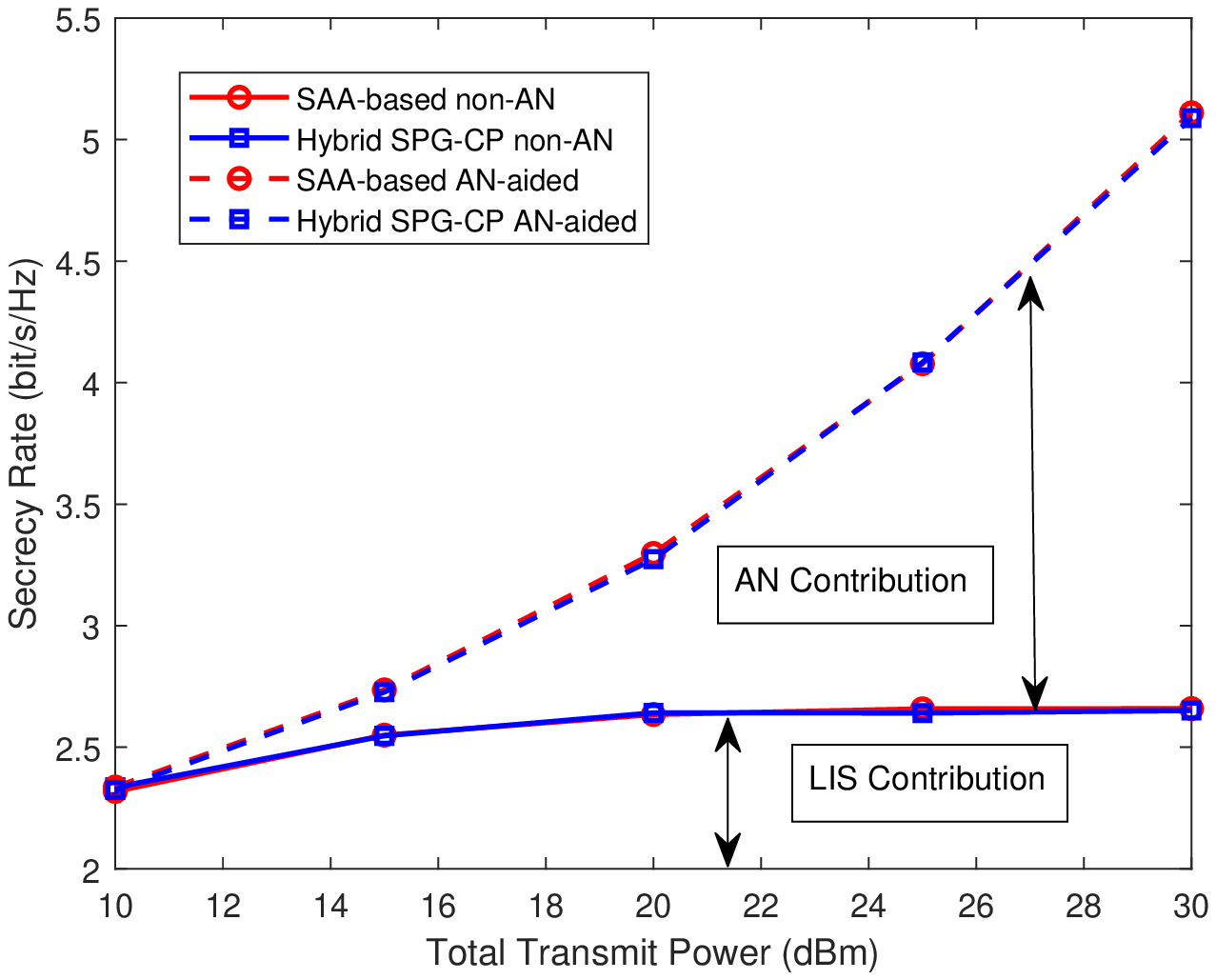} \label{Da}}
	\hfil
	\subfloat[]{\includegraphics[width=0.48\linewidth]{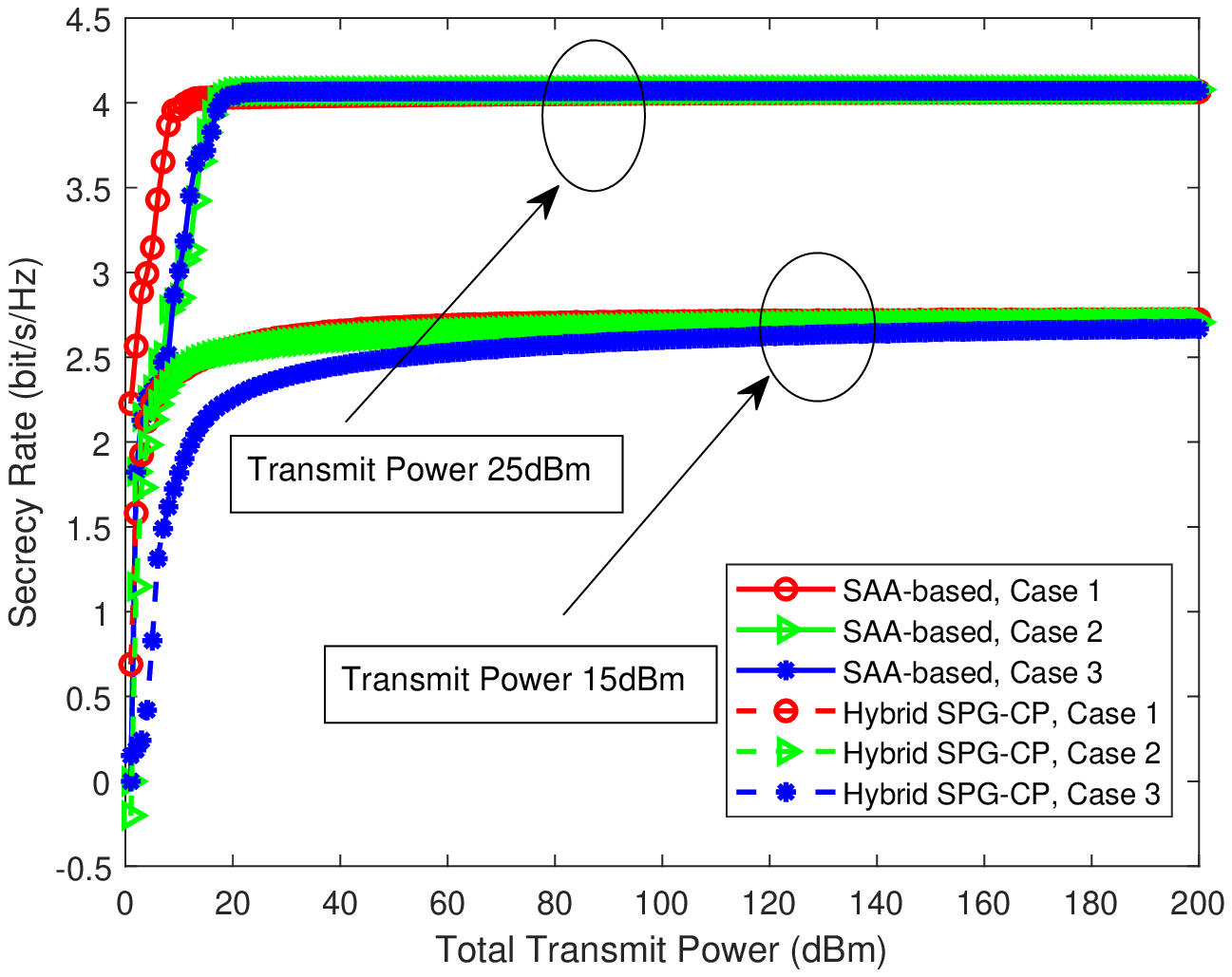} \label{Db}}
	\caption{(a) Secrecy rate (bits/s/Hz) versus the total transmit power of the AP (dBm). (b) Convergence of the SAA-based algorithm and the hybrid SPG-CP algorithm.}
	\label{C}
\end{figure}

\subsection{Hybrid SPG-CP Algorithm for $C_2(\mathbf{\Sigma}_s, \mathbf{\Theta})$ and $C_4(\mathbf{\Sigma}_s, \mathbf{\Sigma}_z, \mathbf{\Theta})$ with $N_e = 1$}
In this subsection, for the purpose of demonstrating that AN structure has no contribution to the system with both i.i.d. Gaussian channel models and \begin{small}$N_e = 1$\end{small}, we compare the secrecy rates \begin{small}$C_2(\mathbf{\Sigma}_s, \mathbf{\Theta})$\end{small} and \begin{small}$C_4(\mathbf{\Sigma}_s, \mathbf{\Sigma}_z, \mathbf{\Theta})$\end{small}. Fig. \ref{E} sketches the transmit power versus the secrecy rate by considering two cases with configurations given by: case (1) \begin{small}$\zeta_{IR}=2.8$\end{small}; case (2) \begin{small}$\zeta_{IR}=2.2$\end{small}. Simulation results show that the achievable secrecy rate is the same in both cases, which implies the power allocated to AN is 0. Besides, we also compare secrecy rates by setting different path losses. As can be observed from Fig. \ref{E}, increasing transmit power results in an improved secrecy rate. Therefore, we should allocate all power to beamforming when both i.i.d. Gaussian channel models are considered.
\begin{figure}[!t]
	\centering
	\includegraphics[width=0.7\linewidth]{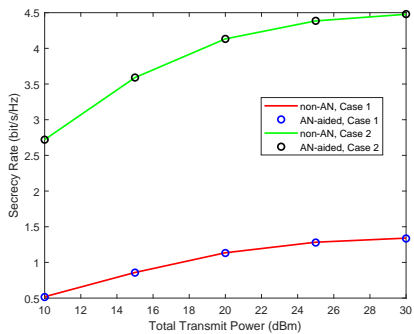}
	\caption{Secrecy rate (bits/s/Hz) versus the total transmit power of the AP (dBm) with AN-aided/non-AN structure. Set $\zeta_{IR}=2.8$ in case 1 and $\zeta_{IR}=2.2$ in case 2.}
	\label{E}
\end{figure}

\subsection{Secrecy Rate vs. Number of LIS elements}
In Fig. \ref{F}, we study the impact of the number of LIS elements under two cases: (1) transmit power \begin{small}$P=15$\end{small}dBm; (2) transmit power \begin{small}$P=25$\end{small}dBm. We assume that the number ranges from \begin{small}$[8,40]$\end{small}. As clearly shown, both proposed algorithms yield very similar performance curves and the differences between them are almost negligible. Furthermore, it is observed that as the number of LIS elements increases, the secrecy rate increases accordingly because more LIS elements can reflect more energy. Therefore, the LIS-based system contributes to improving performance significantly.
\begin{figure}[!t]
	\centering
	\includegraphics[width=0.7\linewidth]{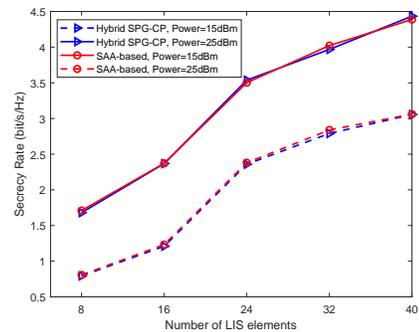}
	\caption{Secrecy rate (bits/s/Hz) versus Number of LIS elements. Set transmit power $P=15$dBm and $P=25$dBm.}
	\label{F}
\end{figure}

\section{Conclusions}
In this paper, we have considered the problem of improving the physical layer security of wireless communication networks by deploying LIS. Two efficient algorithms, i.e., the SAA-based algorithm and the hybrid SPG-CP algorithm, were proposed for joint optimization of the AN-aided beamforming at the transmitter and phase shifts at the LIS. The SAA-based algorithm has high complexity at each iteration due to a log-like function existing in the objective function, while the hybrid SPG-CP algorithm has closed-form solution at each iteration, but needs more iterations. Also, analyses were provided on \begin{small}$C_1(\mathbf{\Sigma}_s, \mathbf{\Theta})$\end{small} and \begin{small}$C_4(\mathbf{\Sigma}_s, \mathbf{\Sigma}_z, \mathbf{\Theta})$\end{small}. An alternating optimization algorithm was proposed to solve \begin{small}$C_1(\mathbf{\Sigma}_s, \mathbf{\Theta})$\end{small} more efficiently. For \begin{small}$C_4(\mathbf{\Sigma}_s, \mathbf{\Sigma}_z, \mathbf{\Theta})$\end{small}, our result showed that AN is not necessary for both i.i.d. Gaussian fading channel in LIS-receiver and LIS-eavesdropper links. Simulation results showed the larger number of LIS elements we use, the better performance the system achieves. Furthermore, we confirmed the huge potential of LIS in improving security and energy efficiency in future communication systems.

\appendices
\section{Proof that $\nabla f_e^i\left(\mathbf{X}\right)$ is Lipschitz continuous}
\label{Estimation Lipschitz constant}
To begin with, for notational simplicity, we rewrite \begin{small}$\nabla f_e^i\left(\mathbf{X}\right)$\end{small} in the following
general form:
\begin{equation}\small
\nabla f_e^i\left(\mathbf{X}\right) =\left( \rho_e\mathbf{H}^i\left(\mathbf{I}+\rho_e\mathbf{H}^{iH}\mathbf{X}\mathbf{H}^i \right)^{-1}\mathbf{H}^{iH}\right)^T
\end{equation}
where \begin{small}$\mathbf{H}^i= \mathbf{G}\mathbf{H}_e^i$\end{small}. Then, for the Lipschitz continuous of the gradient, we have
\begin{equation}\small
\label{Estimation1}
\begin{aligned}
&\Vert\nabla f_e^i\left(\mathbf{X}\right)-\nabla f_e^i\left(\mathbf{Y}\right)\Vert \\
&=\bigg\Vert \rho_e\mathbf{H}^i\left(\left(\mathbf{I}+\rho_e\mathbf{H}^{iH}\mathbf{X}\mathbf{H}^i \right)^{-1}-\left(\mathbf{I}+\rho_e\mathbf{H}^{iH}\mathbf{Y}\mathbf{H}^i \right)^{-1}\right)\mathbf{H}^{iH}\bigg\Vert\\
&\overset{(a)}{=}\bigg\Vert \rho_e^2\mathbf{H}^i\left(\mathbf{I}+\rho_e\mathbf{H}^{iH}\mathbf{X}\mathbf{H}^i \right)^{-1}\mathbf{H}^{iH}(\mathbf{Y}-\mathbf{X})\\
&\quad\quad\quad\mathbf{H}^i\left(\mathbf{I}+\rho_e\mathbf{H}^{iH}\mathbf{Y}\mathbf{H}^i \right)^{-1}\mathbf{H}^{iH}\bigg\Vert\\
&=\Vert\nabla f_e^i\left(\mathbf{X}\right)(\mathbf{Y}-\mathbf{X})^T \nabla f_e^i\left(\mathbf{Y}\right)\Vert\\
&\overset{(b)}{\leq}\Vert\nabla f_e^i\left(\mathbf{X}\right)\Vert\Vert\mathbf{X}-\mathbf{Y}\Vert\Vert\nabla f_e^i\left(\mathbf{Y}\right)\Vert\\
&\leq \left(\mathop{\max}_\mathbf{X} \Vert\nabla f_e^i\left(\mathbf{X}\right)\Vert\right)^2\Vert\mathbf{X}-\mathbf{Y}\Vert
\end{aligned}
\end{equation}
where (a) follows from \begin{small}$\mathbf{A}^{-1}-\mathbf{B}^{-1}=\mathbf{A}^{-1}(\mathbf{B}-\mathbf{A})\mathbf{B}^{-1}$\end{small}; (b) is due to \begin{small}$\Vert\mathbf{A}\mathbf{B}\Vert\leq\Vert\mathbf{A}\Vert\Vert\mathbf{B}\Vert$\end{small}. Next, an upper-bound of \begin{small}$\Vert\nabla f_e^i\left(\mathbf{X}\right)\Vert$\end{small} is given as follows:
\begin{equation}\small
\label{Estimation2}
\begin{aligned}
\Vert\nabla f_e^i\left(\mathbf{X}\right)\Vert &\leq \rho_e \Vert \mathbf{H}^i\mathbf{H}^{iH}\Vert\Vert\left(\mathbf{I}+\rho_e\mathbf{H}^{iH}\mathbf{X}\mathbf{H}^i \right)^{-1}\Vert\\
&\overset{(c)}{\leq}\rho_e \Vert \mathbf{H}^i\mathbf{H}^{iH}\Vert\Tr\left[\left(\mathbf{I}+\rho_e\mathbf{H}^{iH}\mathbf{X}\mathbf{H}^i \right)^{-1}\right]\\
&\overset{(d)}{\leq}\rho_e N_e \Vert \mathbf{H}^i\mathbf{H}^{iH}\Vert\\
\end{aligned}
\end{equation}
where (c) follows from \begin{small}$\sqrt{\Tr(\mathbf{A}\mathbf{B})}\leq\frac{1}{2}(\Tr(\mathbf{A})+\Tr(\mathbf{B}))$\end{small} and non-negative definite matrices \begin{small}$\mathbf{A}, \mathbf{B}$\end{small}; (d) is due to the fact that the maximal eigenvalue of  \begin{small}$\left(\mathbf{I}+\rho_e\mathbf{H}^{iH}\mathbf{X}\mathbf{H}^i \right)^{-1}$\end{small} is no more than $1$. For any \begin{small}$L_i\geq \left(\rho_e N_e \Vert \mathbf{H}^i\mathbf{H}^{iH}\Vert\right)^2=\left(\rho_e N_e \Vert \mathbf{G}\mathbf{H}_e^i\mathbf{H}_e^{iH} \mathbf{G}^H\Vert\right)^2$\end{small}, \begin{small}$\Vert\nabla f_e^i\left(\mathbf{X}\right)-\nabla f_e^i\left(\mathbf{Y}\right)\Vert\leq L_i\Vert\mathbf{X}-\mathbf{Y}\Vert$\end{small} holds. Therefore, \begin{small}$\nabla f_e^i\left(\mathbf{X}\right)$\end{small} is Lipschitz continuous.

\section{Proof of Proposition \ref{stateL1}}
\label{Appendix_stateL1}
We first introduce the following proposition. It is noted that the equalities (\ref{assumptions_1})-(\ref{assumptions_4}) are readily satisfied if \begin{small}$\mathbf{H}_{e,t}^i$\end{small} is bounded \cite[Assumption C]{PSCA}. However, each element in \begin{small}$\mathbf{H}_{e,t}^i$\end{small} is i.i.d. complex normal distribution distributed with zero mean and unit variance, which implies that \begin{small}$\mathbf{H}_{e,t}^i$\end{small} is unbounded. The following proposition extends the equalities in \cite{PSCA} to the unbounded \begin{small}$\mathbf{H}_{e,t}^i$\end{small} case.
\begin{Proposition}
	\label{unbiased}
	For any \begin{small}$\left(\mathbf{\Sigma}_s^{t}, \mathbf{\Sigma}_z^{t}, \mathbf{\Theta}^t\right)$\end{small} generated in our algorithm, we have
	\begin{subequations}\small
		\label{assumptions}
		\begin{align}
		&\Expectation\left[\nabla_{\mathbf{\Sigma}_s^{*}}C_3(\mathbf{\Sigma}_s^{t}, \mathbf{\Sigma}_z^{t}, \mathbf{\Theta}^t,\mathbf{H}_{e,t}^i)-\nabla_{\mathbf{\Sigma}_s^{*}}C_{3}\left(\mathbf{\Sigma}_s^{t},\mathbf{\Sigma}_z^{t},\mathbf{\Theta}^t\right)\right]=0
		\label{assumptions_1}\\
		&\Expectation\left[\nabla_{\mathbf{\Sigma}_z^{*}}C_3(\mathbf{\Sigma}_s^{t}, \mathbf{\Sigma}_z^{t}, \mathbf{\Theta}^t,\mathbf{H}_{e,t}^i)-\nabla_{\mathbf{\Sigma}_z^{*}}C_{3}\left(\mathbf{\Sigma}_s^{t},\mathbf{\Sigma}_z^{t},\mathbf{\Theta}^t\right)\right]=0\\
		&\Expectation\left[\Vert\nabla_{\mathbf{\Sigma}_s^{*}}C_3(\mathbf{\Sigma}_s^{t}, \mathbf{\Sigma}_z^{t}, \mathbf{\Theta}^t,\mathbf{H}_{e,t}^i)-\nabla_{\mathbf{\Sigma}_s^{*}}C_{3}\left(\mathbf{\Sigma}_s^{t},\mathbf{\Sigma}_z^{t},\mathbf{\Theta}^t\right)\Vert^2\right]<\sigma^2\\
		&\Expectation\left[\Vert\nabla_{\mathbf{\Sigma}_z^{*}}C_3(\mathbf{\Sigma}_s^{t}, \mathbf{\Sigma}_z^{t}, \mathbf{\Theta}^t,\mathbf{H}_{e,t}^i)-\nabla_{\mathbf{\Sigma}_z^{*}}C_{3}\left(\mathbf{\Sigma}_s^{t},\mathbf{\Sigma}_z^{t},\mathbf{\Theta}^t\right)\Vert^2\right]<\sigma^2	\label{assumptions_4}
		\end{align}
	\end{subequations}
	for some constant \begin{small}$\sigma$\end{small}.
\end{Proposition}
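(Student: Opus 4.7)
The plan is to establish the four identities by first producing a deterministic upper bound on the stochastic gradients that is polynomial in $\|\mathbf{H}_{e,t}^i\|_F$ and uniform over all feasible iterates, and then invoking the finiteness of every polynomial moment of a complex Gaussian matrix. This is exactly the route needed to close the gap with respect to \cite{PSCA}, where boundedness of the random channel is assumed.

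First I would derive the uniform bound. Using the closed form of $\nabla f_{e,t}^i(\mathbf{X})$ in (\ref{tidu_rteti}), the key observation is that $\mathbf{X}\succeq\mathbf{0}$ implies $(\mathbf{I}+\rho_e\mathbf{H}_{e,t}^{iH}\mathbf{G}^H\mathbf{X}\mathbf{G}\mathbf{H}_{e,t}^i)^{-1}\preceq\mathbf{I}$, which, combined with the PSD monotonicity $\mathbf{H}\mathbf{A}\mathbf{H}^H\preceq\mathbf{H}\mathbf{H}^H$ valid whenever $\mathbf{0}\preceq\mathbf{A}\preceq\mathbf{I}$, yields
\begin{equation}
\|\nabla f_{e,t}^i(\mathbf{X})\|_F \leq \rho_e\sqrt{N_t}\,\|\mathbf{G}\|^2\,\|\mathbf{H}_{e,t}^i\|_F^2
\end{equation}
for every feasible $\mathbf{X}=\mathbf{\Sigma}_s+\mathbf{\Sigma}_z\in\mathcal{X}_2$ and every unitary $\mathbf{\Theta}^t$. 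The term $\nabla f_{r,t}$ does not depend on $\mathbf{H}_{e,t}^i$ and is uniformly bounded on the compact set $\mathcal{X}_2$, so it only contributes a deterministic additive constant to the overall gradient bound.

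Second, to obtain the unbiasedness identities (\ref{assumptions_1})--(\ref{assumptions_2}), I would invoke the Leibniz/dominated-convergence rule to exchange $\nabla_{\mathbf{\Sigma}_s^*}$ (respectively $\nabla_{\mathbf{\Sigma}_z^*}$) with the expectation in $C_3(\mathbf{\Sigma}_s,\mathbf{\Sigma}_z,\mathbf{\Theta})=\Expectation_{\mathbf{H}_e}[C_3(\mathbf{\Sigma}_s,\mathbf{\Sigma}_z,\mathbf{\Theta},\mathbf{H}_e)]$. The sufficient condition is a locally integrable dominating function for the partial gradient on a neighborhood of the iterate; the previous step supplies such a function of the form $g(\mathbf{H}_e)=c_0+c_1\|\mathbf{H}_e\|_F^2$ valid uniformly over $\mathcal{X}_2$ and over all unitary $\mathbf{\Theta}$. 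Since $\|\mathbf{H}_e\|_F^2$ is a scaled chi-squared variable with $2N_I N_e$ degrees of freedom, $\Expectation\|\mathbf{H}_e\|_F^{2k}<\infty$ for every $k\geq 1$, so $g$ is integrable and the exchange is justified. For the variance bounds (\ref{assumptions_3})--(\ref{assumptions_4}), I would square the pointwise bound and take expectations,
\begin{equation}
\Expectation\!\left[\|\nabla_{\mathbf{\Sigma}_s^*}C_3(\mathbf{\Sigma}_s^t,\mathbf{\Sigma}_z^t,\mathbf{\Theta}^t,\mathbf{H}_{e,t}^i)\|^2\right]\leq c_2+c_3\,\Expectation\|\mathbf{H}_e\|_F^4<\infty,
\end{equation}
and then use $\|A-\Expectation A\|^2\leq\Expectation\|A\|^2$ together with the identical argument for $\mathbf{\Sigma}_z$ to obtain a $\sigma^2$ that, crucially, does not depend on the iterate.

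The main obstacle is precisely the unboundedness flagged in the paragraph preceding the proposition: one cannot directly quote \cite[Assumption C]{PSCA}. The trick that unlocks the proof is the PSD-order domination $\mathbf{H}_e^i(\mathbf{I}+\rho_e\mathbf{H}_e^{iH}\mathbf{G}^H\mathbf{X}\mathbf{G}\mathbf{H}_e^i)^{-1}\mathbf{H}_e^{iH}\preceq\mathbf{H}_e^i\mathbf{H}_e^{iH}$, uniform in $\mathbf{X}\succeq\mathbf{0}$. This converts what looks like a dependence on the iterate into a pure polynomial function of $\|\mathbf{H}_e\|_F$, after which the proof reduces to the elementary fact that all polynomial moments of complex Gaussians are finite.
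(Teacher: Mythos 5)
Your proposal is correct and follows essentially the same route as the paper's Appendix C: bound the stochastic gradient \begin{small}$\nabla f_{e,t}^i$\end{small} by a quantity polynomial in the channel magnitude that is uniform over \begin{small}$\mathcal{X}_2$\end{small} and over all unitary \begin{small}$\mathbf{\Theta}$\end{small}, use finiteness of Gaussian moments to obtain integrable dominating functions, justify the interchange of gradient and expectation via differentiation under the integral sign, and bound the variance terms by second moments. The only difference is in execution of the first step: where you use the operator-monotonicity \begin{small}$\mathbf{H}(\mathbf{I}+\rho_e\mathbf{H}^H\mathbf{G}^H\mathbf{X}\mathbf{G}\mathbf{H})^{-1}\mathbf{H}^H\preceq\mathbf{H}\mathbf{H}^H$\end{small} to get \begin{small}$\Vert\nabla f_{e,t}^i\Vert\lesssim\Vert\mathbf{H}_e\Vert_F^2$\end{small} directly and then quote chi-squared moments, the paper conditions on the maximum entry magnitude \begin{small}$x$\end{small} of \begin{small}$\mathbf{H}_{e,t}^i$\end{small}, bounds the gradient by \begin{small}$2x^2\rho_e G N_I N_e^2$\end{small}, and integrates explicitly against an upper bound on the density of that maximum --- your version is cleaner but reaches the same conclusions.
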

\begin{proof}
	See Appendix \ref{Appendix_unbiased}.
\end{proof}
Proceeding as in (\ref{up1})-(\ref{up}), for any given $\mathbf \Theta$, we have
\begin{eqnarray}\small
&&\|\nabla\widehat{C}_3(\widehat{\mathbf x}_1,\mathbf\Theta)-\nabla\widehat{C}_3(\widehat{\mathbf x}_2,\mathbf\Theta)\|^2\nonumber\\
&&=\Vert\nabla_{\mathbf\Sigma_s^*} f_{r,t}(\mathbf{\Sigma}_s^1+ \mathbf{\Sigma}_z^1)-\nabla_{\mathbf\Sigma_s^*} f_{r,t}(\mathbf{\Sigma}_s^2+ \mathbf{\Sigma}_z^2)\nonumber\\
&&\quad-\Expectation\{\nabla_{\mathbf\Sigma_s^*} f_{e,t}^i(\mathbf{\Sigma}_s^1+ \mathbf{\Sigma}_z^1)-\nabla_{\mathbf\Sigma_s^*} f_{e,t}^i(\mathbf{\Sigma}_s^2+ \mathbf{\Sigma}_z^2)\}\Vert^2\nonumber\\
&&\quad+\Vert\nabla_{\mathbf\Sigma_z^*} f_{r,t}(\mathbf{\Sigma}_s^1+ \mathbf{\Sigma}_z^1)-\nabla_{\mathbf\Sigma_z^*} f_{r,t}(\mathbf{\Sigma}_s^2+ \mathbf{\Sigma}_z^2)\nonumber\\
&&\quad-\nabla_{\mathbf\Sigma_z^*} f_{r,t}(\mathbf{\Sigma}_z^1)+\nabla_{\mathbf\Sigma_z^*} f_{r,t}(\mathbf{\Sigma}_z^2)\nonumber\\
&&\quad-\Expectation\{\nabla_{\mathbf\Sigma_z^*} f_{e,t}^i(\mathbf{\Sigma}_s^1+ \mathbf{\Sigma}_z^1)-\nabla_{\mathbf\Sigma_z^*} f_{e,t}^i(\mathbf{\Sigma}_s^2+ \mathbf{\Sigma}_z^2)\}\nonumber\\
&&\quad+\Expectation\{\nabla_{\mathbf\Sigma_z^*} f_{e,t}^i(\mathbf{\Sigma}_z^1)\}-\Expectation\{\nabla_{\mathbf\Sigma_z^*} f_{e,t}^i(\mathbf{\Sigma}_z^2)\}\Vert^2\nonumber\\
&&=2\Vert\nabla_{\mathbf\Sigma_s^*} f_{r,t}(\mathbf{\Sigma}_s^1+ \mathbf{\Sigma}_z^1)-\nabla_{\mathbf\Sigma_s^*} f_{r,t}(\mathbf{\Sigma}_s^2+ \mathbf{\Sigma}_z^2)\Vert^2\nonumber\\
&&\quad+2\Vert\Expectation\{\nabla_{\mathbf\Sigma_s^*} f_{e,t}^i(\mathbf{\Sigma}_s^1+ \mathbf{\Sigma}_z^1)-\nabla_{\mathbf\Sigma_s^*} f_{e,t}^i(\mathbf{\Sigma}_s^2+ \mathbf{\Sigma}_z^2)\}\Vert^2\nonumber\\
&&\quad+4\Vert\nabla_{\mathbf\Sigma_z^*} f_{r,t}(\mathbf{\Sigma}_s^1+ \mathbf{\Sigma}_z^1)-\nabla_{\mathbf\Sigma_z^*} f_{r,t}(\mathbf{\Sigma}_s^2+ \mathbf{\Sigma}_z^2)\Vert^2\nonumber\\
&&\quad+4\Vert\nabla_{\mathbf\Sigma_z^*} f_{r,t}(\mathbf{\Sigma}_z^1)-\nabla_{\mathbf\Sigma_z^*} f_{r,t}(\mathbf{\Sigma}_z^2)\Vert^2\nonumber\\
&&\quad+4\Vert\Expectation\{\nabla_{\mathbf\Sigma_z^*} f_{e,t}^i(\mathbf{\Sigma}_s^1+ \mathbf{\Sigma}_z^1)-\nabla_{\mathbf\Sigma_z^*} f_{e,t}^i(\mathbf{\Sigma}_s^2+ \mathbf{\Sigma}_z^2)\}\Vert^2\nonumber\\
&&\quad+4\Vert\Expectation\{\nabla_{\mathbf\Sigma_z^*} f_{e,t}^i(\mathbf{\Sigma}_z^1)-\nabla_{\mathbf\Sigma_z^*} f_{e,t}^i(\mathbf{\Sigma}_z^2)\}\Vert^2\nonumber\\
&&\leq(16\rho_r^4G^4N_I^4\Vert h_r\Vert^8+256\rho_e^4G^4N_I^8N_e^{16})\nonumber\\
&&\quad(\Vert\mathbf\Sigma_s^1-\mathbf\Sigma_s^2\Vert^2+\Vert\mathbf\Sigma_z^1-\mathbf\Sigma_z^2\Vert^2)\\
&&=(16\rho_r^4G^4N_I^4\Vert h_r\Vert^8+256\rho_e^4G^4N_I^8N_e^{16})\Vert\widehat{\mathbf x}_1-\widehat{\mathbf x}_2\Vert^2\nonumber
\end{eqnarray}
The last inequality follows the same analyses as (\ref{up1}) and (\ref{up}) where the coefficient in front of the norm has been increased sufficiently. Therefore, there exists a constant $L$ irrelevant to $\mathbf\Theta$ such that
\begin{equation}\small
\|\nabla\widehat{C}_3(\widehat{\mathbf x}_1,\mathbf \Theta)-\nabla\widehat{C}_3(\widehat{\mathbf x}_2,\mathbf \Theta)\|\leq L\Vert\widehat{\mathbf x}_1-\widehat{\mathbf x}_2\Vert\quad \forall \widehat{\mathbf x}_1,\widehat{\mathbf x}_2
\end{equation}
Following the same analysis as the Descent Lemma \cite{MM_Proposed}, we have
\begin{equation}\small
\begin{aligned}
-\widehat{C}_3(\widehat{\mathbf x}_1,\mathbf \Theta)\leq&-\widehat{C}_3(\widehat{\mathbf x}_2,\mathbf \Theta)\\
&-2\langle\nabla\widehat{C}_3(\widehat{\mathbf x}_2,\mathbf \Theta),\widehat{\mathbf x}_1-\widehat{\mathbf x}_2\rangle+L\|\widehat{\mathbf x}_1-\widehat{\mathbf x}_2\|^2
\end{aligned}
\end{equation}
Finally, taking the form of \begin{small}$C_{3}\left(\mathbf{\Sigma}_s,\mathbf{\Sigma}_z,\mathbf \Theta\right)$\end{small}, we have
\begin{equation}\small
\begin{aligned} &-C_{3}\left(\mathbf{\Sigma}_s^1,\mathbf{\Sigma}_z^1,\mathbf \Theta\right)\leq -C_{3}\left(\mathbf{\Sigma}_s^2,\mathbf{\Sigma}_z^2,\mathbf \Theta\right)\\
&\quad-2\langle \nabla_{\mathbf{\Sigma}_{s}^*}C_{3}\left(\mathbf{\Sigma}_s^2,\mathbf{\Sigma}_z^2,\mathbf \Theta\right),\mathbf\Sigma_{s}^1-\mathbf\Sigma_s^2\rangle+L\|\mathbf\Sigma_{s}^1-\mathbf\Sigma_s^2\|^2\\
&\quad-2\langle \nabla_{\mathbf{\Sigma}_{z}^*}C_{3}\left(\mathbf{\Sigma}_s^2,\mathbf{\Sigma}_z^2,\mathbf \Theta\right),\mathbf\Sigma_{z}^1-\mathbf\Sigma_z^2\rangle+L\|\mathbf\Sigma_{z}^1-\mathbf\Sigma_z^2\|^2\\
\end{aligned}
\end{equation}

\section{Proof of Proposition \ref{unbiased}}
\label{Appendix_unbiased}
Let $X$ be the maximum of the real and imaginary parts of all the elements in the \begin{small}$\mathbf{H}_{e,t}^i$\end{small}. The real and imaginary parts of \begin{small}$\mathbf{H}_{e,t}^i$\end{small} are i.i.d. Gaussian with mean $0$ and variance $1/2$. Then
\begin{equation}\small
\label{step1}
\begin{aligned}
F_X(x)=Pr(X\leq x)=\left(\frac{1}{\sqrt{\pi}}\int_{-\infty}^x\exp(-t^2)\,dt\right)^{2{{N_I}}N_e}
\end{aligned}
\end{equation}
and the probability density function (p.d.f.) of $X$ satisfies
\begin{equation}\small
\label{step2}
\begin{aligned}
p_X(x)&=\frac{d F_X(x)}{d x}\\&=2{{N_I}}N_e\left(\frac{1}{\sqrt{\pi}}\int_{-\infty}^x\exp(-t^2)\,dt\right)^{2{{N_I}}N_e-1}\frac{1}{\sqrt{\pi}}\exp(-x^2)\\
&\leq\frac{2{{N_I}}N_e}{\sqrt{\pi}}\exp(-x^2)
\end{aligned}
\end{equation}
In addition, we assume that \begin{small}$X=x$\end{small}, then the upper bound of \begin{small}$\Vert\nabla f_{e,t}^i(\mathbf{\Sigma}_s+ \mathbf{\Sigma}_z)\Vert$\end{small} defined in (\ref{tidu_rteti}) is given by
\begin{equation}\small
\label{step3}
\begin{aligned}
&\Vert\nabla f_{e,t}^i(\mathbf{\Sigma}_s+ \mathbf{\Sigma}_z)\Vert\\
&=\Vert \rho_e\mathbf{G}\mathbf{H}_{e,t}^i\left(\mathbf{I}+\rho_e\mathbf{H}_{e,t}^{iH} \mathbf{G}^H(\mathbf{\Sigma}_s+ \mathbf{\Sigma}_z) \mathbf{G}\mathbf{H}_{e,t}^i \right)^{-1}\mathbf{H}_{e,t}^{iH} \mathbf{G}^H\Vert\\
&\leq\rho_e\Vert\mathbf{G}^H\mathbf{G}\Vert\Vert\mathbf{H}_{e,t}^i\mathbf{H}_{e,t}^{iH}\Vert\\
&\quad\Vert\left(\mathbf{I}+\rho_e\mathbf{H}_{e,t}^{iH} \mathbf{G}^H(\mathbf{\Sigma}_s+ \mathbf{\Sigma}_z) \mathbf{G}\mathbf{H}_{e,t}^i \right)^{-1}\Vert\\
&\leq 2x^2\rho_e G {{N_I}} N_e^2 
\end{aligned}
\end{equation}
where \begin{small}$G\triangleq\Vert\mathbf{G}^H\mathbf{G}\Vert$\end{small}; \begin{small}$\Vert\left(\mathbf{I}+\rho_e\mathbf{H}_{e,t}^{iH} \mathbf{G}^H(\mathbf{\Sigma}_s+ \mathbf{\Sigma}_z) \mathbf{G}\mathbf{H}_{e,t}^i \right)^{-1}\Vert\leq N_e$\end{small}; \begin{small}$\Vert\mathbf{H}_{e,t}^i\mathbf{H}_{e,t}^{iH}\Vert\leq 2x^2{{N_I}}N_e$\end{small}, which is derived from the fact that Frobenius norm reaches the maximum as the absolute squares of each element is maximum.

Then, we have
\begin{equation}\small
\label{step4}
\begin{aligned}
&\Expectation\left[\Vert \nabla f_{e,t}^i(\mathbf{\Sigma}_s+ \mathbf{\Sigma}_z)\Vert\right]\leq \frac{4\rho_e G {{N_I}}^2 N_e^3 }{\sqrt{\pi}}\int_{-\infty}^{+\infty}x^2\exp(-x^2)\,dx\\
&=2\rho_e G {{N_I}}^2 N_e^3\\
&\Expectation\left[\Vert \nabla f_{e,t}^i(\mathbf{\Sigma}_s+ \mathbf{\Sigma}_z)\Vert^2\right]\leq \frac{8\rho_e^2 G^2 {{N_I}}^3 N_e^5 }{\sqrt{\pi}}\int_{-\infty}^{+\infty}x^4\exp(-x^2)\,dx\\
&=6\rho_e^2 G^2 {{N_I}}^3 N_e^5.
\end{aligned}
\end{equation}
Similarly, the formula \begin{small}$\Expectation\left[\Vert \nabla f_{e,t}^i( \mathbf{\Sigma}_z)\Vert\right]\leq 2\rho_e G {{N_I}}^2 N_e^3$\end{small} and \begin{small}$\Expectation\left[\Vert \nabla f_{e,t}^i( \mathbf{\Sigma}_z)\Vert^2\right]\leq6\rho_e^2 G^2 {{N_I}}^3 N_e^5$\end{small} hold. Due to
\begin{equation}\small
\label{step5}
\begin{aligned}
&\Expectation\left[\Vert\nabla_{\mathbf{\Sigma}_z^{*}}C_3(\mathbf{\Sigma}_s, \mathbf{\Sigma}_z, \mathbf{\Theta},\mathbf{H}_{e,t}^i)\Vert^2\right]\\
&=\Expectation[\Vert\nabla f_{r,t}(\mathbf{\Sigma}_s+ \mathbf{\Sigma}_z)-\nabla f_{r,t}(\mathbf{\Sigma}_z)-\nabla f_{e,t}^i(\mathbf{\Sigma}_s+ \mathbf{\Sigma}_z)\\
&\quad+\nabla f_{e,t}^i(\mathbf{\Sigma}_z)\Vert^2]\\
&\leq 4\bigg(\Vert\nabla f_{r,t}(\mathbf{\Sigma}_s+ \mathbf{\Sigma}_z)\Vert^2+\Vert\nabla f_{r,t}(\mathbf{\Sigma}_z)\Vert^2\\
&\quad+\Expectation\left[\Vert\nabla f_{e,t}^i(\mathbf{\Sigma}_s+ \mathbf{\Sigma}_z)\Vert^2\right]+\Expectation\left[\Vert\nabla f_{e,t}^i(\mathbf{\Sigma}_z)\Vert^2\right]\bigg)\\
&= 4\left(\Vert\nabla f_{r,t}(\mathbf{\Sigma}_s+ \mathbf{\Sigma}_z)\Vert^2+\Vert\nabla f_{r,t}(\mathbf{\Sigma}_z)\Vert^2+12\rho_e^2 G^2 {{N_I}}^3 N_e^5\right)
\end{aligned}
\end{equation}
\begin{small}$\Expectation\left[\Vert\nabla_{\mathbf{\Sigma}_z^{*}}C_3(\mathbf{\Sigma}_s, \mathbf{\Sigma}_z, \mathbf{\Theta},\mathbf{H}_{e,t}^i)\Vert\right]$\end{small} is bounded. Similarly, \begin{small}$\Expectation\left[\Vert\nabla_{\mathbf{\Sigma}_s^{*}}C_3(\mathbf{\Sigma}_s, \mathbf{\Sigma}_z, \mathbf{\Theta},\mathbf{H}_{e,t}^i)\Vert\right]$\end{small} is bounded.

For any \begin{small}$\mathbf{\Sigma}_s^{n}$\end{small}, we consider a function \begin{small}$R\left(\mathbf{\Sigma}_{z}, \mathbf{H}_{e,t}^i\right)\triangleq f_{e,t}^i(\mathbf{\Sigma}_{s}^n+ \mathbf{\Sigma}_z)p(\mathbf{H}_{e,t}^i)$\end{small} with the joint p.d.f. of each element in \begin{small}$\mathbf{H}_{e,t}^i$\end{small} being \begin{small}$p(\mathbf{H}_{e,t}^i)$\end{small} and its gradient with respect to \begin{small}$\mathbf{\Sigma}_{z}$\end{small} is
\begin{equation}\small
\begin{aligned}
&\nabla R\left(\mathbf{\Sigma}_{z}, \mathbf{H}_{e,t}^i\right)= \rho_e\mathbf{G}\mathbf{H}_{e,t}^i\bigg[\mathbf{I}+\rho_e\mathbf{H}_{e,t}^{iH} \mathbf{G}^H(\mathbf{\Sigma}_s^n+ \mathbf{\Sigma}_z)
\mathbf{G}\mathbf{H}_{e,t}^i \bigg]^{-1}\\&\quad\mathbf{H}_{e,t}^{iH} \mathbf{G}^Hp(\mathbf{H}_{e,t}^i).
\end{aligned}
\end{equation}
and an upper bound of \begin{small}$\Vert\nabla R\left(\mathbf{\Sigma}_{s}, \mathbf{H}_{e,t}^i\right)\Vert$\end{small} is similarly
\begin{equation}\small
\begin{aligned}
&\Vert\nabla R\left(\mathbf{\Sigma}_{s}, \mathbf{H}_{e,t}^i\right)\Vert\\
&=\Vert \rho_e\mathbf{G}\mathbf{H}_{e,t}^i\left(\mathbf{I}+\rho_e\mathbf{H}_{e,t}^{iH} \mathbf{G}^H(\mathbf{\Sigma}_s^n+ \mathbf{\Sigma}_z) \mathbf{G}\mathbf{H}_{e,t}^i \right)^{-1}\\
&\quad\mathbf{H}_{e,t}^{iH} \mathbf{G}^Hp(\mathbf{H}_{e,t}^i)\Vert\\
&\leq\rho_e\Vert\mathbf{G}^H\mathbf{G}\Vert\Vert\mathbf{H}_{e,t}^i\mathbf{H}_{e,t}^{iH}\Vert\\
&\quad\Vert\left(\mathbf{I}+\rho_e\mathbf{H}_{e,t}^{iH} \mathbf{G}^H(\mathbf{\Sigma}_s+ \mathbf{\Sigma}_z) \mathbf{G}\mathbf{H}_{e,t}^i \right)^{-1}\Vert p(\mathbf{H}_{e,t}^i)\\
&=\rho_e G N_e \Vert\mathbf{H}_{e,t}^i\mathbf{H}_{e,t}^{iH}\Vert p(\mathbf{H}_{e,t}^i)
\end{aligned}
\end{equation}
Following the same analysis as steps (\ref{step1})-(\ref{step5}), we know that the functions \begin{small}$R\left(\mathbf{\Sigma}_{z}, \mathbf{H}_{e,t}^i\right)$\end{small} and \begin{small}$\rho_e G N_e \Vert\mathbf{H}_{e,t}^i\mathbf{H}_{e,t}^{iH}\Vert p(\mathbf{H}_{e,t}^i)$\end{small} are both integrable and \begin{small}$\nabla R\left(\mathbf{\Sigma}_{z}, \mathbf{H}_{e,t}^i\right)$\end{small} exists. Then, the interchange of the gradient and the integral is allowed in \begin{small}$R\left(\mathbf{\Sigma}_{z}, \mathbf{H}_{e,t}^i\right)$\end{small} \cite[Theorem A.1]{Measure Theory}, i.e., \begin{small}$\Expectation[\nabla f_{e,t}^i(\mathbf{\Sigma}_{s}^n+\mathbf{\Sigma}_{z})]=\nabla\Expectation[ f_{e,t}^i(\mathbf{\Sigma}_{s}^n+\mathbf{\Sigma}_{z})]$\end{small}. Similarly, \begin{small}$\Expectation\left[\nabla f_{e,t}^i\left(\mathbf{\Sigma}_{z}\right)\right]=\nabla\Expectation\left[ f_{e,t}^i\left(\mathbf{\Sigma}_{z}\right)\right]$\end{small}. Therefore,
\begin{equation}\small
\begin{aligned}
&\Expectation\left[\nabla_{\mathbf{\Sigma}_s^{*}}C_3(\mathbf{\Sigma}_s^{t}, \mathbf{\Sigma}_z^{t}, \mathbf{\Theta}^t,\mathbf{H}_{e,t}^i)-\nabla_{\mathbf{\Sigma}_s^{*}}C_{3}\left(\mathbf{\Sigma}_s^{t},\mathbf{\Sigma}_z^{t},\mathbf{\Theta}^t\right)\right]=0
\end{aligned}
\end{equation}
Similarly,
\begin{equation}\small
\Expectation\left[\nabla_{\mathbf{\Sigma}_z^{*}}C_3(\mathbf{\Sigma}_s^{t}, \mathbf{\Sigma}_z^{t}, \mathbf{\Theta}^t,\mathbf{H}_{e,t}^i)-\nabla_{\mathbf{\Sigma}_z^{*}}C_{3}\left(\mathbf{\Sigma}_s^{t},\mathbf{\Sigma}_z^{t},\mathbf{\Theta}^t\right)\right]=0
\end{equation}.

Following the same analysis as (\ref{Estimation1}) for any \begin{small}$\mathbf{\Sigma}_{s}^n$\end{small} and \begin{small}$\mathbf{\Theta}^n$\end{small}, we obtain
\begin{equation}\small
\label{up1}
\begin{aligned}
&\Vert\nabla f_{r,t}\left(\mathbf{\Sigma}_{s}^n+\mathbf{\Sigma}_{z}^1\right)-\nabla f_{r,t}\left(\mathbf{\Sigma}_{s}^n+\mathbf{\Sigma}_{z}^2\right)\Vert\\
&\leq\left(\mathop{\max}_{\mathbf{\Sigma}_{z}} \Vert \nabla f_{r,t}\left(\mathbf{\Sigma}_{s}^n+\mathbf{\Sigma}_{z}\right)\Vert\right)^2\Vert\mathbf{\Sigma}_{z}^1-\mathbf{\Sigma}_{z}^2\Vert\\
&\leq\left(\rho_r G {{N_I}} \mathbf{h}_r^H\mathbf{h}_r\right)^2\Vert\mathbf{\Sigma}_{z}^1-\mathbf{\Sigma}_{z}^2\Vert
\end{aligned}
\end{equation}
where the last equality is due to 
\begin{equation}\small
\begin{aligned}
&\bigg\Vert\frac{\rho_r \mathbf{G} \mathbf{\Theta}^{nH} \mathbf{h}_r\mathbf{h}_r^H \mathbf{\Theta}^n \mathbf{G}^H}{1+\rho_r\mathbf{h}_r^H \mathbf{\Theta}^n \mathbf{G}^H\left(\mathbf{\Sigma}_{s}^n+\mathbf{\Sigma}_{z}\right) \mathbf{G} \mathbf{\Theta}^{nH} \mathbf{h}_r}\bigg\Vert\\
&\leq \Vert\rho_r \mathbf{G} \mathbf{\Theta}^{nH} \mathbf{h}_r\mathbf{h}_r^H \mathbf{\Theta}^n \mathbf{G}^H\Vert\\
&\leq \rho_r G N_I \mathbf{h}_r^H\mathbf{h}_r.
\end{aligned}
\end{equation}
and
\begin{eqnarray}\small
\label{up}
&&\Vert\nabla\Expectation\left[ f_{e,t}^i\left(\mathbf{\Sigma}_{s}^n+\mathbf{\Sigma}_{z}^1\right)\right]-\nabla\Expectation\left[ f_{e,t}^i\left(\mathbf{\Sigma}_{s}^n+\mathbf{\Sigma}_{z}^2\right)\right]\Vert \nonumber\\
&&=\Vert\Expectation\left[\nabla f_{e,t}^i\left(\mathbf{\Sigma}_{s}^n+\mathbf{\Sigma}_{z}^1\right)\right]-\Expectation\left[\nabla f_{e,t}^i\left(\mathbf{\Sigma}_{s}^n+\mathbf{\Sigma}_{z}^2\right)\right]\Vert \nonumber\\
&&\leq \left(\mathop{\max}_{\mathbf{\Sigma}_{z}} \Expectation\left[\Vert\nabla f_e^i\left(\mathbf{\Sigma}_{s}^n+\mathbf{\Sigma}_{z}\right)\Vert\right]\right)^2\Vert\mathbf{\Sigma}_{z}^1-\mathbf{\Sigma}_{z}^2\Vert\\
&&\leq 4\rho_e^2 G^2 N_I^4 N_e^6\Vert\mathbf{\Sigma}_{z}^1-\mathbf{\Sigma}_{z}^2\Vert.\nonumber
\end{eqnarray}
Similarly, \begin{small}$\Vert\nabla f_{r,t}\left(\mathbf{\Sigma}_{z}^1\right)-\nabla f_{r,t}\left(\mathbf{\Sigma}_{z}^2\right)\Vert\leq\left(\rho_r G N_I \mathbf{h}_r^H\mathbf{h}_r\right)^2\Vert\mathbf{\Sigma}_{z}^1-\mathbf{\Sigma}_{z}^2\Vert$\end{small} and \begin{small}$\Vert\Expectation\left[\nabla f_{e,t}^i\left(\mathbf{\Sigma}_{z}^1\right)\right]-\Expectation\left[\nabla f_{e,t}^i\left(\mathbf{\Sigma}_{z}^2\right)\right]\Vert\leq4\rho_e^2 G^2 N_I^4 N_e^6\Vert\mathbf{\Sigma}_{z}^1-\mathbf{\Sigma}_{z}^2\Vert$\end{small}. Then, the gradient of \begin{small}$C_{3}\left(\mathbf{\Sigma}_s,\mathbf{\Sigma}_z,\mathbf{\Theta}\right)$\end{small} is Lipschitz continuous with respect to \begin{small}$\mathbf{\Sigma}_z$\end{small}. The above property is also true for \begin{small}$\mathbf{\Sigma}_s$\end{small}. As the upper bounds of (\ref{up1}) and (\ref{up}) are independent of $\mathbf{\Sigma}_{s}^n$ and \begin{small}$\mathbf{\Theta}^n$\end{small}, we can find a Lipschitz constant independent of \begin{small}$\mathbf{\Sigma}_{s}$\end{small}, \begin{small}$\mathbf{\Sigma}_{z}$\end{small} and \begin{small}$\mathbf{\Theta}$\end{small}. So, for any $t$, \begin{small}$\Vert\nabla_{\mathbf{\Sigma}_z^{*}}C_{3}\left(\mathbf{\Sigma}_s^{t},\mathbf{\Sigma}_z^{t},\mathbf{\Theta}^t\right)\Vert$\end{small} and \begin{small}$\Vert\nabla_{\mathbf{\Sigma}_z^{*}}C_{3}\left(\mathbf{\Sigma}_s^{t},\mathbf{\Sigma}_z^{t},\mathbf{\Theta}^t\right)\Vert$\end{small} are bounded. Therefore, there exists a constant \begin{small}$\sigma$\end{small} such that
\begin{equation}\small
\begin{aligned}
&\Expectation\left[\Vert\nabla_{\mathbf{\Sigma}_s^{*}}C_3(\mathbf{\Sigma}_s^{t}, \mathbf{\Sigma}_z^{t}, \mathbf{\Theta}^t,\mathbf{H}_{e,t}^i)-\nabla_{\mathbf{\Sigma}_s^{*}}C_{3}\left(\mathbf{\Sigma}_s^{t},\mathbf{\Sigma}_z^{t},\mathbf{\Theta}^t\right)\Vert^2\right]\\
&\leq 2\Expectation\bigg[\Vert\nabla_{\mathbf{\Sigma}_s^{*}}C_3(\mathbf{\Sigma}_s^{t}, \mathbf{\Sigma}_z^{t}, \mathbf{\Theta}^t,\mathbf{H}_{e,t}^i)\Vert^2\\
&\qquad\qquad\qquad\qquad+\Vert\nabla_{\mathbf{\Sigma}_s^{*}}C_{3}\left(\mathbf{\Sigma}_s^{t},\mathbf{\Sigma}_z^{t},\mathbf{\Theta}^t\right)\Vert^2\bigg]<\sigma^2\\
&\Expectation\left[\Vert\nabla_{\mathbf{\Sigma}_z^{*}}C_3(\mathbf{\Sigma}_s^{t}, \mathbf{\Sigma}_z^{t}, \mathbf{\Theta}^t,\mathbf{H}_{e,t}^i)-\nabla_{\mathbf{\Sigma}_z^{*}}C_{3}\left(\mathbf{\Sigma}_s^{t},\mathbf{\Sigma}_z^{t},\mathbf{\Theta}^t\right)\Vert^2\right]\\
&\leq2\Expectation\bigg[\Vert\nabla_{\mathbf{\Sigma}_z^{*}}C_3(\mathbf{\Sigma}_s^{t}, \mathbf{\Sigma}_z^{t}, \mathbf{\Theta}^t,\mathbf{H}_{e,t}^i)\Vert^2\\
&\qquad\qquad\qquad\qquad+\Vert\nabla_{\mathbf{\Sigma}_z^{*}}C_{3}\left(\mathbf{\Sigma}_s^{t},\mathbf{\Sigma}_z^{t},\mathbf{\Theta}^t\right)\Vert^2\bigg]<\sigma^2.
\end{aligned}
\end{equation}

\section{The Projected Gradient of $\left(\mathbf\Sigma_{s}^\infty, \mathbf\Sigma_{z}^\infty\right)$ in Algorithm 3}
\label{Convergence of the Hybrid SPG-CP Algorithm}
In order to discuss the projected gradient of \begin{small}$\left(\mathbf\Sigma_{s}^\infty, \mathbf\Sigma_{z}^\infty\right)$\end{small} in Algorithm 3, we regard each \begin{small}$\mathbf{H}_{e,t}^i$\end{small} as a random matrix with the same distribution as \begin{small}$\mathbf{H}_e$\end{small} in (\ref{rate3}). Some important stochastic projected gradients and gaps over \begin{small}$\mathcal{H}_e^t$\end{small} associated with \begin{small}$\overline{C}_3(\mathbf{\Sigma}_s^{t}, \mathbf{\Sigma}_z^{t}, \mathbf{\Theta}^t, (\mathbf{H}_{e,t}^i)_{i=1}^{\lceil t^{\alpha}\rceil})$\end{small} are given by
\begin{subequations}\small
\label{gradient_definition}
	\begin{align}
	(\widetilde{\mathbf{W}}_s^t, \widetilde{\mathbf{W}}_z^t)&\triangleq P_{\mathcal{X}_2}[(\mathbf\Sigma_{s}^t,\mathbf\Sigma_{z}^t),(-\nabla_{\mathbf{\Sigma}_{s}^*}\overline{C}_3(\mathbf{\Sigma}_s^{t}, \mathbf{\Sigma}_z^{t}, \mathbf{\Theta}^t, (\mathbf{H}_{e,t}^i)_{i=1}^{\lceil t^{\alpha}\rceil}),\nonumber\\
	&\qquad\qquad-\nabla_{\mathbf{\Sigma}_{z}^*}\overline{C}_3(\mathbf{\Sigma}_s^{t}, \mathbf{\Sigma}_z^{t}, \mathbf{\Theta}^t, (\mathbf{H}_{e,t}^i)_{i=1}^{\lceil t^{\alpha}\rceil})), r]\\
	(\mathbf{W}_s^t, \mathbf{W}_z^t)&\triangleq P_{\mathcal{X}_2}[(\mathbf\Sigma_{s}^t,\mathbf\Sigma_{z}^t),(-\nabla_{\mathbf{\Sigma}_{s}^*}C_3(\mathbf{\Sigma}_s^{t}, \mathbf{\Sigma}_z^{t}, \mathbf{\Theta}^t),\nonumber\\
	&\qquad\qquad-\nabla_{\mathbf{\Sigma}_{z}^*}C_3(\mathbf{\Sigma}_s^{t}, \mathbf{\Sigma}_z^{t}, \mathbf{\Theta}^t)), r]\\
	\mathbf{D}_s^t &\triangleq-\nabla_{\mathbf{\Sigma}_{s}^*}\overline{C}_3(\mathbf{\Sigma}_s^{t}, \mathbf{\Sigma}_z^{t}, \mathbf{\Theta}^t, (\mathbf{H}_{e,t}^i)_{i=1}^{\lceil t^{\alpha}\rceil})\nonumber\\
	&\qquad\qquad+\nabla_{\mathbf{\Sigma}_{s}^*}C_3(\mathbf{\Sigma}_s^{t}, \mathbf{\Sigma}_z^{t}, \mathbf{\Theta}^t)\\
	\mathbf{D}_z^t &\triangleq-\nabla_{\mathbf{\Sigma}_{z}^*}\overline{C}_3(\mathbf{\Sigma}_s^{t}, \mathbf{\Sigma}_z^{t}, \mathbf{\Theta}^t, (\mathbf{H}_{e,t}^i)_{i=1}^{\lceil t^{\alpha}\rceil})\nonumber\\
	&\qquad\qquad+\nabla_{\mathbf{\Sigma}_{z}^*}C_3(\mathbf{\Sigma}_s^{t}, \mathbf{\Sigma}_z^{t}, \mathbf{\Theta}^t).
	\end{align}
\end{subequations}

For any given \begin{small}$t>0$\end{small}, Proposition \ref{stateL1} yields:
\begin{equation}\small
\label{mainequation}
\begin{aligned}
&-C_3(\mathbf{\Sigma}_s^{t+1}, \mathbf{\Sigma}_z^{t+1}, \mathbf{\Theta}^{t+1})\leq -C_3(\mathbf{\Sigma}_s^{t+1}, \mathbf{\Sigma}_z^{t+1}, \mathbf{\Theta}^t)\\
&\leq -C_3(\mathbf{\Sigma}_s^{t}, \mathbf{\Sigma}_z^{t}, \mathbf{\Theta}^t)-2\langle \nabla_{\mathbf{\Sigma}_s^{*}} C_3(\mathbf{\Sigma}_s^{t}, \mathbf{\Sigma}_z^{t}, \mathbf{\Theta}^t),\mathbf{\Sigma}_s^{t+1}-\mathbf{\Sigma}_s^{t}\rangle\\
&\quad+L\Vert\mathbf{\Sigma}_s^{t+1}-\mathbf{\Sigma}_s^{t}\Vert^2-2\langle \nabla_{\mathbf{\Sigma}_z^{*}} C_3(\mathbf{\Sigma}_s^{t}, \mathbf{\Sigma}_z^{t}, \mathbf{\Theta}^t), \mathbf{\Sigma}_z^{t+1}-\mathbf{\Sigma}_z^{t}\rangle\\
&\quad+L\Vert\mathbf{\Sigma}_z^{t+1}-\mathbf{\Sigma}_z^{t}\Vert^2\\
&= -C_3(\mathbf{\Sigma}_s^{t}, \mathbf{\Sigma}_z^{t}, \mathbf{\Theta}^t)-2r\langle -\nabla_{\mathbf{\Sigma}_s^{*}} \overline{C}_3(\mathbf{\Sigma}_s^{t}, \mathbf{\Sigma}_z^{t}, \mathbf{\Theta}^t,(\mathbf{H}_{e,t}^i)_{i=1}^{\lceil t^{\alpha}\rceil}), \\
&\quad\widetilde{\mathbf{W}}_s^t\rangle+ 2r\langle \mathbf{D}_s^t, \widetilde{\mathbf{W}}_s^t\rangle+Lr^2\Vert\widetilde{\mathbf{W}}_s^t\Vert^2-2r\langle -\nabla_{\mathbf{\Sigma}_z^{*}} \overline{C}_3(\mathbf{\Sigma}_s^{t}, \mathbf{\Sigma}_z^{t}, \\
&\quad\mathbf{\Theta}^t, (\mathbf{H}_{e,t}^i)_{i=1}^{\lceil t^{\alpha}\rceil}), \widetilde{\mathbf{W}}_z^t\rangle+ 2r\langle \mathbf{D}_z^t, \widetilde{\mathbf{W}}_z^t\rangle+Lr^2\Vert\widetilde{\mathbf{W}}_z^t\Vert^2\\
&\overset{(e)}{\leq} -C_3(\mathbf{\Sigma}_s^{t}, \mathbf{\Sigma}_z^{t},\mathbf{\Theta}^t)-(2r-Lr^2)\left(\Vert\widetilde{\mathbf{W}}_s^t\Vert^2+\Vert\widetilde{\mathbf{W}}_z^t\Vert^2\right)\\
&\quad+ 2r\left(\langle \mathbf{D}_s^t, \widetilde{\mathbf{W}}_s^t-\mathbf{W}_s^t\rangle+\langle \mathbf{D}_z^t, \widetilde{\mathbf{W}}_z^t-\mathbf{W}_z^t\rangle\right)\\
&\quad+ 2r\left(\langle \mathbf{D}_s^t, \mathbf{W}_s^t\rangle+\langle \mathbf{D}_z^t, \mathbf{W}_z^t\rangle\right)\\
&\overset{(f)}{\leq} -C_3(\mathbf{\Sigma}_s^{t}, \mathbf{\Sigma}_z^{t}, \mathbf{\Theta}^t)-(2r-Lr^2)\left(\Vert\widetilde{\mathbf{W}}_s^t\Vert^2+\Vert\widetilde{\mathbf{W}}_z^t\Vert^2\right)\\
&\quad+ 2r\left(\langle \mathbf{D}_s^t, \mathbf{W}_s^t\rangle+\langle \mathbf{D}_z^t, \mathbf{W}_z^t\rangle\right)+ 2r\left(\Vert \mathbf{D}_s^t\Vert^2+\Vert \mathbf{D}_z^t\Vert^2\right)
\end{aligned}
\end{equation}
where (e) and (f) follow from Lemmas \ref{Projection1} and \ref{Projection3}, respectively.
Notice that \begin{small}$\mathbf{\Sigma}_s^t$\end{small}, \begin{small}$\mathbf{\Sigma}_z^t$\end{small}, and \begin{small}$\mathbf{\Theta}^t$\end{small} are the functions of the history \begin{small}$\mathcal{H}_e^{[t-1]}=\left\{\mathcal{H}_e^{i}: 1\leq i\leq t-1\right\}$\end{small} of the generated random process and hence are random. \begin{small}$\Expectation_{\mathcal{H}_e^{t}}\left(\langle \mathbf{D}_s^t, \mathbf{W}_s^t\rangle+\langle \mathbf{D}_z^t, \mathbf{W}_z^t\rangle\vert\mathcal{H}_e^{[t-1]}\right)=0$\end{small} holds due to (\ref{assumptions}). Thus, \begin{small}$\Expectation_{\mathcal{H}_e^{[t]}}\left(\langle \mathbf{D}_s^t, \mathbf{W}_s^t\rangle+\langle \mathbf{D}_z^t, \mathbf{W}_z^t\rangle\right)=0, \forall t=1,2,...,N$\end{small}. In addition, due to
\begin{equation}\small
\label{DD}
\begin{aligned}
\Expectation_{\mathcal{H}_e^{t}}&\left(\Vert \mathbf{D}_s^t\Vert^2+\Vert \mathbf{D}_z^t\Vert^2\vert\mathcal{H}_e^{[t-1]}\right)\\
\leq&\frac{1}{{\lceil t^{\alpha}\rceil}^2}\sum_{i=1}^{\lceil t^{\alpha}\rceil}\Expectation_{\mathcal{H}_e^{t}}(\Vert \nabla_{\mathbf{\Sigma}_{s}^*}C_3(\mathbf{\Sigma}_s^{t}, \mathbf{\Sigma}_z^{t}, \mathbf{\Theta}^t,\mathbf{H}_{e,t}^i)\\
&-\nabla_{\mathbf{\Sigma}_{s}^*}C_3(\mathbf{\Sigma}_s^{t}, \mathbf{\Sigma}_z^{t}, \mathbf{\Theta}^t)\Vert^2\vert\mathcal{H}_e^{[t-1]})\\
&+\frac{1}{{\lceil t^{\alpha}\rceil}^2}\sum_{i=1}^{\lceil t^{\alpha}\rceil}\Expectation_{\mathcal{H}_e^{t}}(\Vert \nabla_{\mathbf{\Sigma}_{z}^*}C_3(\mathbf{\Sigma}_s^{t}, \mathbf{\Sigma}_z^{t}, \mathbf{\Theta}^t,\mathbf{H}_{e,t}^i)\\
&-\nabla_{\mathbf{\Sigma}_{z}^*}C_3(\mathbf{\Sigma}_s^{t}, \mathbf{\Sigma}_z^{t}, \mathbf{\Theta}^t)\Vert^2\vert\mathcal{H}_e^{[t-1]})\\
\leq&\frac{2\sigma^2}{\lceil t^{\alpha}\rceil}\leq\frac{2\sigma^2}{ t^{\alpha}}
\end{aligned}
\end{equation}
\begin{small}$\Expectation_{\mathcal{H}_e^{[t]}}\left(\Vert \mathbf{D}_s^t\Vert^2+\Vert \mathbf{D}_z^t\Vert^2\right)\leq\frac{2\sigma^2}{t^{\alpha}}$\end{small} holds.
Taking expectations with respect to \begin{small}$\mathcal{H}_e^{[t]}$\end{small} on both sides of the inequality (\ref{mainequation}), we have
\begin{equation}\small
\begin{aligned}
&\Expectation_{\mathcal{H}_e^{[t]}}\left\{ \Vert\widetilde{\mathbf{W}}_s^t\Vert^2+\Vert\widetilde{\mathbf{W}}_z^t\Vert^2\right\}\\
& \leq \frac{C_3(\mathbf{\Sigma}_s^{t+1}, \mathbf{\Sigma}_z^{t+1}, \mathbf{\Theta}^{t+1}) - C_3(\mathbf{\Sigma}_s^t, \mathbf{\Sigma}_z^t, \mathbf{\Theta}^t)}{2r-Lr^2}+\frac{4\sigma^2}{{t^{\alpha}}(2-Lr)}
\end{aligned}
\end{equation}
 Furthermore, considering the inequality
\begin{equation}\small
\begin{aligned}
&\Expectation_{\mathcal{H}_e^{[t]}}\left\{\Vert \mathbf{W}_s^t\Vert^2+\Vert \mathbf{W}_z^t\Vert^2\right\}\\
&=\Expectation_{\mathcal{H}_e^{[t]}}\left\{\Vert\mathbf{W}_s^t-\widetilde{\mathbf{W}}_s^t+\widetilde{\mathbf{W}}_s^t\Vert^2+\Vert\mathbf{W}_z^t-\widetilde{\mathbf{W}}_z^t+\widetilde{\mathbf{W}}_z^t\Vert^2\right\}\\
&\leq 2\Expectation_{\mathcal{H}_e^{[t]}}\left\{\Vert\mathbf{W}_s^t-\widetilde{\mathbf{W}}_s^t\Vert^2+\Vert\mathbf{W}_z^t-\widetilde{\mathbf{W}}_z^t\Vert^2\right\}\\
&\quad+2\Expectation_{\mathcal{H}_e^{[t]}}\left\{\Vert\widetilde{\mathbf{W}}_s^t\Vert^2+\Vert\widetilde{\mathbf{W}}_z^t\Vert^2\right\}\\
&\leq 2\Expectation_{\mathcal{H}_e^{[t]}}\left\{\Vert\mathbf{D}_s^t\Vert^2+\Vert\mathbf{D}_z^t\Vert^2\right\}+2\Expectation_{\mathcal{H}_e^{[t]}}\left\{\Vert\widetilde{\mathbf{W}}_s^t\Vert^2+\Vert\widetilde{\mathbf{W}}_z^t\Vert^2\right\}\\
&\leq \frac{4\sigma^2}{t^{\alpha}}+2\Expectation_{\mathcal{H}_e^{[t]}}\left\{\Vert\widetilde{\mathbf{W}}_s^t\Vert^2+\Vert\widetilde{\mathbf{W}}_z^t\Vert^2\right\}.
\end{aligned}
\end{equation}
Adding the above over \begin{small}$t=1,2,\cdots, N$\end{small}, we have
\begin{equation}\small
\begin{aligned}
&\sum_{t=1}^N\Expectation_{\mathcal{H}_e^{[t]}}\left\{\Vert \mathbf{W}_s^t\Vert^2+\Vert \mathbf{W}_z^t\Vert^2\right\}\\
&\leq \frac{C_3(\mathbf{\Sigma}_s^{N+1}, \mathbf{\Sigma}_z^{N+1}, \mathbf{\Theta}^{N+1}) - C_3(\mathbf{\Sigma}_s^1, \mathbf{\Sigma}_z^1, \mathbf{\Theta}^1)}{r-Lr^2/2}\\
&\quad+\sum_{t=1}^N\left[\frac{8\sigma^2}{t^\alpha(2-Lr)}+\frac{4\sigma^2}{t^\alpha} \right]
\end{aligned}
\end{equation}
As \begin{small}$N\to\infty$\end{small}, since \begin{small}$\alpha>1$\end{small}, we have
\begin{equation}\small
\begin{aligned}
&\sum_{t=1}^\infty\Expectation_{\mathcal{H}_e^{[t]}}\left\{\Vert \mathbf{W}_s^t\Vert^2+\Vert \mathbf{W}_z^t\Vert^2\right\}\\
&\leq \frac{C_3(\widehat{\mathbf{\Sigma}}_s, \widehat{\mathbf{\Sigma}}_z, \widehat{\mathbf{\Theta}}) - C_3(\mathbf{\Sigma}_s^1, \mathbf{\Sigma}_z^1, \mathbf{\Theta}^1)}{r-Lr^2/2}\\
&\quad+\left(\frac{8\sigma^2}{2-Lr}+4\sigma^2\right)\left(1+\frac{1}{\alpha-1}\right)
\end{aligned}
\end{equation}
where \begin{small}$(\widehat{\mathbf{\Sigma}}_s, \widehat{\mathbf{\Sigma}}_z, \widehat{\mathbf{\Theta}})$\end{small} is an optimal solution to \begin{small}$C_3(\mathbf{\Sigma}_s, \mathbf{\Sigma}_z, \mathbf{\Theta})$\end{small}. That implies \begin{small}$\lim_{t \to \infty}\Expectation_{\mathcal{H}_e^{[t]}}\{\Vert \mathbf{W}_s^t\Vert^2+\Vert \mathbf{W}_z^t\Vert^2\}=0$\end{small}.

\section{Proof of Lemma \ref{Int_log}}
\label{Appendix_Int_log}
The eigenvalues\footnote{Here, we think the singular matrix is also invertible and the inverse matrix has some \begin{small}$\infty$\end{small} as eigenvalues.} of \begin{small}$\left(\rho\sigma^2\mathbf{Q}\right)^{-1}$\end{small} are \begin{small}$\{\infty,\infty...,\infty, \tilde{t}' ,\tilde{t}',...,\tilde{t}'\}$\end{small}, where the number of \begin{small}$\tilde{t}'=\frac{1}{\tilde{t}}$\end{small} is \begin{small}$N_1$\end{small}. From \cite[Eqn. (53)]{calculation_expctation}, the p.d.f of \begin{small}$I = \log\left(1+\mathbf{z}^H\rho\mathbf{Q}\mathbf{z}\right)$\end{small} and its expectation \begin{small}$\left \langle I \right\rangle = \Expectation_{\mathbf{z}} \left\{\log\left(1+\mathbf{z}^H\rho\mathbf{Q}\mathbf{z}\right)\right\}$\end{small} are shown as follows:
\begin{equation}\small
\begin{aligned}
\pdf(I)&=e^I \prod_{i=1}^{N_1} \left(-j a_i\right)\int \frac{dk}{2\pi}\frac{e^{jk\left(e^I-1\right)}}{\prod\limits_{i=1}^{N_1}\left(k-ja_i\right)} \\
&= e^I\left(-j \tilde{t}'\right)^{N_1}\int \frac{dk}{2\pi}\frac{e^{jk\left(e^I-1\right)}}{\left(k-j\tilde{t}'\right)^{N_1}}\\
&=\frac{e^I}{2\pi} \left(-j \tilde{t}'\right)^{N_1} * 2\pi j \boldsymbol{\Res}\left(\frac{e^{jk\left(e^I-1\right)}}{\left(k-j\tilde{t}'\right)^{N_1}},\,j\tilde{t}'\right)\\
&
%&=\frac{e^I}{2\pi} \left(-j t'\right)^{N_1} * 2\pi j \frac{\left[j\left(e^I-1\right)\right]^{N_1-1}e^{-t'\left(e^I-1\right)}}{\left(N_1 - 1\right)!}\\
=\frac{\tilde{t}'^{N_1} \left(e^I - 1\right)^{N_1-1} e^{I-\tilde{t}'\left(e^I - 1\right)}}{\left(N_1-1\right)!}\\
\end{aligned}
\end{equation}
\begin{equation}\small
\begin{aligned}
\left \langle I \right\rangle& = \frac{\int_{0}^{\infty}I \tilde{t}'^{N_1} \left(e^I - 1\right)^{N_1-1} e^{I-\tilde{t}'\left(e^I - 1\right)}dI}{\left(N_1-1\right)!}\\
&=\frac{\int_{0}^{\infty}\log\left(1+\frac{x}{\tilde{t}'}\right)x^{N_1-1}e^{-x}dx}{\left(N_1-1\right)!}\\
&=\frac{\int_{0}^{\infty}\log\left(1+\tilde{t}x\right)x^{N_1-1}e^{-x}dx}{\left(N_1-1\right)!}\triangleq F_1\left(\tilde{t},\,N_1\right).
\end{aligned}
\end{equation}
The first- and second-order derivatives \cite[\S 12.212]{Table_of_Integrals_Series_and_Products} of \begin{small}$F_1\left(\tilde{t},\,N_1\right)$\end{small} with respective to \begin{small}$\tilde{t}$\end{small} are given by:
\begin{equation}\small
\begin{aligned}
\frac{\partial F_1(\tilde{t},N_1)}{\partial \tilde{t}} &= \frac{\int_{0}^{\infty}{\frac{1}{1+\tilde{t}x}x^{N_1}e^{-x}}dx}{\left(N_1-1\right)!}\\
\frac{\partial^2 F_1(\tilde{t},N_1)}{\partial \tilde{t}^2} &= \frac{\int_{0}^{\infty}{\frac{-1}{\left(1+\tilde{t}x\right)^2}x^{N_1+1}e^{-x}}dx}{\left(N_1-1\right)!}.
\end{aligned}
\end{equation}
\section{Proof of Proposition \ref{rankone}}
\label{Appendix_rankone}
For any given \begin{small}$\mathbf{\Theta}$\end{small}, the necessary conditions for the optimal \begin{small}$\boldsymbol{\Sigma}_s$\end{small} can be obtained based on the KKT conditions, Let us construct the cost function as follows:
\begin{equation}\small
\begin{aligned} 
\mathcal{L}\left(\mathbf{\Sigma}_s, \lambda, \mathbf{\Psi}_s\right) = -C_{1}(\mathbf{\Sigma}_s, \mathbf{\Theta}) + \lambda\left[\Tr\left(\mathbf{\Sigma}_s\right)-1\right] - \Tr\left(\mathbf{\Psi}_s\mathbf{\Sigma}_s\right)
\end{aligned}
\end{equation}
where \begin{small}$\lambda\geq 0, \mathbf{\Psi}_s \succeq 0$\end{small} are the Lagrange multipliers accounting for the total power constraint and the constraint that \begin{small}$\mathbf{\Sigma}_s$\end{small} is positive semidefinite. Then the KKT conditions enable us to write
\begin{subequations}\small
	\begin{align}
	&-\mathbf{A}+\lambda\mathbf{I} - \mathbf{\Psi}_s = \mathbf{0}, \mathbf{\Psi}_s\mathbf{\Sigma}_s = \mathbf{\Sigma}_s\mathbf{\Psi}_s = 0\\
	&\Tr\left(\boldsymbol{\Sigma}_s\right)\leq 1,\,\,\,\,\lambda\left[\Tr\left(\mathbf{\Sigma}_s\right)-1\right] = 0\\
	&\boldsymbol{\Psi}_s \succeq 0,\,\,\,\, \boldsymbol{\Sigma}_s \succeq 0
	\end{align}
\end{subequations}
with
\begin{equation}\small
\begin{aligned}
&\mathbf{A}=\frac{\rho_r\mathbf{G} \mathbf{\Theta}^H \mathbf{h}_r \mathbf{h}_r^H \mathbf{\Theta} \mathbf{G}^H}{1+\rho_r\mathbf{h}_r^H \mathbf{\Theta} \mathbf{G}^H\mathbf{\Sigma}_s \mathbf{G} \mathbf{\Theta}^H \mathbf{h}_r}\\
&\quad-\Expectation_{\mathbf{H}_e}\left[\rho_e\mathbf{G} \mathbf{H}_e\left(\mathbf{I}+\rho_e\mathbf{H}_e^H \mathbf{G}^H\mathbf{\Sigma}_s \mathbf{G} \mathbf{H}_e\right)^{-1}\mathbf{H}_e^H \mathbf{G}^H\right].
\end{aligned}
\end{equation}
The optimal \begin{small}$\mathbf{\Sigma}_s$\end{small} satisfies \begin{small}$\mathbf{A}\mathbf{\Sigma}_s=\mathbf{\Sigma}_s\mathbf{A} = \lambda \mathbf{\Sigma}_s$\end{small}. Matrix \begin{small}$\mathbf{A}$\end{small} has at most one positive eigenvalue, because the first term of \begin{small}$\mathbf{A}$\end{small} is rank one with one positive eigenvalue and the second term is non-negative definite matrix. Therefore, \begin{small}$\mathbf{\Sigma}_s$\end{small} is rank one.

\end{document}